\documentclass[fleqn,10pt]{wlscirep}

\usepackage[sort&compress,square,comma,numbers]{natbib}

\usepackage[utf8]{inputenc}
\usepackage{bbm, amsfonts, url}

\usepackage{subcaption}

\usepackage[export]{adjustbox}
\usepackage{graphicx}
\usepackage{amsmath,amssymb}
\usepackage{mathtools}
\usepackage{amsfonts}
\usepackage{bm}
\usepackage{hyperref}
\usepackage{kbordermatrix}
\usepackage{amsthm}
\usepackage{sgame, tikz}
\usepackage{color}
\usepackage{multirow,array}
\usepackage{pifont}
\usepackage[outline]{contour} 
\contourlength{1pt} 
\usepackage{thmtools,thm-restate} 
\usepackage{natbib}

\usepackage{bbold}
\usepackage[capitalise]{cleveref}
\crefname{equation}{}{} 

\newif\ifKeepSCII


\newcommand{\cmark}{\ding{51}}%
\newcommand{\xmark}{\ding{55}}%
\newcommand{\focalAgent}{\tau}
\newcommand{\residentAgent}{\sigma}
\newcommand*\circled[1]{\tikz[baseline=(char.base)]{
            \node[shape=circle,draw,inner sep=0.7pt] (char) {#1};}}

\newcommand{\ProofInSM}[1]{Refer to the Supplementary Material \cref{#1} for the proof.}

\theoremstyle{remark}
\newsavebox\mybox
\newcommand{\real}{\mathbb{R}}
\def\E{\ensuremath{\mathbf{E}}}
\colorlet{MyBlue}{red!10!green!20!blue!100}




 \title{$\alpha$-Rank: Multi-Agent Evaluation by Evolution}

\author[1]{Shayegan Omidshafiei$^{*}$}
\author[3]{Christos Papadimitriou$^{*}$}
\author[2]{Georgios Piliouras$^{*}$}
\author[1]{Karl Tuyls$^{*}$}
\author[1]{Mark Rowland}
\author[1]{Jean-Baptiste Lespiau}
\author[1]{Wojciech M. Czarnecki}
\author[1]{Marc Lanctot}
\author[1]{Julien Perolat}
\author[1]{Remi Munos}
\affil[1]{DeepMind, 6 Pancras Square, London, UK}
\affil[2]{Singapore University of Technology and Design, Singapore}
\affil[3]{Columbia University, New York, USA}
\affil[*]{\small{Equal contributors, ordered alphabetically. Corresponding author: Karl Tuyls $<$karltuyls@google.com$>$.}}

\newtheorem{theorem}{Theorem}[subsection]

\newtheorem{lemma}[theorem]{Lemma}
\newtheorem{definition}[theorem]{Definition}
\newtheorem{property}[theorem]{Property}

\crefname{property}{Property}{Properties}
\Crefname{property}{Property}{Properties}
\crefname{definition}{Definition}{Definitions}
\Crefname{definition}{Definition}{Definitions}

\newcommand{\mcc}{MCC}





\begin{abstract}



We introduce \emph{$\alpha$-Rank}, a principled evolutionary dynamics methodology, for the \emph{evaluation} and \emph{ranking} of agents in large-scale multi-agent interactions, grounded in a novel dynamical game-theoretic solution concept called \emph{Markov-Conley chains} (MCCs). 
The approach leverages continuous-time and discrete-time evolutionary dynamical systems applied to empirical games, and scales tractably in the number of agents, in the type of interactions (beyond dyadic), and the type of empirical games (symmetric and asymmetric). 
Current models are fundamentally limited in one or more of these dimensions, and are not guaranteed to converge to the desired game-theoretic solution concept (typically the Nash equilibrium).
$\alpha$-Rank automatically provides a ranking over the set of agents under evaluation and provides insights into their strengths, weaknesses, and long-term dynamics in terms of basins of attraction and sink components. 
This is a direct consequence of the correspondence we establish to the dynamical MCC solution concept when the underlying evolutionary model's ranking-intensity parameter, $\alpha$, is chosen to be large, which exactly forms the basis of $\alpha$-Rank.
In contrast to the Nash equilibrium, which is a static solution concept based solely on fixed points, MCCs are a dynamical solution concept based on the Markov chain formalism, Conley's Fundamental Theorem of Dynamical Systems, and the core ingredients of dynamical systems: fixed points, recurrent sets, periodic orbits, and limit cycles.
Our $\alpha$-Rank method runs in polynomial time with respect to the total number of pure strategy profiles, whereas computing a Nash equilibrium for a general-sum game is known to be intractable.
We introduce mathematical proofs that not only provide an overarching and unifying perspective of existing continuous- and discrete-time evolutionary evaluation models, but also reveal the formal underpinnings of the $\alpha$-Rank methodology. 
We illustrate the method in canonical games and empirically validate it in several domains, including AlphaGo, AlphaZero, MuJoCo Soccer, and Poker.
\end{abstract}

\begin{document}
\flushbottom
\maketitle
%
%
\thispagestyle{empty}


\section{Introduction}\label{sec:intro}

This paper introduces a principled, practical, and descriptive methodology, which we call \emph{$\alpha$-Rank}. 
$\alpha$-Rank enables evaluation and ranking of agents in large-scale multi-agent settings, and
is grounded in a new game-theoretic solution concept, called Markov-Conley chains (MCCs), which captures the dynamics of multi-agent interactions. 
While much progress has been made in learning for games such as Go \cite{DSilverHMGSDSAPL16,silver:17} and Chess \cite{silver2018general}, computational gains are now enabling algorithmic innovations in domains of significantly higher complexity, such as Poker \cite{Morav} and MuJoCo soccer \cite{liu2018emergent} where ranking of agents is much more intricate than in classical simple matrix games. 
With multi-agent learning domains of interest becoming increasingly more complex, we need methods for evaluation and ranking that are both comprehensive and theoretically well-grounded.

Evaluation of agents in a multi-agent context is a hard problem due to several complexity factors: strategy and action spaces of players quickly explode (e.g., multi-robot systems), models need to be able to deal with intransitive behaviors (e.g., cyclical best-responses in Rock-Paper-Scissors, but at a much higher scale), the number of agents can be large in the most interesting applications (e.g., Poker), types of interactions between agents may be complex (e.g., MuJoCo soccer), and payoffs for agents may be asymmetric (e.g., a board-game such as Scotland Yard).

This evaluation problem has been studied in Empirical Game Theory using the concept of empirical games or meta-games, and the convergence of their dynamics to Nash equilibria \cite{Walsh02,Wellman06,Tuyls18,TuylsSym}. 
A meta-game is an abstraction of the underlying game, which considers meta-strategies rather than primitive actions \cite{Walsh02,Tuyls18}. In the Go domain, for example, meta-strategies may correspond to different AlphaGo agents (e.g., each meta-strategy is an agent using a set of specific training hyperparameters, policy representations, and so on). 
The players of the meta-game now have a choice between these different agents (henceforth synonymous with meta-strategies), and payoffs in the meta-game are calculated corresponding to the win/loss ratio of these agents against each other over many rounds of the full game of Go. Meta-games, therefore, enable us to investigate the strengths and weaknesses of these agents using game-theoretic evaluation techniques.

\begin{figure}[t!]
    \centering
    \includegraphics[width=\linewidth,page=6]{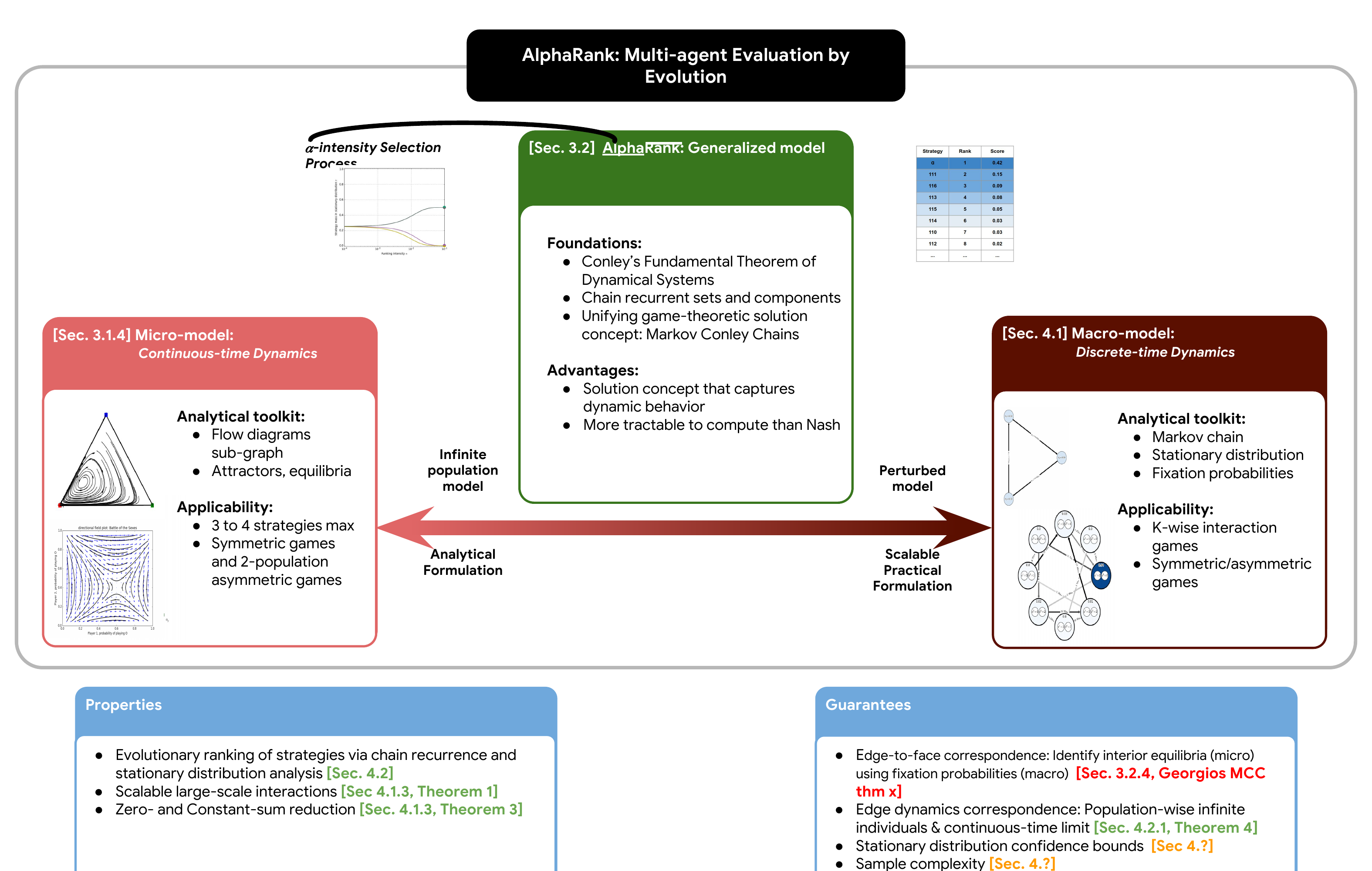}
    \caption{Paper at a glance. Numerical ordering of the concept boxes corresponds to the paper flow, with sections and/or theorems indicated where applicable. The methods and ideas used herein may be classified broadly as either game-theoretic \emph{solution concepts} (namely, static or dynamic) and \emph{evolutionary dynamics} concepts (namely, continuous- or discrete-time). 
    The insights gained by analyzing existing concepts and developing new theoretical results carves a pathway to the novel combination of our general multi-agent evaluation method, $\alpha$-Rank, and our game-theoretic solution concept, Markov-Conley Chains.
    }
    \label{fig:overview}
\end{figure}

Existing meta-game analysis techniques, however, are still limited in a number of ways: 
either a low number of players or a low number of agents (i.e., meta-strategies) may be analyzed \cite{Walsh02,TuylsP07,Tuyls18,TuylsSym}.
Specifically, on the one hand continuous-time meta-game evaluation models, using replicator dynamics from Evolutionary Game Theory \cite{Zeeman80,Zeeman81,Weibull97,Hofbauer96,Gintis09}, are deployed to capture the micro-dynamics of interacting agents. 
These approaches study and visualize basins of attraction and equilibria of interacting agents, but are limited as they can only be feasibly applied to games involving few agents, exploding in complexity in the case of large and asymmetric games.
On the other hand, existing discrete-time meta-game evaluation models (e.g., \cite{traulsen2005coevolutionary,Traulsen06a,Santos11,Segbroek12,veller2016finite}) capture the macro-dynamics of interacting agents, but involve a large number of evolutionary parameters and are not yet grounded in a game-theoretic solution concept. 

To further compound these issues, using the Nash equilibrium as a solution concept for meta-game evaluation in these dynamical models is in many ways problematic: 
first, computing a Nash equilibrium is computationally difficult \cite{VonStengel20021723,Daskalakis06thecomplexity}; 
second, there are intractable equilibrium selection issues even if Nash equilibria can be computed \cite{Harsan88,Avis10,goldberg2013complexity}; 
finally, there is an inherent incompatibility in the sense that it is not guaranteed that dynamical systems will converge to a Nash equilibrium \cite{Papadimitriou:2016:NEC:2840728.2840757}, or, in fact, to any fixed point.
However, instead of taking this as a disappointing flaw of dynamical systems models, we see it as an opportunity to look for a novel solution concept that does not have the same limitations as Nash in relation to these dynamical systems. 
Specifically, exactly as J.~Nash used one of the most advanced topological results of his time, i.e., Kakutani's fixed point theorem \cite{Kakutani41}, as the basis for the Nash solution concept, in the present work, we employ Conley's Fundamental Theorem of Dynamical Systems \cite{conley1978isolated} and propose the solution concept of Markov-Conley chains (MCCs).
Intuitively, Nash is a static solution concept solely based on fixed points.
MCCs, by contrast, are a dynamic solution concept based not only on fixed points, but also on recurrent sets, periodic orbits, and limit cycles, which are fundamental ingredients of dynamical systems.
The key advantages are that MCCs comprehensively capture the long-term behaviors of our (inherently dynamical) evolutionary systems, and our associated $\alpha$-Rank method runs in polynomial time with respect to the total number of pure strategy profiles (whereas computing a Nash equilibrium for a general-sum game is PPAD-complete \cite{Daskalakis06thecomplexity}).



\paragraph{Main contributions: $\alpha$-Rank and MCCs}
While MCCs do not immediately address the equilibrium selection problem, we show that by introducing a perturbed variant that corresponds to a generalized multi-population discrete-time dynamical model, the underlying Markov chain containing them becomes irreducible and yields a unique stationary distribution.
The ordering of the strategies of agents in this distribution gives rise to our $\alpha$-Rank methodology.
$\alpha$-Rank provides a summary of the asymptotic evolutionary rankings of agents in the sense of the time spent by interacting populations playing them, yielding insights into their evolutionary strengths.
It both automatically produces a ranking over agents favored by the evolutionary dynamics and filters out transient agents (i.e., agents that go extinct in the long-term evolutionary interactions).

\paragraph{Paper Overview}
Due to the interconnected nature of the concepts discussed herein, we provide in \cref{fig:overview} an overview of the paper that highlights the relationships between them.
Specifically, the paper is structured as follows:
we first provide a review of preliminary game-theoretic concepts, including the Nash equilibrium (box \circled{1} in \cref{fig:overview}), which is a long-standing yet static solution concept. We then overview the replicator dynamics micro-model (\circled{2}), which provides low-level insights into agent interactions but is limited in the sense that it can only feasibly be used for evaluating three to four agents.
We then introduce a generalized evolutionary macro-model (\circled{3}) that extends previous single-population discrete-time models (\circled{4}) and (as later shown) plays an integral role in our $\alpha$-Rank method.
We then narrow our focus on a particular evolutionary macro-model (\circled{3}) that generalizes single-population discrete-time models (\circled{4}) and (as later shown) plays an integral role in our $\alpha$-Rank method.
Next, we highlight a fundamental incompatibility of the dynamical systems and the Nash solution concept (\circled{5}), establishing fundamental reasons that prevent dynamics from converging to Nash.
This limitation motivates us to investigate a novel solution concept, using Conley's Fundamental Theorem of Dynamical Systems as a foundation (\circled{6}).

Conley's Theorem leads us to the topological concept of \emph{chain components}, which do capture the irreducible long-term behaviors of a \emph{continuous} dynamical system, but are unfortunately difficult to analyze due to the lack of an exact characterization of their geometry and the behavior of the dynamics inside them.
We, therefore, introduce a discrete approximation of these limiting dynamics that is more feasible to analyze: our so-called Markov-Conley chains solution concept (\circled{7}).
While we show that Markov-Conley chains share a close theoretical relationship with both discrete-time and continuous-time dynamical models (\circled{8}), they unfortunately suffer from an equilibrium selection problem and thus cannot directly be used for computing multi-agent rankings.
To address this, we introduced a perturbed version of Markov-Conley chains that resolves the equilibrium selection issues and yields our $\alpha$-Rank evaluation method (\circled{9}). 
 $\alpha$-Rank computes both a ranking and assigns scores to agents using this perturbed model.
We show that this perturbed model corresponds directly to the generalized macro-model under a particular setting of the latter's so-called \emph{ranking-intensity parameter} $\alpha$.
$\alpha$-Rank not only captures the dynamic behaviors of interacting agents, but is also more tractable to compute than Nash for general games.
We validate our methodology empirically by providing ranking analysis on datasets involving interactions of state-of-the-art agents including AlphaGo \cite{DSilverHMGSDSAPL16}, AlphaZero \cite{silver2018general}, MuJoCo Soccer \cite{liu2018emergent}, and Poker \cite{Lanctot17}, and also provide scalability properties and theoretical guarantees for the overall ranking methodology.

\section{Preliminaries and Methods}\label{sec:prelim_and_methods}
In this section, we concisely outline the game-theoretic concepts and methods necessary to understand the remainder of the paper. 
For a detailed discussion of the concepts we refer the reader to \cite{Weibull97, Hofbauer98, Cressman03,Tuyls18}.
We also introduce a novel game-theoretic concept, Markov-Conley chains, which we use to theoretically ground our results in.


\subsection{Game Theoretic Concepts}
\subsubsection{Normal Form Games}\label{sec:nfg}
A $K$-wise interaction Normal Form Game (NFG) $G$ is defined as $(K,\prod_{k=1}^K S^k, \prod_{k=1}^K M^k)$, where each player $k \in \{1,\ldots,K\}$ chooses a strategy $s^k$ from its strategy set $S^k$ and receives a payoff $M^k: \prod_{i=1}^K S^i \to \real$. 
We henceforth denote the joint strategy space and payoffs, respectively, as  $\prod_k S^k$ and $\prod_k M^k$. 
We denote the strategy profile of all players by $s=(s^1,\ldots,s^K) \in \prod_k S^k$, the strategy profile of all players except $k$ by $s^{-k}$, and the payoff profile by $(M^1(s^1,s^{-1}),\ldots, M^K(s^K,s^{-K}))$.
An NFG is symmetric if the following two conditions hold: 
first, all players have the same strategy sets (i.e., $ \forall k,l  \, S^k=S^l$); 
second, if a permutation is applied to the strategy profile, the payoff profile is permuted accordingly.
The game is asymmetric if one or both of these conditions do not hold.
Note that in a $2$-player ($K=2$) NFG the payoffs for both players ($M$ above) are typically represented by a bi-matrix $(A,B)$, which gives the payoff for the row player in $A$, and the payoff for the column player in $B$. If $S^1=S^2$ and $A=B^T$, then this $2$-player game is symmetric.

Naturally the definitions of strategy and payoff can be extended in the usual multilinear 
fashion to allow for randomized (mixed) strategies. In that case,
we usually overload notation in the following manner: if ${{x}}^k$ is a mixed strategy for each player $k$ and $x^{-k}$ the mixed profile excluding that player, then we denote by $M^k(x^k,x^{-k})$ the expected payoff of player $k$, $\E_{s^k\sim x^k, s^{-k}\sim x^{-k} }[M^k({s^k,s^{-k}})]$. Given these preliminaries, we are now ready to define the Nash equilibrium concept:
\begin{definition}[Nash equilibrium]
A mixed strategy profile $x=(x^1,\ldots,x^K)$ is a Nash equilibrium if for all players $k$: $\max_{x'^k}M^k(x'^k,x^{-k})=M^k(x^k,x^{-k})$.
\end{definition}
Intuitively, a strategy profile $x$ is a Nash equilibrium of the NFG if no player has an incentive to unilaterally deviate from its current strategy.


\subsubsection{Meta-games}\label{sec:metagames}
A meta-game (or an empirical game) is a simplified model of an underlying multi-agent system (e.g., an auction, a real-time strategy game, or a robot football match), which considers meta-strategies or `styles of play' of agents, rather than the full set of primitive strategies available in the underlying game \cite{Walsh02,Wellman06,Tuyls18}.
In this paper, the meta-strategies considered are learning agents (e.g., different variants of AlphaGo agents, as exemplified in \cref{sec:intro}). Thus, we henceforth refer to meta-games and meta-strategies, respectively, as `games' and `agents' when the context is clear.
For example, in AlphaGo, styles of play may be characterized by a set of agents $\{AG(r), AG(v), AG(p)\}$, where \emph{AG} stands for the algorithm and indexes $r$, $v$, and $p$ stand for \emph{rollouts}, \emph{value networks}, and \emph{policy networks}, respectively, that lead to different play styles. 
The corresponding meta-payoffs quantify the outcomes when players play profiles over the set of agents (e.g., the empirical win rates of the agents when played against one another). These payoffs can be calculated from available data of the agents' interactions in the real multi-agent systems (e.g., wins/losses in the game of Go), or they can be computed from simulations. 
The question of how many such interactions are necessary to have a good approximation of the true underlying meta-game is discussed in \cite{Tuyls18}. 
A meta-game itself is an NFG and can, thus, leverage the game-theoretic toolkit to evaluate agent interactions at a high level of abstraction.

\subsubsection{Micro-model: Replicator Dynamics}\label{sec:rd}
\emph{Dynamical systems} is a powerful mathematical framework for specifying the time dependence of the players' behavior (see the Supplementary Material for a brief introduction).

For instance, in a two-player asymmetric meta-game represented as an NFG $(2, S^1\times S^2, M=(A,B))$, the evolution of players' strategy profiles under the replicator dynamics \cite{Taylor78,Schuster1983533} is given by,
\begin{equation}
    \dot{x}_{i} = x_{i}((Ay)_{i} - x^TAy) \qquad     \dot{y}_{j} = y_{j}((x^TB)_{j} - x^TBy) \qquad \forall (i, j)\in S^1 \times S^2,
\end{equation}
where $x_i$ and $y_{j}$ are, respectively, the proportions of strategies $i \in S^1$ and $j \in S^2$ in two infinitely-sized populations, each corresponding to a player.
This system of coupled differential equations models the temporal dynamics of the populations' strategy profiles when they interact, and can be extended readily to the general $K$-wise interaction case (see Supplementary Material \cref{sec:rd_multipop} for more details).

The replicator dynamics provide useful insights into the micro-dynamical characteristics of games, revealing strategy flows, basins of attraction, and equilibria \cite{BloembergenTHK15} when visualized on a trajectory plot over the strategy simplex (e.g, \cref{fig:rd_examples}).
The accessibility of these insights, however, becomes limited for games involving large strategy spaces and many-player interactions.
For instance, trajectory plots may be visualized only for subsets of three or four strategies in a game, and are complex to analyze for multi-population games due to the inherently-coupled nature of the trajectories.
While methods for scalable empirical game-theoretic analysis of games have been recently introduced, they are still limited to two-population games \cite{TuylsSym,Tuyls18}.

\subsubsection{Macro-model: Discrete-time Dynamics}\label{sec:multipop_model}

\begin{figure}[t!]
    \centering
    \begin{subtable}[b]{\textwidth}
        \begin{tabular}{p{0.26\textwidth}p{0.67\textwidth}}
        \toprule
        Concept & Meaning \\
        \midrule
        $K$-wise meta-game & An NFG with $K$ player slots. \\
        Strategy & The agents under evaluation (e.g., variants of AlphaGo agents) in the meta-game.  \\
        Individual & A population member, playing a strategy and assigned to a slot in the meta-game. \\
        Population & A finite set of individuals.  \\
        Player & An individual that participates in  the meta-game under consideration.   \\
        Monomorphic Population & A finite set of individuals, playing the same strategy.  \\
        Monomorphic Population Profile & A set of monomorphic populations.\\
        Focal Population & A previously-monomorphic population wherein a rare mutation has appeared.  \\
        \bottomrule
        \end{tabular}
        \caption{}
        \label{tab:concepts}
    \end{subtable}\\
    \begin{subfigure}[t]{0.48\textwidth} 
        \includegraphics[width=\linewidth,page=3]{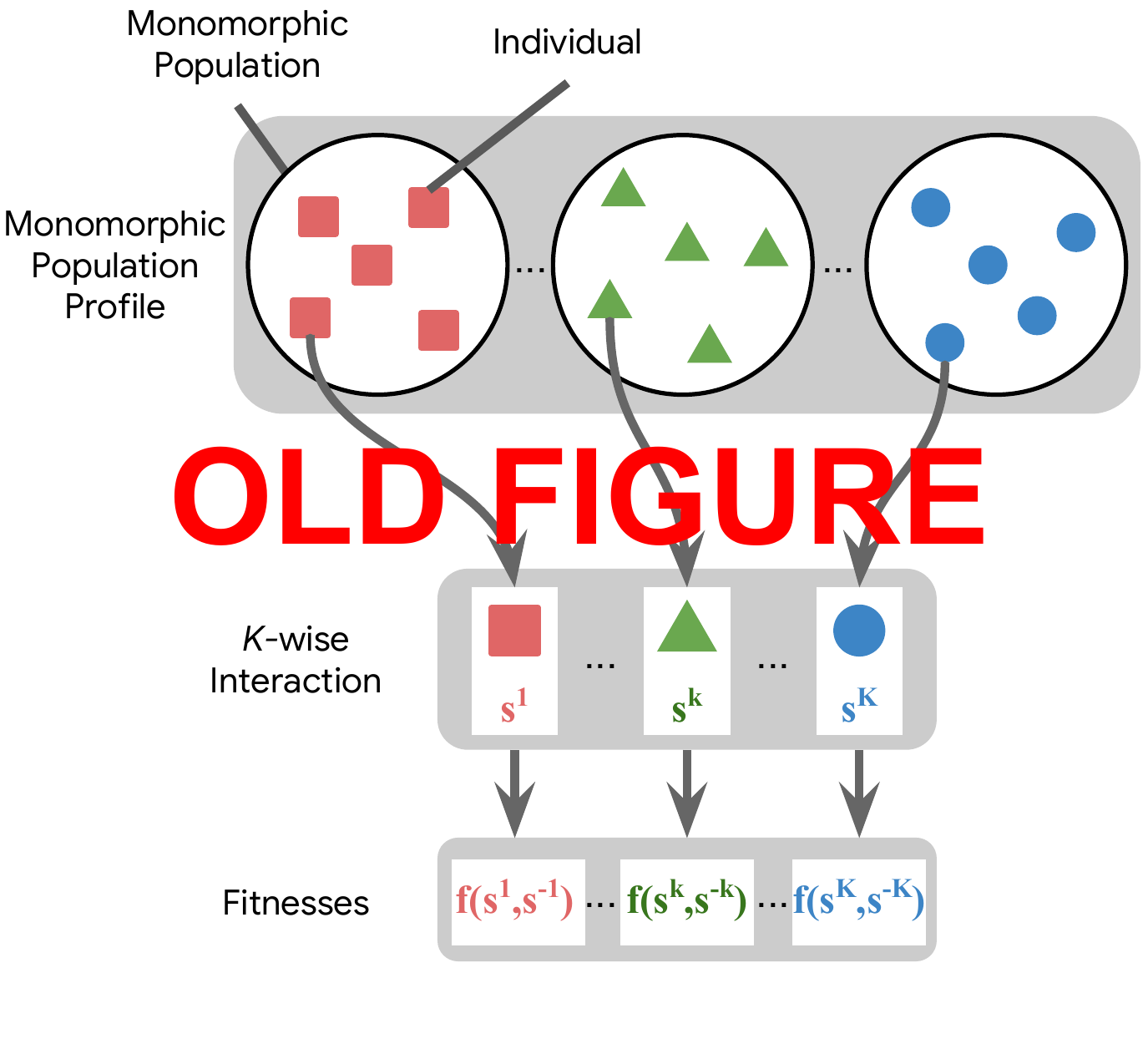}
        \caption{}
        \label{fig:MacroModelOverview_1}
    \end{subfigure}%
    \hfill
    \begin{subfigure}[t]{0.48\textwidth}
        \includegraphics[width=\linewidth,page=4]{figs/MacroModelOverview.pdf}
        \caption{}
        \label{fig:MacroModelOverview_2}
    \end{subfigure}\\
    \begin{subfigure}[t]{0.48\textwidth}
        \includegraphics[width=\linewidth,page=5,trim={0cm 7.5cm 0cm 0.5cm},clip]{figs/MacroModelOverview.pdf}
        \caption{}
        \label{fig:MacroModelOverview_3}
    \end{subfigure}%
    \hfill
    \begin{subfigure}[t]{0.48\textwidth}
        \includegraphics[width=\linewidth,page=6,trim={0cm 7.5cm 0cm 0.5cm},clip]{figs/MacroModelOverview.pdf}
        \caption{}
        \label{fig:MacroModelOverview_4}
    \end{subfigure}
    \caption{Overview of the discrete-time macro-model.
    (\subref{tab:concepts}) Evolutionary concepts terminology.
    (\subref{fig:MacroModelOverview_1}) We have a set of individuals in each population $k$, each of which is programmed to play a strategy from set $S^k$.
    Under the mutation rate $\mu \rightarrow 0$ assumption, at most one population is not monomorphic at any time. 
    Each individual in a $K$-wise interaction game has a corresponding fitness $f^k(s^k, s^{-k})$ dependent on its identity $k$, its strategy $s^k$, and the strategy profile $s^{-k}$ of the other players.
    (\subref{fig:MacroModelOverview_2}) Let the focal population denote a population $k$ wherein a rare mutant strategy appears. 
    At each timestep, we randomly sample two individuals in population $k$; 
    the strategy of the first individual is updated by either probabilistically copying the strategy of the second individual, mutating with a very small probability to a random strategy, or sticking with its own strategy.
    (\subref{fig:MacroModelOverview_3}) Individual in the focal population copies the mutant strategy.
    (\subref{fig:MacroModelOverview_4}) The mutant propagates in the focal population, yielding a new monomorphic population profile.
    }
    \label{fig:MacroModelOverview}
\end{figure}

This section presents our main evolutionary dynamics model, which extends previous single-population discrete-time models and is later shown to play an integral role in our $\alpha$-Rank method and can also be seen as an instantiation of the framework introduced in \cite{veller2016finite}.

A promising alternative to using the continuous-time replicator dynamics for evaluation is to consider discrete-time finite-population dynamics. 
As later demonstrated, an important advantage of the discrete-time dynamics is that they are not limited to only three or four strategies (i.e., the agents under evaluation) as in the continuous-time case. 
Even though we lose the micro-dynamical details of the strategy simplex, this discrete-time macro-dynamical model, in which we observe the flows over the edges of the high-dimensional simplex, still provides useful insights into the overall system dynamics.

To conduct this discrete-time analysis, we consider a selection-mutation process but with a \emph{very small mutation rate} (following the small mutation rate theorem, see \citep{fudenberg2006imitation}).  
Before elaborating on the details we specify a number of important concepts used in the description below and clarify their respective meanings in \cref{tab:concepts}.
Let a \emph{monomorphic population} denote a population wherein all individuals play identical strategies, and a \emph{monomorphic population profile} is a set of monomorphic populations, where each population may be playing a different strategy (see \cref{fig:MacroModelOverview_1}).
Our general idea is to capture the overall dynamics by defining a \emph{Markov chain} over states that correspond to monomorphic population profiles. We can then calculate the transition probability matrix over these states, which captures the fixation probability of any mutation in any given population (i.e., the probability that the mutant will take over that population). 
By computing the stationary distribution over this matrix we find the evolutionary population dynamics, which can be represented as a graph. 
The nodes of this graph correspond to the states, with the stationary distribution quantifying the average time spent by the populations in each node \cite{Traulsen06a,Nowak793}. 

A large body of prior literature has conducted this discrete-time Markov chain analysis in the context of {pair-wise} interaction games with {symmetric} payoffs \cite{Nowak793,Traulsen06a,Traulsen06b,Segbroek12,Santos11}.
Recent work applies the underlying assumption of small-mutation rates \cite{fudenberg2006imitation} to propose a general framework for discrete-time multi-player interactions \cite{veller2016finite}, which applies to games with asymmetric payoffs.
In our work, we formalize how such an evolutionary model, in the micro/macro dynamics spectrum, should be instantiated to converge to our novel and dynamical solution concept of MCCs. Additionally, we show (in \cref{thm:general_to_symmetric_model}) that in the case of identical per-population payoffs (i.e., $\forall k, M^k = M$) our generalization reduces to the single-population model used by prior works.
For completeness, we also detail the single population model in the Supplementary Material (see \cref{sec:supsingle}).
We now formally define the generalized discrete-time model.

Recall from \cref{sec:nfg} that each individual in a $K$-wise interaction game receives a local payoff $M^k(s^k, s^{-k})$ dependent on its identity $k$, its strategy $s^k$, and the strategy profile $s^{-k}$ of the other $K-1$ individuals involved in the game.
To account for the identity-dependent payoffs of such individuals, we consider the interactions of $K$ finite populations, each corresponding to a specific identity $k \in \{1,\ldots,K\}$.

In each population $k$, we have a set of strategies $S^k$ that we would like to evaluate for their evolutionary strength.
We also have a set of individuals $A$ in each population $k$, each of which is programmed to play a strategy from  the set $S^k$. 
Without loss of generality, we assume all populations have $m$ individuals.

Individuals interact $K$-wise through empirical games.
At each timestep $t$, one individual from each population is sampled uniformly, and the $K$ resulting individuals play a game.
Let $p^k_{s^k}$ denote the number of individuals in population $k$ playing strategy $s^k$ and $p$ denote the joint population state (i.e., vector of states of all populations). 
Under our sampling protocol, the fitness of an individual that plays strategy $s^k$ is,
\begin{align}
    f^k(s^k, p^{-k}) = \sum_{s^{-k} \in S^{-k}}M^{k}(s^k,s^{-k})\prod_{c \in \{1,\ldots,K\} \setminus {k}}\frac{p^c_{s^{c}}}{m}.\label{eq:multipop_fitness}
\end{align}
We consider any two individuals from a population $k$, with respective strategies $\focalAgent,\residentAgent \in S^k$ and respective fitnesses $f^k(\focalAgent, p^{-k})$ and $f^k(\residentAgent,p^{-k})$ (dependent on the values of the meta-game table). 
We introduce here a discrete-time dynamics, where the strategy of the first individual (playing $\focalAgent$) is then updated by either mutating with a very small probability to a random strategy (\cref{fig:MacroModelOverview_2}), probabilistically copying the strategy $\residentAgent$ of the second individual (\cref{fig:MacroModelOverview_3}), or sticking with its own strategy $\focalAgent$.
The idea is that strong individuals will replicate and spread throughout the population (\cref{fig:MacroModelOverview_4}).
While one could choose other variants of discrete-time dynamics \cite{claussen2008discrete}, we show that this particular choice both yields useful closed-form representations of the limiting behaviors of the populations, and also coincides with the MCC solution concept we later introduce under specific conditions.

As individuals from the same population never directly interact, the state of a population $k$ has no bearing on the fitnesses of its individuals.
However, as evident in \cref{eq:multipop_fitness}, each population's fitness may directly be affected by the competing populations' states.
The complexity of analyzing such a system can be significantly reduced by making the assumption of a small mutation rate \cite{fudenberg2006imitation}.
Let the `focal population' denote a population $k$ wherein a mutant strategy appears. 
We denote the probability for a strategy to mutate randomly into another strategy $s^k \in S^k$ by $\mu$ and we will assume it to be infinitesimally small (i.e., we consider a small-mutation limit $\mu \rightarrow 0$). 
If we neglected mutations, the end state of this evolutionary process would be monomorphic. 
If we introduce a very small mutation rate this means that either the mutant fixates and takes over the current population, or the current population is capable of wiping out the mutant strategy \cite{fudenberg2006imitation}.
Therefore, given a small mutation rate, the mutant either fixates or disappears before a new mutant appears in the current population.
This means that any given population $k$ will never contain more than two strategies at any point in time.
We refer the interested reader to \cite{veller2016finite} for a more extensive treatment of these arguments.


Applying the same line of reasoning, in the small-mutation rate regime, the mutant strategy in the focal population will either fixate or go extinct much earlier than the appearance of a mutant in any \emph{other} population  \cite{fudenberg2006imitation}.
Thus, at any given time, there can maximally be only one population with a mutant, and the remaining populations will be monomorphic; i.e., in each competing population $c \in \{1,\ldots,K\} \setminus k$, $\frac{p^c_{s^c}}{m} = 1$ for a single strategy and $0$ for the rest.
As such, given a small enough mutation rate, analysis of any focal population $k$ needs only consider the monomorphic states of all other populations.
Overloading the notation in \cref{eq:multipop_fitness}, the fitness of an individual from population $k$ that plays $s^k$ then considerably simplifies to 
\begin{align}
    f^k(s^k,s^{-k}) = M^{k}(s^k,s^{-k}),\label{eq:multipop_fitness_small_mutation}
\end{align}
where $s^{-k}$ denotes the strategy profile of the other populations.

Let $p^k_{\focalAgent}$ and $p^k_{\residentAgent}$ respectively denote the number of individuals playing $\focalAgent$ and $\residentAgent$ in focal population $k$, where $p^k_{\focalAgent} + p^k_{\residentAgent} = m$.
Per \cref{eq:multipop_fitness_small_mutation}, the fitness of an individual playing $\focalAgent$ in the focal population while the remaining populations play monomorphic strategies $s^{-k}$ is given by $f^k(\focalAgent,s^{-k}) = M^{k}(\focalAgent,s^{-k})$.
Likewise, the fitness of any individual in $k$ playing $\residentAgent$ is, $f^k(\residentAgent,s^{-k}) = M^{k}(\residentAgent,s^{-k})$.

We randomly sample two individuals in population $k$ and consider the probability that the one playing $\focalAgent$ copies the other individual's strategy $\residentAgent$. 
The probability with which the individual playing strategy $\focalAgent$ will copy the individual playing strategy $\residentAgent$ can be described by a \emph{selection} function $\mathbb{P}(\focalAgent \rightarrow \residentAgent, s^{-k})$, which governs the dynamics of the finite-population model.
For the remainder of the paper, we focus on the logistic selection function (aka Fermi distribution),
\begin{equation}\label{eq:fermi_distr_multipop}
    \mathbb{P}(\focalAgent \rightarrow \residentAgent, s^{-k})  = \frac{e^{\alpha f^k(\residentAgent,s^{-k})}}{e^{\alpha f^k(\focalAgent,s^{-k})}+e^{\alpha f^k(\residentAgent,s^{-k})}}=\left(1 + e^{\alpha(f^k(\focalAgent,s^{-k})-f^k(\residentAgent,s^{-k}))}\right)^{-1},
\end{equation}
with $\alpha$ determining the selection strength, which we call the \emph{ranking-intensity} (the correspondence between $\alpha$ and our ranking method will become clear later). 
There are alternative definitions of the selection function that may be used here, we merely focus on the Fermi distribution due to its extensive use in the single-population literature \cite{Santos11,Segbroek12,Traulsen06a}.

Based on this setup, we define a Markov chain over the set of strategy profiles $\prod_k S^{k}$ with $\prod_k |S^k|$ states. 
Each state corresponds to one of the strategy profiles $s \in \prod_k S^k$, representing a multi-population end-state where each population is monomorphic. 
The transitions between these states are defined by the corresponding fixation probabilities (the probability of overtaking the population) when a mutant strategy is introduced in any single monomorphic population $k$.
We now define the Markov chain, which has $(\prod_k |S^k|)^2$ transition probabilities over all pairs of monomorphic multi-population states.
Denote by $\rho^{k}_{\residentAgent,\focalAgent}(s^{-k})$ the probability of mutant strategy $\focalAgent$ fixating in a focal population $k$ of individuals playing $\residentAgent$, while the remaining $K-1$ populations remain in their monomorphic states $s^{-k}$.
For any given monomorphic strategy profile, there are a total of $\sum_{k}(|S^k|-1)$ valid transitions to a subsequent profile where only a single population has changed its strategy. 
Thus, letting $\eta = \frac{1}{\sum_{k}(|S^k|-1)}$, then $\eta\rho^{k}_{\residentAgent,\focalAgent}(s^{-k})$ is the probability that the joint population state transitions from $(\residentAgent,s^{-k})$ to state $(\focalAgent,s^{-k})$ after the occurrence of a single mutation in population $k$. 
The stationary distribution over this Markov chain tells us how much time, on average, the dynamics will spend in each of the monomorphic states.

The fixation probabilities (of a rare mutant playing $\focalAgent$ overtaking the focal population $k$) can be calculated as follows. 
The probability that the number of individuals playing $\focalAgent$ decreases/increases by one in the focal population is given by,
\begin{equation}
  T^{k(\mp 1)}(p^k,\focalAgent,\residentAgent,s^{-k})= \frac{p^{k}_{\focalAgent}p^{k}_{\residentAgent}}{m(m-1)}\left(1+e^{\pm \alpha(f^k(\focalAgent,s^{-k})- f^k(\residentAgent,s^{-k}))}\right)^{-1}.
\end{equation}
Then, the fixation probability $\rho^{k}_{\residentAgent,\focalAgent}(s^{-k})$ of a single mutant with strategy $\focalAgent$ in a population $k$ of $m-1$ individuals playing $\residentAgent$ is,
\begin{align}
    \rho^{k}_{\residentAgent,\focalAgent}(s^{-k}) &=  \left(1+\sum_{l=1}^{m-1}\prod_{p^{k}_{\focalAgent}=1}^l \frac{T^{k(-1)}(p^k,\focalAgent,\residentAgent,s^{-k})}{T^{k(+1)}(p^k,\focalAgent,\residentAgent,s^{-k})}\right)^{-1}\\
     &= \left(1+\sum_{l=1}^{m-1}\prod_{p^{k}_{\focalAgent}=1}^l e^{-\alpha(f^k(\focalAgent,s^{-k})- f^k(\residentAgent,s^{-k}))} \right)^{-1}\\
    &= \left(1+\sum_{l=1}^{m-1} e^{-l\alpha(f^k(\focalAgent,s^{-k})- f^k(\residentAgent,s^{-k}))} \right)^{-1}\\
    &= \begin{cases}
    \frac{1-e^{-\alpha(f^k(\focalAgent,s^{-k})- f^k(\residentAgent,s^{-k}))}}{1-e^{-m\alpha(f^k(\focalAgent,s^{-k})- f^k(\residentAgent,s^{-k}))}} & \text{if } f^k(\focalAgent,s^{-k}) \neq f^k(\residentAgent,s^{-k}) \\
    \frac{1}{m} & \text{if } f^k(\focalAgent,s^{-k})= f^k(\residentAgent,s^{-k})
    \end{cases}
    \label{eq:multipop_fixation_prob}
\end{align}
This corresponds to the computation of an $m$-step transition in the Markov chain corresponding to $\mathbb{P}(\focalAgent \rightarrow \residentAgent, s^{-k})$ \cite{TaylKarl98}. 
The quotient $\frac{T^{k(-1)}(p^k,\focalAgent,\residentAgent,s^{-k})}{T^{k(+1)}(p^k,\focalAgent,\residentAgent,s^{-k})}$ expresses the likelihood (odds) that the mutation process in population $k$ continues in either direction: 
if it is close to zero then it is very likely that the number of mutants (individuals with strategy $\focalAgent$ in population $k$) increases; if it is very large it is very likely that the number of mutants will decrease; and if it close to one then the probabilities of increase and decrease of the number of mutants are equally likely.
This yields the following Markov transition matrix corresponding to the jump from strategy profile $s_i \in \prod_k S^k$ to $s_j \in \prod_k S^k$,
\begin{align}\label{eqn:transition_matrix_C_multipop}
    C_{ij} &= 
    \begin{cases}
         \eta \rho^k_{s^k_i,s^k_j}(s_i^{-k}) & \text{if $\exists k$ such that $s_i^{k}\neq s_j^{k}$ and $s_i^{-k}=s_j^{-k}$}\\ 
         1 - \sum_{j\not= i} C_{ij}  & \text{if $s_i = s_j$}\\ 
         0 & \text{otherwise} 
    \end{cases}
\end{align}
for all $i,j \in \{1,\ldots,|S|\}$, where $|S| = \prod_{k} |S^k|$.



The following theorem formalizes the irreducibility of this finite-population Markov chain, a property that is well-known in the literature (e.g., see \cite[Theorem 2]{fudenberg2006imitation} and \cite[Theorem 1]{veller2016finite}) but stated here for our specialized model for completeness.
\begin{restatable}{theorem}{UniquePi}
\label{property:unique_pi}
    Given finite payoffs, the Markov chain with transition matrix $C$ is irreducible (i.e., it is possible to get to any state starting from any state). Thus a unique stationary distribution $\pi$ (where $\pi^TC= \pi^T$ and $\sum_i \pi_i = 1$) exists.
\end{restatable}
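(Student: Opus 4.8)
The plan is to establish irreducibility directly from the definition of $C$ and then invoke the standard existence-and-uniqueness result for stationary distributions of finite irreducible Markov chains. The state space is finite (it has $|S| = \prod_k |S^k|$ elements), so once irreducibility is in hand, the Perron--Frobenius theorem immediately yields a unique stationary distribution $\pi$ satisfying $\pi^T C = \pi^T$, $\sum_i \pi_i = 1$.

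First I would show that every single-population transition has strictly positive probability. Fix a focal population $k$ and a monomorphic background $s^{-k}$, and consider the fixation probability $\rho^{k}_{\residentAgent,\focalAgent}(s^{-k})$ in \cref{eq:multipop_fixation_prob}. Writing $\Delta = f^k(\focalAgent,s^{-k}) - f^k(\residentAgent,s^{-k})$ and $r = e^{-\alpha\Delta}$, the hypothesis of finite payoffs (together with finite $\alpha$) guarantees that $\Delta$ is finite, hence $r \in (0,\infty)$. In the generic case $\Delta \neq 0$ (so $r \neq 1$), the geometric-sum identity $1 - r^m = (1-r)\sum_{j=0}^{m-1} r^j$ gives $\rho^{k}_{\residentAgent,\focalAgent}(s^{-k}) = \bigl(\sum_{j=0}^{m-1} r^j\bigr)^{-1} > 0$, since the denominator is a finite sum of strictly positive terms; in the degenerate case $\Delta = 0$ we have $\rho = 1/m > 0$ outright. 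As $\eta = \bigl(\sum_k(|S^k|-1)\bigr)^{-1}$ is a fixed positive constant, every off-diagonal entry $C_{ij}$ falling under the first case of \cref{eqn:transition_matrix_C_multipop} (a change in exactly one population) is therefore strictly positive.

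Next I would construct, for any ordered pair of states $s_a, s_b \in \prod_k S^k$, a path of positive probability from $s_a$ to $s_b$. Let $k_1,\dots,k_\ell$ ($\ell \le K$) be the populations on which $s_a$ and $s_b$ disagree, and define intermediate profiles $s_a = t_0, t_1, \dots, t_\ell = s_b$, where $t_i$ coincides with $s_b$ on $\{k_1,\dots,k_i\}$ and with $s_a$ on the remaining coordinates. Each consecutive pair $(t_{i-1}, t_i)$ differs in exactly one coordinate, so the corresponding transition falls into the first case of \cref{eqn:transition_matrix_C_multipop} and, by the previous step, carries strictly positive probability. Chaining these at most $K$ transitions exhibits the desired positive-probability path, which is precisely the statement that $C$ is irreducible.

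I expect the only genuine content to lie in the positivity argument for the fixation probability; the path construction is then immediate, and the passage from irreducibility of a finite chain to a unique stationary distribution is a textbook application of the Perron--Frobenius / ergodic theorem. The one edge case worth flagging is when every population is trivial ($|S^k| = 1$ for all $k$), in which case the chain has a single state and the claim holds vacuously; note also that $\eta$ is well defined precisely when $\sum_k(|S^k|-1) \ge 1$, i.e.\ when at least one population admits a genuine mutation.
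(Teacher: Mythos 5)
Your proposal is correct and follows essentially the same route as the paper's proof: positivity of every single-population fixation probability under finite payoffs, followed by a coordinate-by-coordinate path connecting any two profiles, and the standard uniqueness result for finite irreducible chains. Your write-up is in fact slightly more explicit than the paper's, which merely asserts positivity of the fixation probabilities, whereas you justify it via the geometric-sum form of \cref{eq:multipop_fixation_prob}.
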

\begin{proof}
    \ProofInSM{{sec:proof_unique_pi}}
\end{proof}
This unique $\pi$ provides the evolutionary ranking, or strength of each strategy profile in the set $\prod_k S^k$, expressed as the average time spent in each state in distribution $\pi$.

This generalized discrete-time evolutionary model, as later shown, will form the basis of our $\alpha$-Rank method.
We would like to clarify the application of this general model to the single population case, which applies only to symmetric 2-player games and is commonly used in the literature (see \cref{sec:related_work}).

\paragraph{Application to Single-Population (Symmetric Two-Player) Games} 
For completeness, we provide a detailed outline of the single population model in Supplementary Material \cref{sec:supsingle}.

\begin{theorem}[Multi-population model generalizes the symmetric single-population model]\label{thm:general_to_symmetric_model}
    The general multi-population model inherently captures the dynamics of the single population symmetric model. 
\end{theorem}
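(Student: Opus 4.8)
The plan is to realize the single-population symmetric model as the special case of the general multi-population model obtained by taking $K=2$, identical strategy sets $S^1=S^2=S$, and identical symmetric per-population payoffs $M^1=M^2=M$, and then to check that each defining object of the general construction collapses onto its single-population counterpart. I would begin by recalling the single-population model of Supplementary Material \cref{sec:supsingle}: a single finite population of $m$ individuals playing the symmetric two-player game $M$ under the logistic selection rule \cref{eq:fermi_distr_multipop} with vanishing mutation rate, so that the reduced Markov chain ranges over the monomorphic states $s \in S$, with transitions between them given by the single-population fixation probabilities and a stationary distribution that exists and is unique by the same argument as \cref{property:unique_pi}.

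I would then instantiate the general model at this symmetric point and follow the simplifications already derived in the text. Under the small-mutation assumption the non-focal population is monomorphic, so \cref{eq:multipop_fitness} collapses to \cref{eq:multipop_fitness_small_mutation}, and with $M^1=M^2=M$ the fitnesses of the focal mutant $\focalAgent$ and resident $\residentAgent$ become $M(\focalAgent,\residentAgent)$ and $M(\residentAgent,\residentAgent)$. Substituting these into the selection function \cref{eq:fermi_distr_multipop}, into the closed-form fixation probability \cref{eq:multipop_fixation_prob} (including the degenerate case $f^k(\focalAgent,s^{-k})=f^k(\residentAgent,s^{-k})$, where both models return $1/m$), and finally into the transition matrix \cref{eqn:transition_matrix_C_multipop}, I would verify term-by-term that these expressions reproduce exactly the single-population fitness, fixation probability, and transition matrix. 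Matching the transition matrices then forces the stationary distributions, and hence the induced rankings, to coincide.

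The main obstacle is the reconciliation of the two interaction structures. In the multi-population model individuals of a common population never meet, so a focal population's fitness is constant in its own composition, whereas in the single-population model the opponents of any individual are themselves drawn from that single population, making fitness frequency-dependent as the mutant spreads. The crux of the proof is therefore to argue that, once symmetry and the small-mutation limit are imposed, the only fitness comparison that governs fixation is the one against the resident strategy---precisely the quantity produced by the general model with the opposing population pinned at the resident---so that the two birth--death processes, and the reduced Markov chains they induce, genuinely agree. Carefully justifying this identification, rather than merely matching the algebraic form of the fixation formulas, is where the real content of the theorem lies.
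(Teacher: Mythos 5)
Your reduction via $K=2$ does not recover the single-population model, and the step you yourself identify as the crux is exactly where it breaks. In the two-population instantiation, the fixation probability of a mutant $\focalAgent$ in focal population $1$ with population $2$ pinned at the resident $\residentAgent$ is driven by the fitness difference $f^1(\focalAgent,\residentAgent)-f^1(\residentAgent,\residentAgent)=M(\focalAgent,\residentAgent)-M(\residentAgent,\residentAgent)$, since both the mutant and the resident of the focal population are evaluated against the \emph{other} population's strategy. The single-population model of \cref{sec:supsingle}, by contrast, pairs the two sampled individuals against \emph{each other}, so its Fermi protocol \cref{eq:fermi_distr} and fixation probability \cref{eq:fixation_prob} depend on $f(\focalAgent,\residentAgent)-f(\residentAgent,\focalAgent)=M_{\focalAgent,\residentAgent}-M_{\residentAgent,\focalAgent}$. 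These two differences do not agree in general: in Rock--Paper--Scissors a Paper mutant in a Rock population gives $M(P,R)-M(R,R)=1$ in your reduction but $M_{P,R}-M_{R,P}=2$ in the single-population model, and in non-zero-sum symmetric games the two differences can even have opposite signs. So the term-by-term match you propose fails. A second, independent mismatch is the state space: the $K=2$ chain lives on $S\times S$ with $|S|^2$ monomorphic profiles, whereas the single-population chain lives on $S$; the diagonal $\{(s,s)\}$ is not invariant because only one population mutates at a time, so there is no identification under which ``matching the transition matrices forces the stationary distributions to coincide.''

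The paper's proof takes the opposite route: it sets $K=1$ and adjusts the sampling protocol so that the two individuals drawn from the single population play the game against one another, whence their fitnesses are $f_{\focalAgent}=M(\focalAgent,\residentAgent)$ and $f_{\residentAgent}=M(\residentAgent,\focalAgent)$; with that substitution the transition probabilities, fixation probabilities, and transition matrix of the generalized model literally become those of \cref{sec:supsingle}. If you want to keep a genuinely multi-population viewpoint, you would need to prove a nontrivial equivalence between the two-population chain on $S\times S$ and the one-population chain on $S$ for symmetric games, which is a different (and harder) statement than the one claimed, and is false at the level of the fixation probabilities as computed above.
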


\begin{proof}(Sketch)
    In the pairwise symmetric game setting, we consider only a single population of interacting individuals (i.e., $K=1$), where a maximum of two strategies may exist at any time in the population due to the small mutation rate assumption. 
    At each timestep, two individuals (with respective strategies $\focalAgent,\residentAgent \in S^1$) are sampled from this population and play a game using their respective strategies $\focalAgent$ and $\residentAgent$.
    Their respective fitnesses then correspond directly to their payoffs, i.e., $f_{\focalAgent} = M(\focalAgent,\residentAgent)$ and $f_{\residentAgent} = M(\residentAgent,\focalAgent)$.
    With this change, all other derivations and results follow directly the generalized model.
    For example, the probability of decrease/increase of a strategy of type $s_\focalAgent$ in the single-population case translates to, 
    \begin{equation}
      T^{(\mp 1)}(p,\focalAgent,\residentAgent)= \frac{p_{\focalAgent}p_{\residentAgent}}{m(m-1)}\left(1+e^{\pm \alpha(f_{\focalAgent}- f_{\residentAgent})}\right)^{-1},
    \end{equation}
    and likewise for the remaining equations.
\end{proof}
In other words, the generalized model is general in the sense that one can not only simulate symmetric pairwise interaction dynamics, but also $K$-wise and asymmetric interactions.

\paragraph{Linking the Micro- and Macro-dynamics Models}
We have introduced, so far, a micro- and macro-dynamics model, each with unique advantages in terms of analyzing the evolutionary strengths of agents. 
The formal relationship between these two models remains of interest, and is established in the limit of a large population: 
\begin{restatable}[Discrete-Continuous Edge Dynamics Correspondence]{theorem}{DiscreteContEdgeDyn}
\label{thm:discrete_cont_edge_dyn}
    In the large-population limit, the macro-dynamics model is equivalent to the micro-dynamics model over the edges of the strategy simplex. Specifically, the limiting model is a variant of the replicator dynamics with the caveat that the Fermi revision function takes the place of the usual fitness terms.
\end{restatable}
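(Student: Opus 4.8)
The plan is to reduce the macro-model to a one-dimensional birth--death chain along a single edge of the simplex and then extract its deterministic fluid limit as the population size $m \to \infty$. Because we operate in the small-mutation regime, at any instant only one focal population $k$ carries two strategies $\focalAgent, \residentAgent \in S^k$, while the remaining populations are frozen in a monomorphic profile $s^{-k}$; the state is therefore fully described by the count $p^k_{\focalAgent} \in \{0,1,\ldots,m\}$, with up- and down-increment probabilities $T^{k(+1)}$ and $T^{k(-1)}$. Writing $x = p^k_{\focalAgent}/m$ for the fraction of $\focalAgent$-players, so that the edge is parametrised by $x \in [0,1]$, I would study the expected one-step increment $\E[\Delta x] = \tfrac{1}{m}\big(T^{k(+1)} - T^{k(-1)}\big)$ and show that, under a suitable time rescaling, it yields a replicator-type ODE in which the Fermi term replaces the raw fitness difference.

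First I would compute the drift explicitly. Abbreviating $\Delta f = f^k(\focalAgent, s^{-k}) - f^k(\residentAgent, s^{-k})$ and using $\tfrac{p^k_{\focalAgent} p^k_{\residentAgent}}{m(m-1)} = \tfrac{m}{m-1}\,x(1-x)$, the difference of the two Fermi factors collapses via the identity $\frac{1}{1+e^{-u}} - \frac{1}{1+e^{u}} = \tanh(u/2)$, giving
\begin{equation*}
T^{k(+1)} - T^{k(-1)} = \frac{m}{m-1}\,x(1-x)\,\tanh\!\Big(\tfrac{\alpha}{2}\,\Delta f\Big).
\end{equation*}
Hence the per-step drift is $\E[\Delta x] = \tfrac{1}{m-1}\,x(1-x)\tanh(\tfrac{\alpha}{2}\Delta f)$, which is $O(1/m)$ and vanishes as the population grows; this signals that the natural continuous time is $t = n/(m-1)$ after $n$ update steps.

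Next I would pass to the large-population limit. Under the rescaling $t = n/(m-1)$ the drift per unit time converges to $x(1-x)\tanh(\tfrac{\alpha}{2}\Delta f)$, while the one-step second moment is $\tfrac{1}{m^2}(T^{k(+1)}+T^{k(-1)}) = O(1/m^2)$; accumulated over the $\approx m$ update steps that fill one unit of rescaled time it is $O(1/m)$ and therefore vanishes. The stochastic (martingale) part of the rescaled process thus has quadratic variation tending to zero, and a standard density-dependent fluid-limit argument (a Kurtz-type law of large numbers for Markov jump chains) yields uniform-on-compacts convergence of the trajectories to the solution of
\begin{equation*}
\dot{x} = x(1-x)\,\tanh\!\Big(\tfrac{\alpha}{2}\big(f^k(\focalAgent, s^{-k}) - f^k(\residentAgent, s^{-k})\big)\Big).
\end{equation*}
I expect this step---rigorously controlling the noise term and justifying the interchange of limits---to be the main obstacle: everything upstream is algebra, whereas here one must verify the hypotheses of the fluid-limit theorem (Lipschitz drift on the compact edge, vanishing diffusion) or supply a direct martingale estimate.

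Finally I would compare against the micro-model restricted to the same edge. Setting all coordinates except $x_{\focalAgent} = x$ and $x_{\residentAgent} = 1-x$ to zero in the multi-population replicator equation (with the background populations held at their monomorphic $s^{-k}$) collapses it to the scalar form $\dot x = x(1-x)\big(f^k(\focalAgent, s^{-k}) - f^k(\residentAgent, s^{-k})\big)$, the mean-fitness subtraction producing the factor $(1-x)$ on the edge. This is structurally identical to the limiting ODE above, the sole difference being that the fitness difference $f^k(\focalAgent,s^{-k}) - f^k(\residentAgent,s^{-k})$ is replaced by its image $\tanh(\tfrac{\alpha}{2}(\cdot))$ under the Fermi revision function. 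This establishes the claimed correspondence: in the large-population limit the macro-model coincides, edge by edge, with a replicator dynamics whose fitness terms are passed through the Fermi map.
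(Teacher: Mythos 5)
Your proposal is correct and follows essentially the same route as the paper's proof: both pass to the deterministic mean/fluid dynamics of the two-strategy edge process in the large-population limit and arrive at the identical ODE $\dot{x}_{\tau} = x_{\tau}(1-x_{\tau})\tanh\bigl(\tfrac{\alpha}{2}\,\Delta f\bigr)$, identified as a replicator equation with the Fermi revision function in place of the fitness terms. The only difference is presentational: you spell out the Kurtz-type drift and second-moment estimates explicitly, whereas the paper writes down the mean dynamics directly and cites Sandholm's large-population approximation results for the convergence.
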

\begin{proof}
    \ProofInSM{{sec:proof_discrete_cont_edge_dyn}}
\end{proof}
Therefore, a correspondence exists between the two models on the `skeleton' of the simplex, with the macro-dynamics model useful for analyzing the global evolutionary behaviors over this skeleton, and the micro-model useful for `zooming into' the three- or four-faces of the simplex to analyze the interior dynamics.

In the next sections, we first give a few conceptual examples of the generalized discrete-time model, then discuss the need for a new solution concept and the incompatibility between Nash equilibria and dynamical systems.
We then directly link the generalized model to our new game-theoretic solution concept, Markov-Conley chains (in \cref{thm:inf_pop_alpha}).

\subsection{Conceptual Examples}\label{sec:markov_chain_conceptual_example}
\begin{figure}[t]
    \centering
    \begin{subfigure}[t]{0.48\textwidth}
        \centering
        \begin{tabular}{cc|c|c|c|}
        & \multicolumn{1}{c}{} & \multicolumn{3}{c}{Player 2}\\
        & \multicolumn{1}{c}{} & \multicolumn{1}{c}{$R$}  & \multicolumn{1}{c}{$P$} & \multicolumn{1}{c}{$S$} \\\cline{3-5}
        \multirow{3}*{Player 1}  & $R$ & $0$ & $-1$ & $1$ \\\cline{3-5}
                                 & $P$ & $1$ & $0$ & $-1$ \\\cline{3-5}
                                 & $S$ & $-1$ & $1$ & $0$ \\\cline{3-5}
        \end{tabular}
        \includegraphics[width=1\textwidth,trim={3cm 1cm 1cm 1cm}, clip]{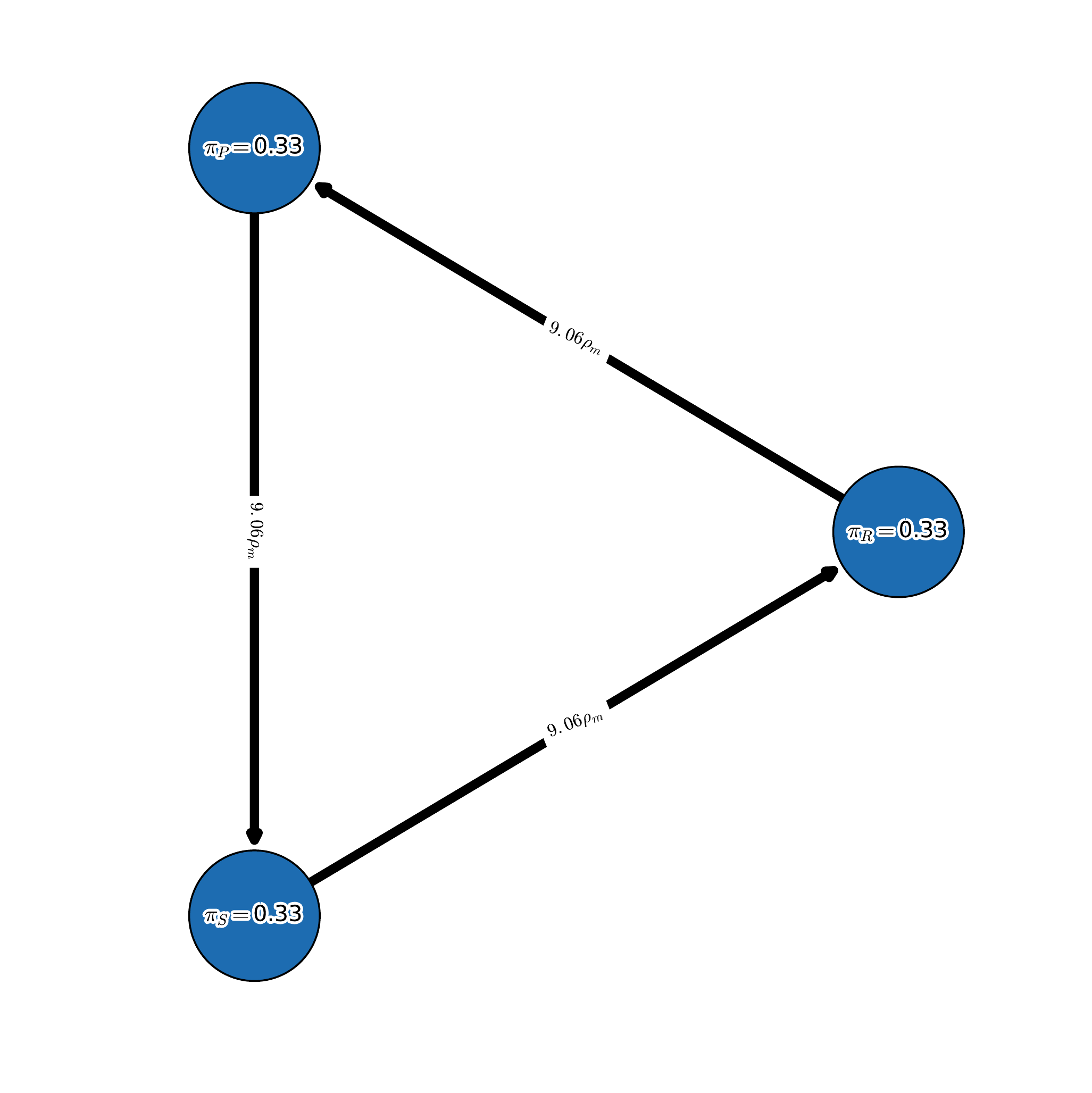}
        \caption{Payoffs (top) and single-population discrete-time dynamics (bottom) for Rock-Paper-Scissors game. Graph nodes correspond to monomorphic populations $R$, $P$, and $S$.
        }
        \label{fig:mcc_rock_paper_scissors_intext}
    \end{subfigure}
    \hfill
    \begin{subfigure}[t]{0.48\textwidth}
        \centering
        \begin{tabular}{cc|c|c|}
            & \multicolumn{1}{c}{} & \multicolumn{2}{c}{Player 2}\\
            & \multicolumn{1}{c}{} & \multicolumn{1}{c}{$O$}  & \multicolumn{1}{c}{$M$} \\\cline{3-4}
            \multirow{2}*{Player 1}  & $O$ & $(3,2)$ & $(0,0)$ \\\cline{3-4}
                                       & $M$ & $(0,0)$ & $(2,3)$ \\\cline{3-4}
        \end{tabular}
        \includegraphics[width=1\textwidth,trim={1cm 1cm 0.3cm 0cm}, clip]{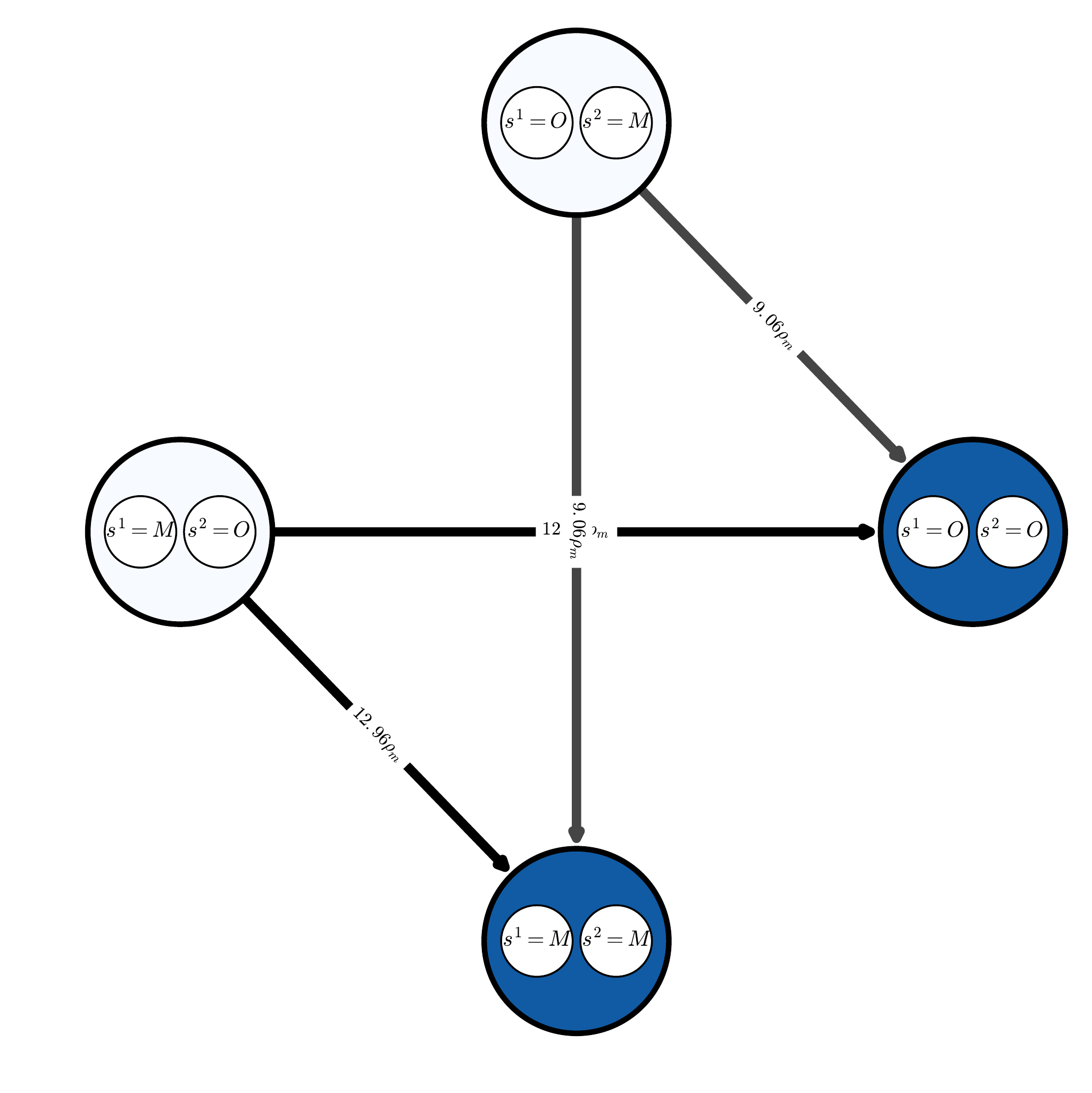}
        \caption{Payoffs (top) and multi-population discrete-time dynamics (bottom) for Battle of the Sexes game. Strategies O and M respectively correspond to going to the Opera and Movies. Graph nodes correspond to monomorphic population profiles $(s^1,s^2)$. The stationary distribution $\pi$ has 0.5 mass on each of profiles $(O,O)$ and $(M,O)$, and 0 mass elsewhere.}
        \label{fig:mcc_battle_of_the_sexes_intext}
    \end{subfigure}
    \caption{Conceptual examples of finite-population models, for population size $m=50$ and ranking-intensity $\alpha=0.1$.}
    \label{fig:finite_pop_overview}
\end{figure}

We present two canonical examples that visualize the discrete-time dynamics and build intuition regarding the macro-level insights gained using this type of analysis.

\subsubsection{Rock-Paper-Scissors}\label{sec:markov_chain_rps}
We first consider the single-population (symmetric) discrete-time model in the Rock-Paper-Scissors (RPS) game, with the payoff matrix shown in \cref{fig:mcc_rock_paper_scissors_intext} (top).
One can visualize the discrete-time dynamics using a graph that corresponds to the Markov transition matrix $C$ defined in \cref{eqn:transition_matrix_C_multipop}, as shown in \cref{fig:mcc_rock_paper_scissors_intext} (bottom).

Nodes in this graph correspond to the monomorphic population states.
In this example, these are the states of the population where all individuals play as agents Rock, Paper, or Scissors.
To quantify the time the population spends as each agent, we indicate the corresponding mass of the stationary distribution $\pi$ within each node.
As can be observed in the graph, the RPS population spends exactly $\frac{1}{3}$ of its time as each agent.

Edges in the graph correspond to the fixation probabilities for pairs of states.
Edge directions corresponds to the flow of individuals from one agent to another, with strong edges indicating rapid flows towards ‘fitter’ agents.
We denote fixation probabilities as a multiple of the neutral fixation probability baseline, $\rho_m=\frac{1}{m}$, which corresponds to using the Fermi selection function with $\alpha=0$.
To improve readability of the graphs,  we also do not visualize edges looping a node back to itself, or edges with fixation probabilities lower than $\rho_m$.
In this example, we observe a cycle (intransitivity) involving all three agents in the graph.
While for small games such cycles may be apparent directly from the structure of the payoff table, we later show that the graph visualization can be used to automatically iterate through cycles even in $K$-player games involving many agents.

\subsubsection{Battle of the Sexes}\label{sec:markov_chain_bots}
Next we illustrate the generalized multi-population (asymmetric) model in the Battle of the Sexes game, with the payoff matrix shown in \cref{fig:mcc_battle_of_the_sexes_intext} (top).
The graph now corresponds to the interaction of two populations, each representing a player type, with each node corresponding to a monomorphic population \emph{profile} $(s^1,s^2)$.
Edges, again, correspond to fixation probabilities, but occur only when a single population changes its strategy to a different one (an artifact of our small mutation assumption). 
In this example, it is evident from the stationary distribution that the populations spend an equal amount of time in profiles $(O,O)$ and $(M,M)$, and a very small amount of time in states $(O,M)$ and $(M,O)$.

\subsection{The Incompatibility of Nash Equilibrium and Dynamical Systems}\label{sec:incompatibility_nash_dynamical}

Continuous- and discrete-time dynamical systems have been used extensively in Game Theory, Economics, and Algorithmic Game Theory.
In the particular case of multi-agent evaluation in meta-games, this type of analysis is relied upon for revealing useful insights into the strengths and weaknesses of interacting agents \cite{Tuyls18}. 
Often, the goal of research in these areas is to establish that, in some sense and manner, the investigated dynamics actually converge to a Nash equilibrium;
there has been limited success in this front, and there  are some negative results \citep{Frong,Hart03,viossat07}.
In fact, all known dynamics in games (the replicator dynamics, the many continuous variants of the dynamics used in the proof of Nash's theorem, etc.) do cycle.
To compound this issue, meta-games are often large, extend beyond pair-wise interactions, and may not be zero-sum.
While solving for a Nash equilibrium can be done in polynomial time for zero-sum games, doing so in general-sum games is known to be PPAD-complete  \cite{Daskalakis06thecomplexity}, which severely limits the feasibility of using such a solution concept for evaluating our agents.

Of course, some dynamics are known to converge to {\em relaxations} of the Nash equilibrium, such as the correlated equilibrium polytope or the coarse correlated equilibria \cite{piliouras2017learning}.  
But unfortunately, this ``convergence'' is typically considered in the sense of {\em time average}; time averages can be useful for establishing performance bounds for games, but tell us little about actual system behavior --- which is a core component of what we study through games. 
For certain games, dynamics may indeed converge to a Nash equilibrium, but they may also {\em cycle}.  
For example, it is encouraging that in all $2 \times 2$ matrix games these equilibria, cycles, and slight generalizations thereof are the only possible limiting behaviors for continuous-time dynamics (i.e., flows). 
But unfortunately this clean behavior (convergence to either a cycle or, as a special case, to a Nash equilibrium) is an artifact of the two-dimensional nature of $2 \times 2$ games, a consequence of the Poincar\' e--Bendixson theorem \citep{Sandholm10}.
There is a wide range of results in different disciplines arguing that learning dynamics in games tend to not equilibrate to any Nash equilibrium but instead exhibit complex, unpredictable behavior (e.g., \cite{Gaunersdorfer,daskalakis10,paperics11,sandholm2010population,wagner2013explanatory,PalaiopanosPP17}).
The dynamics of even simple two-person games with three or more strategies per player can be chaotic \citep{Sato02042002}, that is, inherently difficult to predict and complex. 
Chaos goes against the core of our project; there seems to be little hope for building a predictive theory of player behavior based on dynamics in terms of Nash equilibrium.

\subsection{Markov-Conley chains: A Dynamical Solution Concept}
\begin{figure}[t]
    \centering
    \begin{subfigure}[t]{0.5\textwidth}
        \centering
        \begin{tabular}{cc|c|c|}
        & \multicolumn{1}{c}{} & \multicolumn{2}{c}{Player 2}\\
        & \multicolumn{1}{c}{} & \multicolumn{1}{c}{$H$}  & \multicolumn{1}{c}{$T$} \\\cline{3-4}
        \multirow{2}*{Player 1}  & $H$ & $(1,-1)$ & $(-1,1)$ \\\cline{3-4}
                                   & $T$ & $(-1,1)$ & $(1,-1)$ \\\cline{3-4}
        \end{tabular}
        \includegraphics[width=0.8\textwidth]{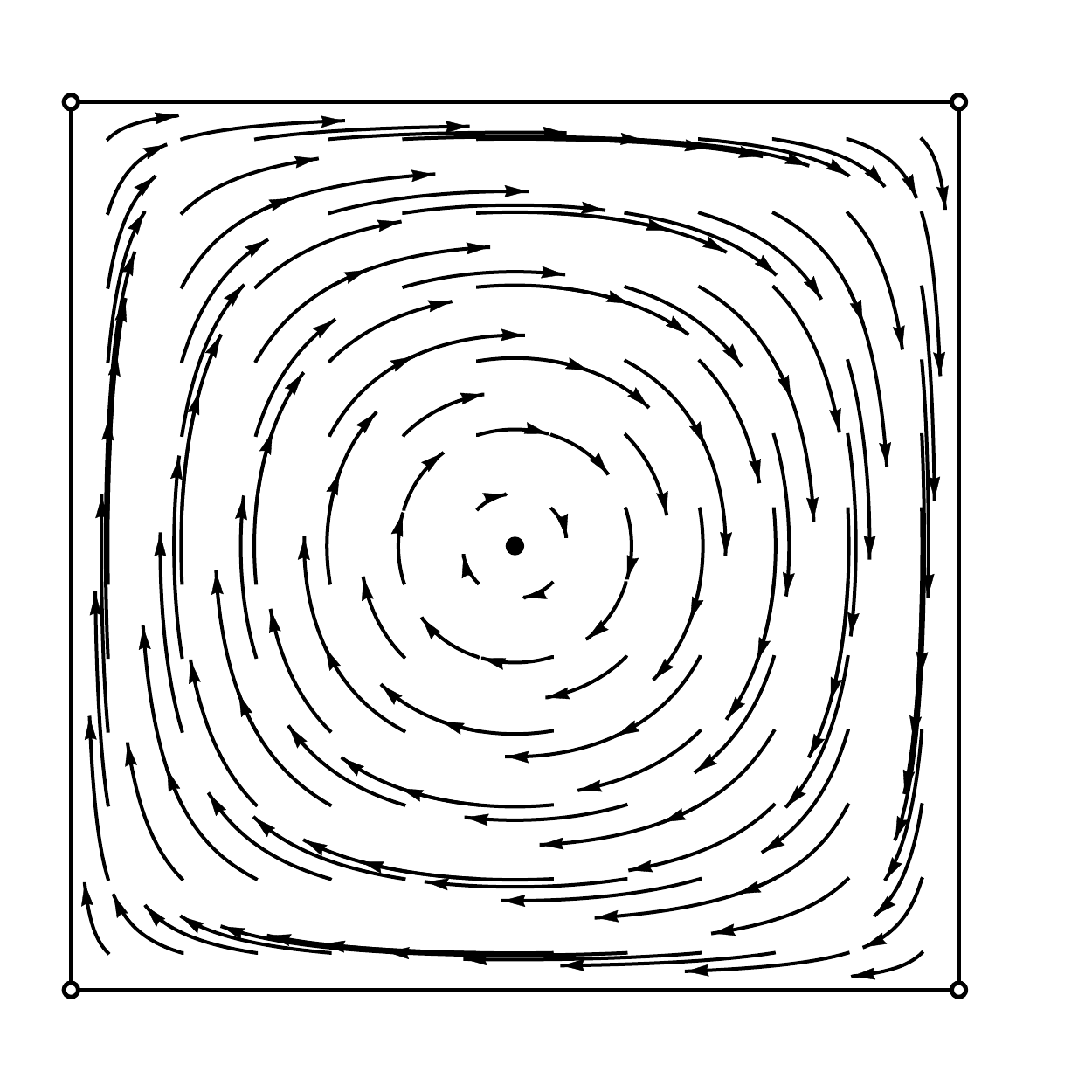}
        \caption{Matching Pennies game.}
        \label{fig:rd_matching_pennies}
    \end{subfigure}%
    \begin{subfigure}[t]{0.5\textwidth}
        \centering
        \begin{tabular}{cc|c|c|}
            & \multicolumn{1}{c}{} & \multicolumn{2}{c}{Player 2}\\
            & \multicolumn{1}{c}{} & \multicolumn{1}{c}{$A$}  & \multicolumn{1}{c}{$B$} \\\cline{3-4}
            \multirow{2}*{Player 1}  & $A$ & $(1,1)$ & $(-1,-1)$ \\\cline{3-4}
                                     & $B$ & $(-1,-1)$ & $(1,1)$ \\\cline{3-4}
        \end{tabular}
        \includegraphics[width=0.8\textwidth]{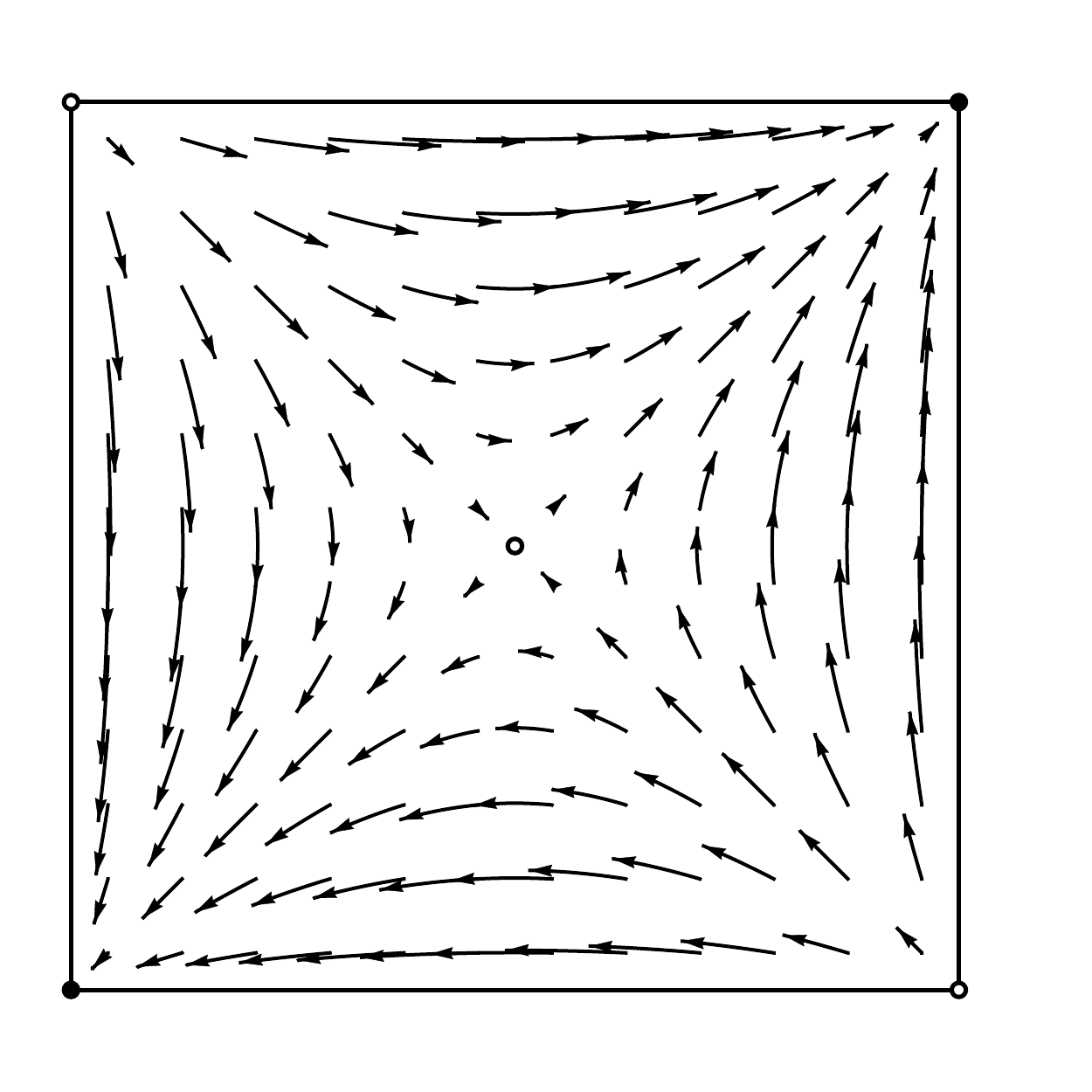}
        \caption{Partnership, coordination game.}
        \label{fig:rd_coordination_game}
    \end{subfigure}
    \caption{Canonical game payoffs and replicator dynamics trajectories. Each point encodes the probability assigned by the players to their first strategy. The matching pennies replicator dynamics have \emph{one} chain component, consisting of the whole domain.
    The coordination game dynamics have five chain components (corresponding to the fixed points, four in the corners and one mixed, which are recurrent by definition), as was formally shown by \cite{Papadimitriou:2016:NEC:2840728.2840757}.}
    \label{fig:rd_examples}
\end{figure}

Recall our overall objective: we would like to understand and evaluate multi-agent interactions using a detailed and realistic model of evolution, such as the replicator dynamics, in combination with a game-theoretic solution concept.
We start by acknowledging the fundamental incompatibility between dynamics and the Nash equilibrium: dynamics are often incapable of reaching the Nash equilibrium. 
However, instead of taking this as a disappointing flaw of dynamics, we see it instead as an opportunity to look for a novel solution concept that does not have the same limitations as Nash in relation to these dynamical systems.
We contemplate whether a plausible algorithmic solution concept can emerge by asking, {\em what do these dynamics converge to?}  
Our goal is to identify the non-trivial, irreducible behaviors of a dynamical system (i.e., behaviors that cannot be partitioned more finely in a way that respects the system dynamics) and thus provide a new solution concept --- an alternative to Nash's --- that will enable evaluation of of multi-agent interactions using the underlying evolutionary dynamics.
We carve a pathway towards this alternate solution concept by first considering the topology of dynamical systems.

\subsubsection{Topology of Dynamical Systems and Conley's Theorem}\label{sec:topology_of_dyn_sys}

Dynamicists and topologists have been working hard throughout the past century to find a way to extend to higher dimensions the benign yet complete limiting dynamical behaviors  described in \cref{sec:incompatibility_nash_dynamical} that one sees in two dimensions:
convergence to cycles (or equilibria as a special case).
That is, they have been trying to find an appropriate \emph{relaxation of the notion of a cycle} such that the two-dimensional picture is restored.  
After many decades of trial and error, new and intuitive conceptions of ``periodicity'' and ``cycles'' were indeed discovered, in the form of \emph{chain recurrent sets} and \emph{chain components}, which we define in this section.
These key ingredients form the foundation of Conley's Fundamental Theorem of Dynamical Systems, which in turn leads to the formulation of our Markov-Conley chain solution concept and associated multi-agent evaluation scheme.

\paragraph{Definitions}
To make our treatment formal, we require definitions of the following set of topological concepts, based primarily on the work of Conley \citep{conley1978isolated}. 
Our chain recurrence approach and the theorems in this section follow from \citep{alongi2007recurrence}.
We also provide the interested reader a general background on dynamical systems in Supplementary Material \ref{sec:background_dynamics} in an effort to make our work self-contained.

\begin{figure}[t]
    \centering
    \includegraphics[width=0.7\textwidth,page=1]{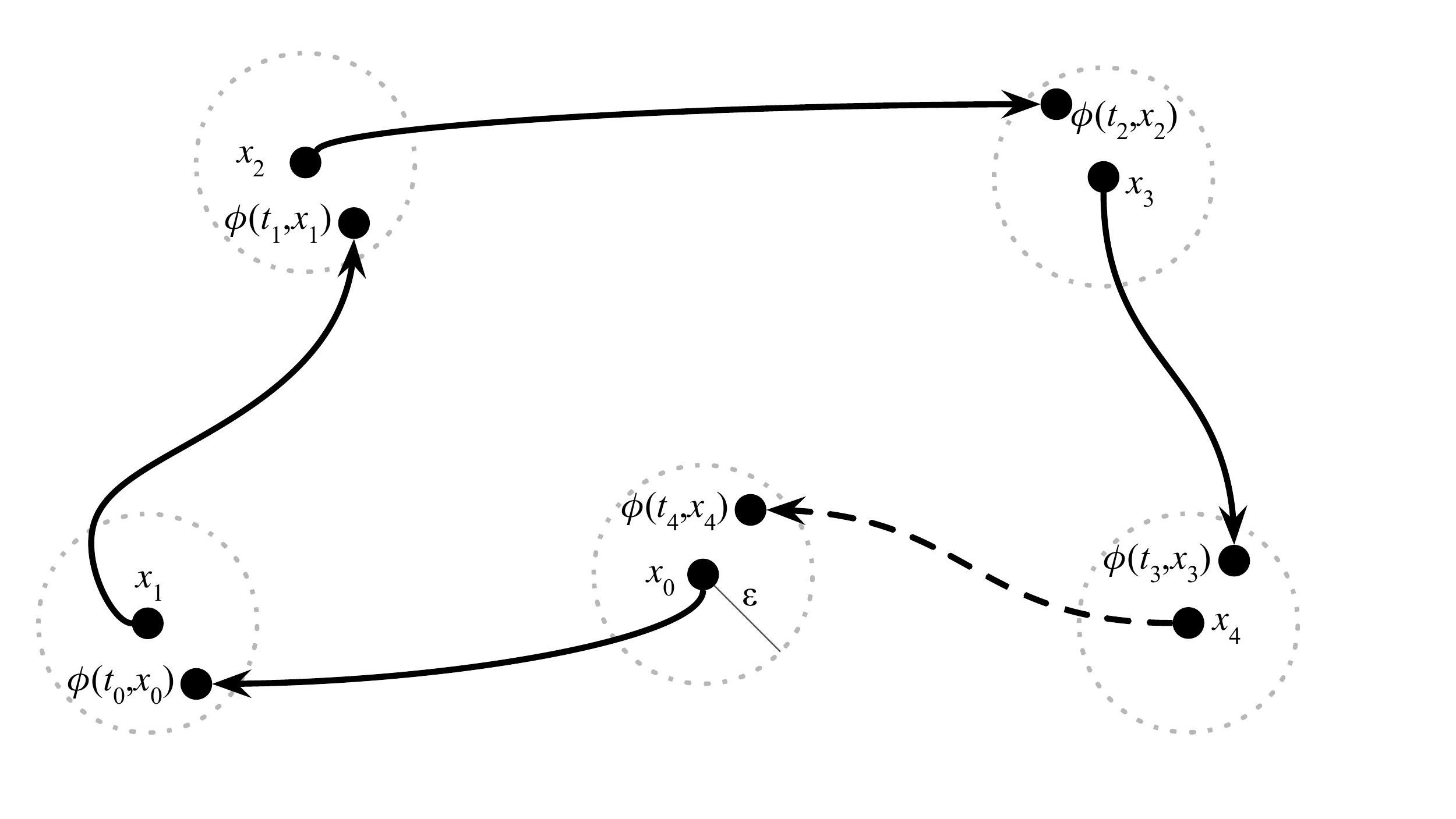}
    \caption{Topology of dynamical systems: an $(\epsilon,T)$-chain from $x_0$ to $x_4$ with respect to flow $\phi$ is exemplified here by the solid arrows and sequence of points $x_0,x_1,x_2,x_3,x_4$. If the recurrent behavior associated with point $x_0$ (indicated by the dashed arrow) holds for all $\epsilon>0$ and $T>0$, then it is a chain recurrent point.}
    \label{fig:epsilon_t_chain}
\end{figure}

\begin{definition}[Flow]
    A flow on a topological space $X$ is a continuous mapping $\phi:\real\times X \rightarrow X$ such that:
    \begin{description}
    \item{(i)} $\phi(t, \cdot): X \rightarrow X$ is a homeomorphism for each $t \in \real$.
    \item{(ii)} $\phi(0, x)=x$ for all $x \in X$.
    \item{(iii)} $\phi(s+t, x)= \phi(s,\phi(t,x))$ for all $s,t \in \real$ and all $x \in X$.
    \end{description}
\end{definition}
Depending on the context, we sometimes write $\phi^t(x)$  for $\phi(t,x)$ and denote a flow $\phi:\real\times X\rightarrow X$  by $\phi^t:X\rightarrow X$, where $t \in \real$.

\begin{definition}[$(\epsilon, T)$-chain]
    Let $\phi$ be a flow on a metric space $(X,d)$.  Given $\epsilon>0$, $T>0$, and $x,y\in X$, an $(\epsilon, T)$-chain from $x$ to $y$ with respect to $\phi$ and $d$ is a pair of finite sequences $x=x_0,x_1,\dots,x_{n-1},x_n=y$ in $X$ and $t_0,\dots,t_{n-1}$ in $[T,\infty)$, denoted together by $(x_0,\dots,x_n;t_0,\dots,t_{n-1})$ such that, \begin{equation}
        d(\phi^{t_i}(x_i),x_{i+1})<\epsilon,
    \end{equation}
 for $i=0,1,2,\dots,n-1$.
\end{definition}
Intuitively, an $(\epsilon,T)$ chain corresponds to the forward dynamics under flow $\phi$ connecting points $x,y \in X$, with slight perturbations allowed at each timestep (see \cref{fig:epsilon_t_chain} for an example). 
Note these deviations are allowed to occur at step-sizes $T$ bounded away from $0$, as otherwise the accumulation of perturbations could yield trajectories completely dissimilar to those induced by the original flow \cite{norton1995fundamental}.

\begin{definition}[Forward chain limit set]
Let $\phi$ be a flow on a metric space $(X,d)$.  The forward chain limit set of $x \in X$ with respect to $\phi$ and $d$ is the set,
\begin{equation}
    \Omega^{+}(\phi,x) = \bigcap_{\epsilon,T>0}\{y \in X~|~ \exists \text{ an } (\epsilon,T)\text{-chain from~} x \text{~to~} y \text{~with respect to~} \phi\}.
\end{equation}
\end{definition}

\begin{definition}[Chain equivalent points]
    Let $\phi$ be a flow on a metric space $(X,d)$. Two points $x, y \in X$ are chain equivalent with respect to $\phi$ and $d$ if $y \in \Omega^+(\phi,x)$ and  $x \in \Omega^+(\phi,y)$.
\end{definition}

\begin{definition}[Chain recurrent point]
    Let $\phi$ be a flow on a metric space $(X,d)$. 
    A point $x \in X$ is chain recurrent with respect to $\phi$ and $d$ if $x$ is chain equivalent to itself; i.e., there exists an $(\epsilon,T)$-chain connecting $x$ to itself for \emph{every} $\epsilon>0$ and $T>0$.
\end{definition}
Chain recurrence can be understood as an orbit with slight perturbations allowed at each time step (see \cref{fig:epsilon_t_chain}), which constitutes a new conception of ``periodicity" with a very intuitive explanation in Computer Science terms:
Imagine Alice is using a computer to simulate the trajectory of a dynamical system that induces a flow $\phi$.
Every time she computes a single iteration of the dynamical process with a minimum step-size $T$, there is a rounding error $\epsilon$.
Consider an adversary, Bob, who can manipulate the result at each timestep within the $\epsilon$-sphere of the actual result.
If, regardless of $\epsilon$ or minimum step-size $T$, Bob can persuade Alice that her dynamical system starting from a point $x$ returns back to this point in a finite number of steps, then this point is chain recurrent.


This new notion of ``periodicity" (i.e., \emph{chain recurrence}) leads to a corresponding notion of a ``cycle" captured in the concept of \emph{chain components}, defined below. 

\begin{definition}[Chain recurrent set]
     The chain recurrent set of flow $\phi$, denoted $\mathcal{R}(\phi)$, is the set of all chain recurrent points of $\phi$ .
\end{definition}




\begin{definition}[Chain equivalence relation $\sim$]
    Let the relation $\sim$ on $\mathcal{R}(\phi)$ be defined by $x \sim y$ if and only if $x$ is chain equivalent to $y$.
    This is an equivalence relation on the chain recurrent set $\mathcal{R}(\phi)$.
\end{definition}

\begin{definition}[Chain component]
    The equivalence classes in $\mathcal{R}(\phi)$ of the chain equivalence relation $\sim$ are called the chain components of $\phi$.
\end{definition}
In the context of the Alice and Bob example, chain components are the maximal sets $A$ such that for any two points $x,y \in A$, Bob can similarly persuade Alice that the flow $\phi$ induced by her dynamical system can get her from $x$ to $y$ in a finite number of steps. 
For example the matching pennies replicator dynamics (shown in \cref{fig:rd_matching_pennies}) have \emph{one} chain component, consisting of the entire domain; in the context of the Alice and Bob example, the cyclical nature of the dynamics throughout the domain means that Bob can convince Alice that any two points may be connected using a series of finite perturbations $\epsilon$, for all $\epsilon>0$ and $T>0$.
On the other hand, the coordination game replicator dynamics (shown in \cref{fig:rd_coordination_game}) has five chain components corresponding to the fixed points (which are recurrent by definition): four in the corners, and one mixed strategy fixed point in the center.
For a formal treatment of these examples, see \cite{Papadimitriou:2016:NEC:2840728.2840757}.

Points in each chain component are transitive by definition.
Naturally, the chain recurrent set $\mathcal{R}(\phi)$ can be partitioned into a (possibly infinitely many) number of chain components.
In other words, chain components cannot be partitioned more finely in a way that respects the system dynamics; they constitute the fundamental topological concept needed to define the irreducible behaviors we seek. 



\paragraph{Conley's Theorem}
We now wish to characterize the role of chain components in the long-term dynamics of systems, such that we can evaluate the limiting behaviors of multi-agent interactions using our evolutionary dynamical models. 
Conley's Fundamental Theorem of Dynamical Systems leverages the above perspective on ``periodicity" (i.e., chain recurrence) and ``cycles" (i.e., chain components) to decompose the domain of any dynamical system into two classes: 1) chain components, and 2) transient points.
To introduce Conley's theorem, we first need to define the notion of a complete Lyapunov function. 
The game-theoretic analogue of this idea is the notion of a potential function in \emph{potential games}.
In a potential game, as long as we are not at an equilibrium, the potential is strictly decreasing and guiding the dynamics towards the standard game-theoretic solution concept, i.e., equilibria \cite{monderer:96}. 
The notion of a complete Lyapunov function switches the target solution concept from equilibria to chain recurrent sets. 
More formally:
\begin{definition}[Complete Lyapunov function]
Let $\phi$ be a flow on a metric space $(X,d)$. A complete Lyapunov function for $\phi$ is a continuous function $\gamma:X\rightarrow \real$ such that,
\end{definition}
    \begin{enumerate}
        \item $\gamma(\phi^t(x))$ is a strictly decreasing function of $t$ for all $x \in X\setminus \mathcal{R}(\phi)$,
        \item for all $x,y \in \mathcal{R}(\phi)$ the points $x$, $y$ are in the same chain component if and only if $\gamma(x)=\gamma(y)$,
        \item $\gamma(\mathcal{R}(\phi))$ is nowhere dense. 
    \end{enumerate}

\noindent Conley's Theorem, the important result in topology that will form the basis of our solution concept and ranking scheme, is as follows:
\begin{theorem}[Conley's Fundamental Theorem of Dynamical Systems \cite{conley1978isolated}, informal statement]
    The domain of any dynamical system can be decomposed into its (possibly infinitely many) chain components; the remaining points are transient, each led to the recurrent part by a Lyapunov function.
\end{theorem}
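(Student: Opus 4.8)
The decomposition asserted here has two halves, and only one requires work. The partition of $\mathcal{R}(\phi)$ into chain components is immediate: since $\sim$ was already shown to be an equivalence relation on $\mathcal{R}(\phi)$, its equivalence classes are by definition the chain components and they partition $\mathcal{R}(\phi)$. The substantive claim is the existence of a \emph{complete Lyapunov function} $\gamma$ satisfying the three properties listed above, since such a $\gamma$ simultaneously certifies that every point off $\mathcal{R}(\phi)$ is transient (property 1) and that $\gamma$ collapses each chain component to a single value while separating distinct components (property 2). The plan is therefore to reduce the whole theorem to constructing $\gamma$, working throughout on a compact metric space $(X,d)$ — the relevant setting here, since the strategy simplex is compact and Conley's theorem genuinely needs compactness.

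The engine of the construction is the theory of attractor--repeller pairs. For an attractor $A$ (a set possessing a neighborhood all of whose forward orbits converge to $A$), let $A^*$ denote its dual repeller, the set of points whose forward orbit avoids the basin of $A$. First I would establish the key characterization $\mathcal{R}(\phi) = \bigcap_{A} (A \cup A^*)$, the intersection ranging over all attractors: a point is chain recurrent precisely when, for every attractor, it lies either in the attractor or in its dual repeller. For each such pair I would then build an elementary Lyapunov function $\gamma_A \colon X \to [0,1]$ that is continuous, equals $0$ on $A$ and $1$ on $A^*$, and is strictly decreasing along any orbit segment lying outside $A \cup A^*$, obtained by normalizing a function that measures how strongly $x$ is attracted to $A$ against how strongly it is repelled toward $A^*$.

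To assemble a single function I would invoke second countability of the compact metric space $X$, which furnishes a \emph{countable} family of attractors $\{A_n\}_{n\ge 1}$ whose pairs already cut out the chain recurrent set, i.e. $\bigcap_n (A_n \cup A_n^*) = \mathcal{R}(\phi)$. Setting $\gamma = \sum_{n\ge 1} 3^{-n}\gamma_{A_n}$, the geometric weights force uniform convergence, so $\gamma$ is continuous. Each summand is nonincreasing along orbits, giving property 1 once one checks that for any $x \notin \mathcal{R}(\phi)$ some index $n$ has $x \notin A_n \cup A_n^*$, making that term \emph{strictly} decreasing at $x$. Property 2 follows because each $\gamma_{A_n}$ is constant (equal to $0$ or $1$) on every chain component, so $\gamma$ is constant on components, while distinct components are separated by at least one index $n$; the weights $3^{-n}$ are chosen precisely so that such a separation is never cancelled in the sum.

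The hard part is twofold. First, the characterization $\mathcal{R}(\phi) = \bigcap_A (A \cup A^*)$ is the deep structural input: the inclusion $\subseteq$ is reasonably direct (an attractor traps nearby orbits, so small $(\epsilon,T)$-chains cannot escape its basin back to a point outside $A\cup A^*$), but the reverse — that a point lying in $A$ or $A^*$ for \emph{every} attractor must admit $(\epsilon,T)$-chains back to itself for all $\epsilon,T$ — requires constructing an attractor directly from the chain-reachable set of a non-recurrent point, and this is where the topology of the flow truly enters. Second, property 3 (nowhere density of $\gamma(\mathcal{R}(\phi))$) is delicate: one must show the countably many component-values accumulate only on a set with empty interior, which is exactly what the rapidly decaying weights $3^{-n}$ are engineered to guarantee. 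I would isolate these two points as the core lemmas and let the remaining verifications follow the framework of \citep{alongi2007recurrence}.
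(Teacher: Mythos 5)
The paper offers no proof of this statement --- it is an informal restatement of Conley's classical theorem, deferred entirely to \cite{conley1978isolated} and the expositions it draws on --- so the only meaningful check is against the canonical argument, and your sketch is exactly that argument: the characterization $\mathcal{R}(\phi)=\bigcap_{A}(A\cup A^{*})$ over attractor--repeller pairs, countability of the attractors of a flow on a compact metric space, and the uniformly convergent base-$3$ weighted sum whose image on $\mathcal{R}(\phi)$ lies in a Cantor-like, hence nowhere dense, set. Your outline is correct and correctly isolates the two genuinely hard inputs (the attractor-theoretic characterization of chain recurrence, and the fact that distinct chain components are separated by some attractor--repeller pair, which is what makes Property~2 work in both directions), so nothing further is needed beyond filling in those lemmas as in Conley's monograph.
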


The powerful implication of Conley's Theorem is that complete Lyapunov functions always exist.
\begin{theorem}[\cite{conley1978isolated}]
    Every flow on a compact metric space has a complete Lyapunov function.
\end{theorem}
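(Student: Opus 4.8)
The plan is to prove existence by building $\gamma$ as a carefully weighted superposition of elementary Lyapunov functions, one for each \emph{attractor--repeller pair} of the flow. First I would develop the machinery the excerpt has not yet introduced: an \emph{attractor} $A$ is a compact invariant set admitting a neighborhood $U$ with $\bigcap_{t>0}\phi^t(\overline{U}) = A$, and its dual \emph{repeller} is $A^{\ast} = \{x \in X : \omega(x) \cap A = \emptyset\}$, where $\omega(x)$ is the (ordinary) forward limit set. The key structural lemma I would establish and then use as a black box is the characterization
\[
  \mathcal{R}(\phi) = \bigcap_{A \text{ attractor}} (A \cup A^{\ast}),
\]
together with its refinement that two chain recurrent points $x,y$ lie in the same chain component if and only if, for every attractor $A$, they are either both in $A$ or both in $A^{\ast}$. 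This reduces everything about $\mathcal{R}(\phi)$ and the equivalence $\sim$ to the combinatorics of which side of each attractor--repeller pair a point sits on.

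Second, for a single pair $(A, A^{\ast})$ I would construct an elementary function $g_A : X \to [0,1]$ equal to $0$ on $A$, equal to $1$ on $A^{\ast}$, and strictly decreasing along every orbit lying in $X \setminus (A \cup A^{\ast})$. The standard recipe is to start from the continuous separating function $\tilde g(x) = d(x,A)/(d(x,A)+d(x,A^{\ast}))$, which already vanishes on $A$ and equals one on $A^{\ast}$ but need not be monotone, and then to ``average it forward'' under the flow --- e.g. replacing $\tilde g$ by $\sup_{t \ge 0}\tilde g(\phi^t(x))$ or by $\int_0^{\infty} e^{-s}\tilde g(\phi^s(x))\,ds$ --- and verifying, via compactness and the attracting-neighborhood property, that the result is continuous, nonincreasing along orbits, and strictly monotone off $A \cup A^{\ast}$.

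Third, since $X$ is a compact metric space it is second countable, so the family of attractors (each determined by a basis element serving as its attracting neighborhood) contains a countable subfamily $\{A_n\}_{n \ge 1}$ realizing the same intersection and the same chain-component separation as the full family. I would then set
\[
  \gamma(x) = \sum_{n=1}^{\infty} \frac{2\,g_{A_n}(x)}{3^{\,n}}.
\]
The series converges uniformly (each term is bounded by $2\cdot 3^{-n}$), so $\gamma$ is continuous. For condition~(1), any $x \notin \mathcal{R}(\phi)$ lies outside $A_n \cup A_n^{\ast}$ for some $n$, so $g_{A_n}$ strictly decreases there while every other $g_{A_m}$ is nonincreasing, making $\gamma(\phi^t(x))$ strictly decreasing. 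For condition~(2), the chain-component characterization forces $g_{A_n}(x)=g_{A_n}(y)$ for all $n$ exactly when $x \sim y$, and the $0/1$ values the $g_{A_n}$ take on $\mathcal{R}(\phi)$ give the converse.

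Finally, the delicate point --- and the step I expect to be the main obstacle --- is condition~(3), that $\gamma(\mathcal{R}(\phi))$ is nowhere dense. This is precisely why the coefficients are taken to be $2/3^{n}$ rather than arbitrary summable weights: on $\mathcal{R}(\phi)$ each $g_{A_n}$ takes only the values $0$ or $1$, so $\gamma$ maps the chain recurrent set into $\{\sum_n 2\varepsilon_n/3^{n} : \varepsilon_n \in \{0,1\}\}$, a subset of the standard ternary Cantor set, which is nowhere dense in $\real$. Making this rigorous requires confirming that the $g_{A_n}$ are genuinely $\{0,1\}$-valued on $\mathcal{R}(\phi)$ (immediate from $\mathcal{R}(\phi) \subseteq \bigcap_n (A_n \cup A_n^{\ast})$) and, more subtly, that the monotonicity argument for condition~(1) survives passage to the infinite sum: one must rule out cancellation by exploiting that no individual term ever increases along an orbit, so the strict decrease of the single active term is never undone.
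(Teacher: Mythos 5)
The paper does not prove this statement at all --- it is quoted directly from Conley's monograph \cite{conley1978isolated} and used as a black box --- so there is no in-paper argument to compare against. What you have written is a faithful reconstruction of the classical proof (essentially the one in Conley's original text and in Norton's exposition \cite{norton1995fundamental}, which the paper also cites): the attractor--repeller characterization $\mathcal{R}(\phi)=\bigcap_A (A\cup A^{\ast})$, the elementary Lyapunov function for a single pair, countability of the family of attractors via second countability of a compact metric space, and the weighted sum with coefficients $2/3^{n}$ precisely so that $\gamma(\mathcal{R}(\phi))$ lands in the ternary Cantor set. All four structural steps are the right ones, and your identification of condition (3) as the reason for the specific weights is exactly the point of the classical construction. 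One technical caveat: you present the forward supremum and the exponentially weighted integral as alternative ways to monotonize $\tilde g$, but neither suffices alone --- the raw integral $\int_0^{\infty}e^{-s}\tilde g(\phi^s(x))\,ds$ need not be monotone along orbits when $\tilde g$ is not, and the supremum alone is only nonincreasing, not strictly decreasing off $A\cup A^{\ast}$. The standard recipe composes the two: first take $\bar g(x)=\sup_{t\ge 0}\tilde g(\phi^t(x))$ to get monotonicity (and one must separately argue its continuity using the attracting neighborhood), then integrate $\bar g$ against $e^{-s}$ to get strictness. With that correction, and noting that invariance of $A$ and $A^{\ast}$ keeps the entire orbit of any $x\notin A\cup A^{\ast}$ outside both sets (so the strictly decreasing term stays active for all time), your outline is a correct proof strategy.
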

In other words, the space $X$ is decomposed into points that are chain recurrent and points that are led to the chain recurrent part in a gradient-like fashion with respect to a Lyapunov function that is guaranteed to exist.
In game-theoretic terms, every game is a ``potential" game, if only we change our solution concept from equilibria to chain recurrent sets.

\subsubsection{Asymptotically Stable Sink Chain Components}\label{sec:sink_chain_components}
Our objective is to investigate the likelihood of an agent being played in a $K$-wise meta-game by using a detailed and realistic model of multi-agent evolution, such as the replicator dynamics.
While chain components capture the limiting behaviors of dynamical systems (in particular, evolutionary dynamics that we seek to use for our multi-agent evaluations), they can be infinite in number (as mentioned in \cref{sec:topology_of_dyn_sys}); 
it may not be feasible to compute or use them in practice within our evaluation scheme.
To resolve this, we narrow our focus onto a particular class of chain components called \emph{asymptotically stable sink chain components}, which we define in this section.
Asymptotically stable sink chain components are a natural target for this investigation as they encode the possible ``final" long term system;
by contrast, we can escape out of other chain components via infinitesimally small perturbations. 
We prove in the subsequent section (\cref{thm:MCC}, specifically) that, in the case of replicator dynamics and related variants, asymptotically stable sink chain components are finite in number;
our desired solution concept is obtained as an artifact of this proof.



We proceed by first showing that the chain components of a dynamical system can be partially ordered by reachability through chains, and we focus on the {\em sinks} of this partial order. We start by defining a partial order on the set of chain components:
\begin{definition}
    Let $\phi$ be a flow on a metric space and $A_1, A_2$ be chain components of the flow.
    Define the relation $A_1 \leq_C A_2$ to hold if and only if there exists $x \in A_2$ and $y \in A_1$ such that $y \in \Omega^{+}(\phi,x)$.
\end{definition}
Intuitively, $A_1 \leq_C A_2$, if we can reach $A_1$ from $A_2$ with $(\epsilon,T)$-chains for arbitrarily small $\epsilon$ and $T$.

\begin{restatable}[Partial order on chain components]{theorem}{PartialOrderChainComponents}
\label{thm:partial_order_chain_components}
    Let $\phi$ be a flow on a metric space and $A_1, A_2$ be chain components of the flow. Then the relation defined by  $A_1 \leq_C A_2$ is a partial order.
\end{restatable}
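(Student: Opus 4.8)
The plan is to verify the three defining properties of a partial order --- reflexivity, antisymmetry, and transitivity --- using as the single workhorse the fact that reachability through arbitrarily fine chains composes. First I would record the elementary observation that the relation ``$y \in \Omega^{+}(\phi,x)$'' is itself transitive: if $y \in \Omega^{+}(\phi,x)$ and $z \in \Omega^{+}(\phi,y)$, then for any fixed $\epsilon>0$ and $T>0$ I take an $(\epsilon,T)$-chain from $x$ to $y$ and an $(\epsilon,T)$-chain from $y$ to $z$ and concatenate them at their common endpoint $y$. The resulting sequence is again an $(\epsilon,T)$-chain, since each constraint $d(\phi^{t_i}(x_i),x_{i+1})<\epsilon$ is inherited from one of the two pieces and every dwell time still lies in $[T,\infty)$; as this holds for all $\epsilon,T$, we get $z \in \Omega^{+}(\phi,x)$. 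I would also use repeatedly that any two points $u,v$ in a single chain component satisfy both $v \in \Omega^{+}(\phi,u)$ and $u \in \Omega^{+}(\phi,v)$ --- this is precisely the chain-equivalence relation $\sim$ whose classes are, by definition, the chain components.

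Reflexivity of $\leq_C$ is then immediate: every point of a chain component is chain recurrent, hence chain equivalent to itself, so choosing any $x \in A_1$ and setting $y=x$ witnesses $A_1 \leq_C A_1$. For transitivity, suppose $A_1 \leq_C A_2$ and $A_2 \leq_C A_3$. Unpacking the definition, the first yields $p \in A_2$, $q \in A_1$ with $q \in \Omega^{+}(\phi,p)$, and the second yields $r \in A_3$, $s \in A_2$ with $s \in \Omega^{+}(\phi,r)$. Since $p,s$ both lie in $A_2$ they are chain equivalent, so $p \in \Omega^{+}(\phi,s)$; composing the reachabilities $r \to s \to p \to q$ via the observation above gives $q \in \Omega^{+}(\phi,r)$ with $r \in A_3$ and $q \in A_1$, i.e.\ $A_1 \leq_C A_3$.

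The crux --- and the step I expect to require the most care --- is antisymmetry, since it is where the two components must be forced to coincide rather than merely related. Assume $A_1 \leq_C A_2$ and $A_2 \leq_C A_1$, giving witnesses $p \in A_2$, $q \in A_1$ with $q \in \Omega^{+}(\phi,p)$, and $r \in A_1$, $s \in A_2$ with $s \in \Omega^{+}(\phi,r)$. I would then show that \emph{every} $a \in A_1$ is chain equivalent to \emph{every} $b \in A_2$, which by the equivalence-class description of chain components forces $A_1 = A_2$. Indeed, for the direction $b \in \Omega^{+}(\phi,a)$ I chain $a \to r$ (within $A_1$), $r \to s$ (given), and $s \to b$ (within $A_2$); for the direction $a \in \Omega^{+}(\phi,b)$ I chain $b \to p$ (within $A_2$), $p \to q$ (given), and $q \to a$ (within $A_1$). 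Applying the composition observation to each of these three-step chains yields $a \sim b$, so $a$ and $b$ share a chain component and hence $A_1=A_2$. The only genuinely delicate point throughout is justifying the concatenation of $(\epsilon,T)$-chains at their junction; everything else is bookkeeping of which witness lives in which component and exploiting that within-component reachability goes in both directions.
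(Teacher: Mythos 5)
Your proposal is correct and follows essentially the same route as the paper's proof: verifying reflexivity, antisymmetry, and transitivity by composing $(\epsilon,T)$-chains and exploiting chain transitivity within each component, with antisymmetry concluded by showing all points of $A_1\cup A_2$ are chain equivalent and hence lie in a single equivalence class. The only cosmetic difference is that the paper phrases this last step via the property that a chain transitive set is contained in a unique chain component, while you appeal directly to the equivalence-class definition; you are also more explicit than the paper about justifying the concatenation of chains, which is a welcome addition rather than a divergence.
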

\begin{proof}
    \ProofInSM{{sec:proof_partial_order_chain_components}}
\end{proof}

We will be focusing on  minimal elements of this partial order, i.e., chain components $A$ such that there does not exist any chain component $B$ such that $B\leq_C A$.
We call such chain components \emph{sink chain components}.
\begin{definition}[Sink chain components]
    A chain component $A$ is called a sink chain component if there does not exist any  chain component $B\neq A$ such that $B\leq_C A$.   
\end{definition}

We can now define the useful notion of asymptotically stable sink chain components, which relies on the notions of Lyapunov stable, asymptotically stable, and attracting sets.
\begin{definition}[Lyapunov stable set]
    Let $\phi$ be a flow on a metric space $(X,d)$. A set $A \subset X$ is Lyapunov stable if for every neighborhood $O$ of $A$ there exists a neighborhood $O'$ of $A$ such that every trajectory that starts in $O'$ is contained in $O$; i.e., if $x\in O'$ then $\phi(t,x)\in O$ for all $t\geq0$. 
\end{definition}

\begin{definition}[Attracting set]
    Set $A$ is attracting if there exists a neighborhood $O$ of $A$ such that every trajectory starting in $O$ converges to $A$.
\end{definition}

\begin{definition}[Asymptotically stable set]
    A set is called asymptotically stable if it is both Lyapunov stable and attracting. 
\end{definition}

\begin{definition}[Asymptotically stable sink chain component]
    Chain component $A$ is called an asymptotically stable sink chain component if it is both  a sink chain component and an asymptotically stable set.
\end{definition}

\subsubsection{Markov-Conley chains}\label{subsec:MCC}
Although we wish to study asymptotically stable sink chain components, it is difficult to do so theoretically as we do not have an exact characterization of their geometry and the behavior of the dynamics inside them.
This is a rather difficult task to accomplish even experimentally. 
Replicator dynamics can be chaotic both in small and large games \citep{Sato02042002,galla2013complex}.
Even when their behavior is convergent for all initial conditions, the resulting equilibrium can be hard to predict and can be highly sensitive to initial conditions \cite{panageas2016average}.   
It is, therefore, not clear how to extract any meaningful information even from many trial runs of the dynamics.
These issues are exacerbated especially when games involve more than three or four strategies, where even visualization of trajectories becomes difficult.
While studies of these dynamics have been conducted for these low-dimensional cases \cite{bomze1983lotka,Bomze95}, very little is known about the geometry and topology of the limit behavior of replicator dynamics for general games, making it hard to even make informed guesses about whether the dynamics have, for practical reasons, converged to an invariant subset (i.e., a sink chain component). 

Instead of studying the actual dynamics, a computationally amenable alternative is to use a discrete-time discrete-space approximation with similar limiting dynamics, but which can be directly and efficiently analyzed.
We will start off by the most crude (but still meaningful) such approximations: a set of Markov chains whose state-space is the set of pure strategy profiles of the game.
We refer to each of these Markov chains as a \emph{Markov-Conley chain}, and prove in \cref{thm:MCC} that a finite number of them exist in any game under the replicator dynamics (or variants thereof).

Let us now formally define the Markov-Conley chains of a game, which relies on the notions of the response graph of a game and its sink strongly connected components.

\begin{definition}[Strictly and weakly better response]
    Let $s_i, s_j \in \prod_k S^k$ be any two pure strategy profiles of the game, which differ in the strategy of a single player $k$.
    Strategy $s_j$ is a strictly (respectively, weakly) better response than $s_i$ for player $k$ if her payoff at $s_j$ is larger than (respectively, at least as large as) her payoff at $s_i$.
\end{definition}

\begin{definition}[Response graph of a game]
    The response graph of a game $G$ is a directed graph whose vertex set coincides with the set of pure strategy profiles of the game, $\prod_k S^k$.
    Let $s_i, s_j \in \prod_k S^k$ be any two pure strategy profiles of the game.
    We include a directed edge from $s_i$ to $s_j$ if $s_j$ is a weakly better response for player $k$ as compared to $s_i$.
\end{definition}

\begin{definition}[Strongly connected components]
    The strongly connected components of a directed graph are the maximal subgraphs wherein there exists a path between each pair of vertices in the subgraph. 
\end{definition}

\begin{definition}[Sink strongly connected components]
    The sink strongly connected components of a directed graph are the strongly connected components with no out-going edges.
\end{definition}

The response graph of a game has a finite number of sink strongly connected components.
If such a component is a singleton, it is a pure Nash equilibrium by definition.

\begin{definition}[Markov-Conley chains (MCCs) of a game]\label{def:mcc}
    A Markov-Conley chain of a game $G$ is an irreducible Markov chain, the state space of which is a sink strongly connected component of the response graph associated with $G$.
    Many MCCs may exist for a given game $G$.
    In terms of the transition probabilities out of a node $s_i$ of each MCC, a canonical way to define them is as follows: with some probability, the node self-transitions. 
    The rest of the probability mass is split between all strictly and weakly improving responses of all players. Namely, the probability of strictly improving responses for all players are set equal to each other, and transitions between strategies of equal payoff happen with a smaller probability also equal to each other for all players. 
    

\end{definition}
When the context is clear, we sometimes overload notation and refer to the set of pure strategy profiles in a sink strongly connected component (as opposed to the Markov chain over them) as an MCC.
The structure of the transition probabilities introduced in \cref{def:mcc} has the advantage that it renders the MCCs invariant under arbitrary positive affine transformations of the payoffs; i.e., the resulting theoretical and empirical insights are insensitive to such transformations, which is a useful desideratum for a game-theoretic solution concept.
There may be alternative definitions of the transition probabilities that may warrant future exploration.

MCCs can be understood as a discrete approximation of the chain components of continuous-time dynamics (hence the connection to Conley's Theorem). 
The following theorem formalizes this relationship, and establishes finiteness of MCCs:
\begin{restatable}{theorem}{TheoremMCC}
\label{thm:MCC}
    Let $\phi$ be the replicator flow when applied to a $K$-person game.
    The number of asymptotically stable sink chain components is finite. Specifically, every asymptotically stable sink chain component contains at least one MCC; each MCC is contained in exactly one chain component. 
\end{restatable}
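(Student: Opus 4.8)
The plan is to exploit the fact that the $1$-skeleton of the strategy simplex (its vertices and edges) is invariant under the replicator flow $\phi$, and that on this skeleton $\phi$ is completely described by the response graph of the game. Concretely, I would first record three elementary observations: (i) every vertex of the simplex is a pure strategy profile and a fixed point of $\phi$, hence trivially chain recurrent; (ii) each edge joins two profiles differing in a single player $k$'s strategy, is forward-invariant, and the one-dimensional restriction of $\phi$ to it is monotone toward the higher-payoff endpoint when the payoffs of $k$ differ, and is a segment of fixed points when they are equal; and (iii) consequently the direction of motion along an edge agrees with the direction of the corresponding edge in the response graph. This identifies the combinatorics of the response graph with the reachability of $\phi$ restricted to the skeleton.

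With this in hand, the statement that \emph{each MCC is contained in exactly one chain component} is the easier half. Let $Q$ be a sink strongly connected component of the response graph, i.e., the state space of an MCC. Its vertices are fixed points, so all lie in $\mathcal{R}(\phi)$. For any directed edge $s_i \to s_j$ inside $Q$ I would build an $(\epsilon,T)$-chain from $s_i$ to $s_j$: if $s_j$ is a strictly better response, perturb off the fixed point $s_i$ by at most $\epsilon$ into the edge and let the monotone flow carry the trajectory within $\epsilon$ of $s_j$ in some time $\geq T$; if the payoffs are equal, traverse the segment of fixed points by finitely many $\epsilon$-sized jumps. Since $Q$ is strongly connected, any two of its vertices are joined by such directed paths in both directions, so they are chain equivalent and lie in a common chain component. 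Because chain components are the disjoint equivalence classes of $\sim$, this common component is unique, giving the claim.

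For the harder direction, that \emph{every asymptotically stable sink chain component contains at least one MCC}, let $A$ be such a component. The key leverage is that $A$ is minimal for the partial order $\leq_C$ of \cref{thm:partial_order_chain_components}: no other chain component is reachable from $A$ by $(\epsilon,T)$-chains. I would argue in two steps. First, show that $A$ contains a vertex $v$ of the simplex. Second, suppose $v$ has a strictly better response $w$ (a strict out-edge in the response graph); then observation (ii) yields an $(\epsilon,T)$-chain from $v$ to $w$, so $w \in \Omega^{+}(\phi,v)$, and since $A$ is a sink the chain component of $w$ cannot be distinct from $A$, forcing $w \in A$. Iterating the better-response relation, every vertex reachable from $v$ along response-graph edges lies in $A$; since the finite response graph's condensation is a DAG, these reachable vertices include an entire sink strongly connected component, i.e., an MCC, contained in $A$.

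The main obstacle is the first step above: proving that an asymptotically stable sink chain component must reach the $0$-skeleton (contain a vertex). I would attempt this by descending through faces: on the full-dimensional simplex the chain-recurrent behavior of a sink component cannot escape outward, and I would use Lyapunov stability together with the invariance of each boundary face (on which $\phi$ is again a replicator flow) to push the analysis onto lower-dimensional faces, recursing until a vertex is reached. This recursion is delicate precisely because the replicator flow may possess attracting interior equilibria or interior recurrent sets, so care is needed to rule out (or appropriately interpret) asymptotically stable sink components that sit entirely in the relative interior of a face; this is where the geometry of $\phi$, rather than pure combinatorics, enters. Finally, \emph{finiteness} (the first assertion) follows as a corollary: by the two inclusions just established, assigning to each asymptotically stable sink chain component an MCC it contains is well defined, and by the uniqueness in the easy half this assignment is injective into the set of sink strongly connected components, which is finite because the response graph has finitely many vertices.
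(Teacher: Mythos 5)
Your architecture is the same as the paper's: identify the flow on the $1$-skeleton with the response graph, show that a sink chain component containing a vertex must absorb every vertex reachable by weakly-better responses (hence an entire sink strongly connected component, i.e., an MCC), show that each MCC is chain transitive and therefore sits inside a unique chain component, and deduce finiteness from the injectivity of the resulting assignment. Those parts of your argument are correct and essentially coincide with the paper's Lemmas on sink chain recurrent points and reachability. However, the step you yourself flag as ``the main obstacle'' --- that an asymptotically stable sink chain component must contain a vertex of the simplex --- is precisely where your proposal stops short, and the face-by-face recursion you sketch does not close the gap: Lyapunov stability plus invariance of the boundary faces alone cannot rule out an asymptotically stable component lying entirely in the relative interior of some face, because nothing you have invoked distinguishes the replicator flow from an arbitrary flow, and arbitrary flows certainly admit interior attractors.

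The missing idea is volume preservation. The replicator dynamics restricted to the interior of the state space (and, by the same token, to the relative interior of any face, where it is again a replicator system for the corresponding subgame) is conjugate to a divergence-free system, so by Liouville's formula it preserves volume there. An asymptotically stable set attracts an entire neighborhood, hence contracts volume in that neighborhood --- a contradiction if the set were contained in the (relative) interior. This forces every asymptotically stable set to meet the boundary of the simplex; intersecting with an invariant boundary face yields a set that is still asymptotically stable for the restricted (again replicator) dynamics, and the induction on dimension now terminates at a vertex. Without this argument your descent through faces has no engine, and the theorem's central claim --- and with it finiteness --- is not established. Everything downstream of that vertex in your write-up is fine.
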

\begin{proof}
    \ProofInSM{{sec:proof_MCC}}
\end{proof}
The notion of MCCs is thus used as a stepping stone, a computational handle that aims to mimic the long term behavior of replicator dynamics in general games.
Similar results to \cref{thm:MCC} apply for several variants of replicator dynamics~\citep{Weibull97} as long as the dynamics are volume preserving in the interior of the state space, preserve the support of mixed strategies, and the dynamics act myopically in the presence of two strategies/options with fixed payoffs (i.e., if they have different payoffs converge to the best, if they have the same payoffs remain invariant). 

\subsection{From Markov-Conley chains to the Discrete-time Macro-model}\label{sec:mcc_to_discrete}
The key idea behind the ordering of agents we wish to compute is that the evolutionary fitness/performance of a specific strategy should be reflected by how often it is being chosen by the system/evolution.
We have established the solution concept of Markov-Conley chains (MCCs) as a discrete-time sparse-discrete-space analogue of the continuous-time replicator dynamics, which capture these long-term recurrent behaviors for general meta-games (see \cref{thm:MCC}).
MCCs are attractive from a computational standpoint: they can be found efficiently in all games by computing the sink strongly connected components of the response graph, addressing one of the key criticisms of Nash equilibria.
However, similar to Nash equilibria, even simple games may have many MCCs (e.g., five in the coordination game of \cref{fig:rd_coordination_game}). 
The remaining challenge is, thus, to solve the MCC selection problem.

One of the simplest ways to resolve the MCC selection issue is to introduce noise in our system and study a stochastically perturbed version, such that the overall Markov chain is irreducible and therefore has a unique stationary distribution that can be used for our rankings.
Specifically, we consider the following stochastically perturbed model: we choose an agent $k$ at random, and, if it is currently playing strategy $s^k_i$, we choose one of its strategies $s^k_j$ at random and set the new system state to be $\epsilon (s^k,s^{-k}) + (1-\epsilon) (s^k_j,s^{-k})$. 
Remarkably, these perturbed dynamics correspond closely to the macro-model introduced in \cref{sec:multipop_model} for a particularly large choice of ranking-intensity value $\alpha$:
\begin{restatable}{theorem}{InfinitePopAlpha}\label{thm:inf_pop_alpha}
    In the limit of infinite ranking-intensity $\alpha$, the Markov chain associated with the generalized multi-population model introduced in \cref{sec:multipop_model} coincides with the MCC.
\end{restatable}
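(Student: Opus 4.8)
The plan is to push the $\alpha \to \infty$ limit through the closed-form fixation probabilities in \cref{eq:multipop_fixation_prob}, read off the resulting limiting transition matrix from \cref{eqn:transition_matrix_C_multipop}, and then identify the recurrent structure of that limit with the sink strongly connected components of the response graph, i.e., with the MCCs of \cref{def:mcc}.

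First I would fix a focal population $k$ and two of its strategies $\focalAgent,\residentAgent$, and set $\Delta = f^k(\focalAgent,s^{-k}) - f^k(\residentAgent,s^{-k}) = M^k(\focalAgent,s^{-k}) - M^k(\residentAgent,s^{-k})$ using the small-mutation fitness of \cref{eq:multipop_fitness_small_mutation}. Evaluating the two branches of \cref{eq:multipop_fixation_prob} as $\alpha \to \infty$: when $\Delta > 0$ (the mutant $\focalAgent$ is a strictly better response) both exponentials vanish and $\rho^k_{\residentAgent,\focalAgent}(s^{-k}) \to 1$; when $\Delta < 0$, factoring out the dominant exponentials gives $\rho^k_{\residentAgent,\focalAgent}(s^{-k}) \sim e^{(1-m)\alpha|\Delta|} \to 0$, where $m \geq 2$ is exactly what forces the decay; and when $\Delta = 0$ the probability is $1/m$ for every $\alpha$. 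Substituting into \cref{eqn:transition_matrix_C_multipop}, the limiting off-diagonal entries $C^{\infty}_{ij}$ (for $s_i,s_j$ differing in a single player $k$) are $\eta$ if $s_j^k$ is a strictly better response, $\eta/m$ if $s_j^k$ ties, and $0$ if $s_j^k$ is strictly worse, with the remaining mass on the diagonal self-transition.

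The crucial observation is then that the directed graph of strictly positive transitions of $C^{\infty}$ (discarding self-loops) is precisely the response graph of \cref{def:mcc}: an edge $s_i \to s_j$ survives in the limit exactly when $s_j$ is a weakly better response for the deviating player, with strictly improving edges carrying weight $\eta$ and tying edges the strictly smaller weight $\eta/m < \eta$. Invoking the standard fact that the recurrent classes of a finite Markov chain are the sink strongly connected components of its transition graph, the recurrent classes of $C^{\infty}$ coincide with the sink strongly connected components of the response graph. Closure of each such component under $C^{\infty}$ holds because a sink has no outgoing response-graph edges, so every strictly-better or tying transition from a state inside it stays inside while every strictly-worse transition has vanished; irreducibility on the component is inherited directly from its strong connectivity in the response graph. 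On each component the surviving probabilities ($\eta$ for strict improvements, $\eta/m$ for ties, self-loop for the remainder) match the canonical MCC transition probabilities of \cref{def:mcc} term for term, completing the identification.

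The main obstacle I anticipate is not the limit computation but making the word ``coincides'' precise: for every finite $\alpha$ the chain $C$ is irreducible with full support (\cref{property:unique_pi}), and it is only in the limit that it becomes reducible and concentrates its mass on the sink components. I would therefore phrase the result as a statement about the limiting transition operator and its recurrent classes, verifying that transient (non-sink) states receive zero recurrent mass because the condensation of the response graph is a directed acyclic graph in which every path leads to a sink. A minor point worth checking is the degenerate case where ties create a sink component containing strictly-improving internal edges; the argument above still applies, since it is strong connectivity, not acyclicity, that is needed for irreducibility of the restricted chain.
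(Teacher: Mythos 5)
Your proposal is correct and follows essentially the same route as the paper's proof: push $\alpha\to\infty$ through the closed-form fixation probabilities to get limiting off-diagonal transition weights of $\eta$ for strict improvements, $\eta/m$ for ties, and $0$ for strictly worse responses, then match these to the canonical MCC transition probabilities with the strict-to-tie ratio $\epsilon=1/m$. Your additional care in identifying the recurrent classes of the limiting chain with the sink strongly connected components of the response graph is a welcome elaboration of what ``coincides'' means, but it is not a different argument.
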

\begin{proof}
    \ProofInSM{{sec:proof_inf_pop_alpha}}
\end{proof}
A low ranking-intensity ($\alpha \ll 1$) corresponds to the case of weak selection, where a weak mutant strategy can overtake a given population. 
A large ranking-intensity, on the other hand, ensures that the probability that a sub-optimal strategy overtakes a given population is close to zero, which corresponds closely to the MCC solution concept. 
In practice, setting the ranking-intensity to infinity may not be computationally feasible;
in this case, the underlying Markov chain may be reducible and the existence of a unique stationary distribution (which we use for our rankings) may not be guaranteed.
To resolve the MCC selection problem, we require a perturbed model, but one with a large enough ranking-intensity $\alpha$ such that it approximates an MCC, but small enough such that the MCCs remain connected.
By introducing this perturbed version of Markov-Conley chains, the resulting Markov chain is now irreducible (per \cref{property:unique_pi}). 
The long-term behavior is thus captured by the unique stationary distribution under the large-$\alpha$ limit.
Our so-called $\alpha$-Rank evaluation method then corresponds to the ordering of the agents in this particular stationary distribution.
The perturbations introduced here imply the need for a sweep over the ranking-intensity parameter $\alpha$ -- a single hyperparameter -- which we find to be computationally feasible across all of the large-scale games we analyze using $\alpha$-Rank.

The combination of \cref{thm:MCC} and \cref{thm:inf_pop_alpha} yields a unifying perspective involving a chain of models of increasing complexity: the continuous-time replicator dynamics is on one end, our generalized discrete-time concept is on the other, and MCCs are the link in between.

\section{Results}\label{sec:results}
In the following we summarize our generalized ranking model and the main theoretical and empirical results. We start by outlining how the $\alpha$-Rank procedure exactly works. Then we continue with illustrating $\alpha$-Rank in a number of canonical examples. We continue with some deeper understanding of $\alpha$-Rank's evolutionary dynamics model by introducing some further intuitions and theoretical results, and we end with an empirical validation of $\alpha$-Rank in various domains.

\subsection{$\alpha$-Rank: Evolutionary Ranking of Strategies}
We first detail the $\alpha$-Rank algorithm, then provide some insights and intuitions to further facilitate the understanding of our ranking method and solution concept.

\subsubsection{Algorithm}
Based on the dynamical concepts of chain recurrence and MCCs established, we now detail a descriptive method, titled $\alpha$-Rank, for computing strategy rankings in a multi-agent interaction:

\begin{enumerate}
    \item Construct the meta-game payoff table $M^k$ for each population $k$ from data of multi-agent interactions, or from running game simulations.

    \item Compute the transition matrix $C$ as outlined in \cref{sec:multipop_model}. Per the discussions in \cref{sec:mcc_to_discrete}, one must use a sufficiently large ranking-intensity value $\alpha$ in \cref{eq:fermi_distr_multipop}; this ensures that $\alpha$-Rank preserves the ranking of strategies with closest correspondence to the MCC solution concept. 
    As a large enough value is dependent on the domain under study, a useful heuristic is to conduct a sweep over $\alpha$, starting from a small value and increasing it exponentially until convergence of rankings.
    
    \item Compute the unique stationary distribution, $\pi$, of transition matrix $C$. Each element of the stationary distribution corresponds to the time the populations spend in a given strategy profile. 
    
    \item Compute the agent rankings, which correspond to the ordered masses of the stationary distribution $\pi$. The stationary distribution mass for each agent constitutes a `score' for it (as might be shown, e.g., on a leaderboard).
\end{enumerate}

\subsubsection{$\alpha$-Rank and MCCs as a Solution Concept: A Paradigm Shift}

The solution concept of MCCs is foundationally distinct from that of the Nash equilibrium.
The Nash equilibrium is rooted in classical game theory, which not only models the interactions in multi-agent systems, but is also normative in the sense that it prescribes how a player should behave based on the assumption of individual rationality \cite{Shoham:answer,Gintis09,Weibull97}.
Besides classical game theory making strong assumptions regarding the rationality of players involved in the interaction, there exist many fundamental limitations with the concept of a Nash equilibrium: intractability (computing a Nash is PPAD-complete), equilibrium selection, and the incompatibility of this static concept with the dynamic behaviors of agents in interacting systems. 
To compound these issues, even methods that aim to compute an approximate Nash are problematic: 
a typical approach is to use exploitability to measure deviation from Nash and as such use it as a method to closely approximate one; the problem with this is that it is also intractable for large games (typically the ones we are interested in), and there even still remain issues with using exploitability as a measure of strategy strength (e.g., see \cite{DavisBB14}).
Overall, there seems little hope of deploying the Nash equilibrium as a solution concept for the evaluation of agents in general large-scale (empirical) games.

The concept of an MCC, by contrast, embraces the dynamical systems perspective, in a manner similar to evolutionary game theory. 
Rather than trying to capture the strategic behavior of players in an equilibrium, we deploy a dynamical system based on the evolutionary interactions of agents that captures and describes the long-term behavior of the players involved in the interaction. 
As such, our approach is descriptive rather than prescriptive, in the sense that it is not prescribing the strategies that one should play;
rather, our approach provides useful information regarding the strategies that are evolutionarily non-transient (i.e., resistant to mutants), and highlights the remaining strategies that one might play in practice.
To understand MCCs requires a shift away from the classical models described above for games and multi-agent interactions.
Our new paradigm is to allow the dynamics to roll out and enable strong (i.e., non-transient) agents to emerge and weak (i.e, transient) agents to vanish naturally through their long-term interactions.
The resulting solution concept not only permits an automatic ranking of agents' evolutionary strengths, but is powerful both in terms of computability and usability: 
our rankings are guaranteed to exist, can be computed tractably for any game, and involve no equilibrium selection issues as the evolutionary process converges to a unique stationary distribution.
Nash tries to identify \emph{static} single points in the simplex that capture simultaneous best response behaviors of agents, but comes with the range of complications mentioned above.
On the other hand, the support of our stationary distribution captures the strongest non-transient agents, which may be interchangeably played by interacting populations and therefore constitute a \emph{dynamic} output of our approach.

Given that both Nash and MCCs share a common foundation in the notion of a best response (i.e., simultaneous best responses for Nash, and the sink components of a best response graph for MCCs), it is interesting to consider the circumstances under which the two concepts coincide.
There do, indeed, exist such exceptional circumstances: for example, for a potential game,  every better response sequence converges to a (pure) Nash equilibrium, which coincides with an MCC. 
However, even in relatively simple games, differences between the two solution concepts are expected to occur in general due to the inherently dynamic nature of MCCs (as opposed to Nash).
For example, in the Biased Rock-Paper-Scissors game detailed in \cref{sec:biased_rps}, the Nash equilibrium and stationary distribution are not equivalent due to the cyclical nature of the game; each player's symmetric Nash is $(\frac{1}{16},\frac{5}{8},\frac{5}{16})$, whereas the stationary distribution is $(\frac{1}{3},\frac{1}{3},\frac{1}{3})$.
The key difference here is that whereas Nash is prescriptive and tells players which strategy mixture to use, namely $(\frac{1}{16},\frac{5}{8},\frac{5}{16})$, assuming rational opponents, $\alpha$-Rank is descriptive in the sense that it filters out evolutionary transient strategies and yields a ranking of the remaining strategies in terms of their long-term survival.
In the Biased Rock-Paper-Scissors example, $\alpha$-Rank reveals that all three strategies are equally likely to persist in the long-term as they are part of the same sink strongly connected component of the response graph.
In other words, the stationary distribution mass (i.e., the $\alpha$-Rank score) on a particular strategy is indicative of its resistance to being invaded by any other strategy, including those in the distribution support.
In the case of the Biased Rock-Paper-Scissors game, this means that the three strategies are equally likely to be invaded by a mutant, in the sense that their outgoing fixation probabilities are equivalent.
In contrast to our evolutionary ranking, Nash comes without any such stability properties (e.g., consider the interior mixed Nash in \cref{fig:rd_coordination_game}).
Even computing Evolutionary Stable Strategies (ESS) \cite{Weibull97}, a refinement of Nash equilibria, is intractable \cite{Conitzer18,Etessami2008}. 
In larger games (e.g., AlphaZero in \cref{sec:results_alphazero}), the reduction in the number of agents that are resistant to mutations is more dramatic (in the sense of the stationary distribution support size being much smaller than the total number of agents) and less obvious (in the sense that more-resistant agents are not always the ones that have been trained for longer).
In summary, the strategies chosen by our approach are those favored by evolutionary selection, as opposed to the Nash strategies, which are simultaneous best-responses.

\subsection{Conceptual Examples}\label{sec:conceptual_example_multipop}
We revisit the earlier conceptual examples of Rock-Paper-Scissors and Battle of the Sexes from \cref{sec:markov_chain_conceptual_example} to illustrate the rankings provided by the $\alpha$-Rank methodology.
We use a population size of $m=50$ in our evaluations. 

\subsubsection{Rock-Paper-Scissors}\label{sec:results_rps}
\begin{figure}[t]
    \centering
    \begin{subfigure}[b]{0.7\textwidth}
        \centering
        \includegraphics[width=1\textwidth]{figs/mcc_rock_paper_scissors.pdf}
        \caption{Discrete-time dynamics.}
        \label{fig:mcc_rock_paper_scissors}
     \end{subfigure}\\
    \begin{subfigure}[t]{0.6\textwidth}
        \centering
        \includegraphics[width=0.8\textwidth]{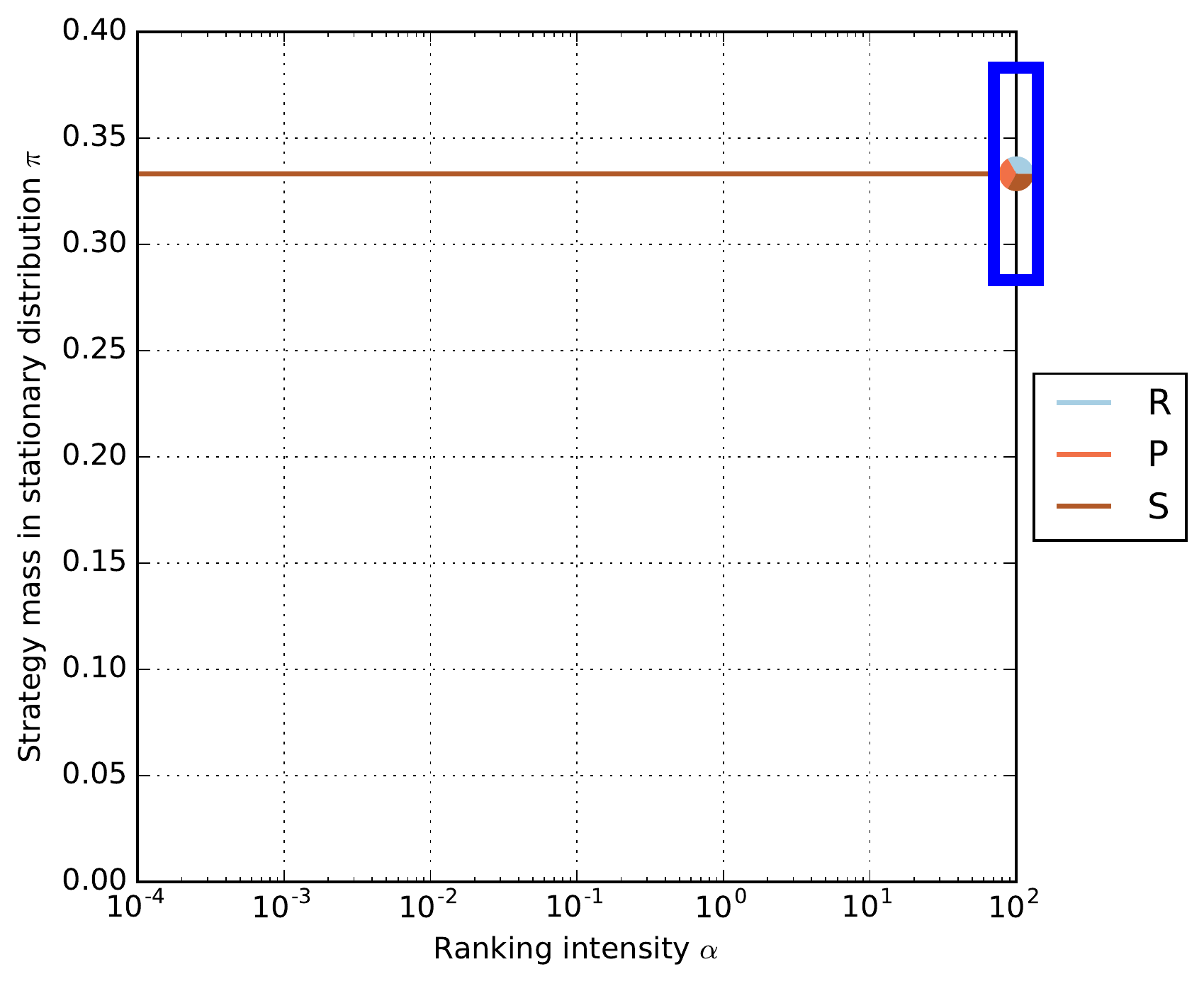}
        \caption{Ranking-intensity sweep}
        \label{fig:pi_vs_alpha_rock_paper_scissors}
    \end{subfigure}
    \hfill
    \begin{subtable}[b]{0.33\textwidth}
        \centering
        \def\arraystretch{1.2}
        \begin{tabular}{ *{3}{c} }
        \toprule
        Agent & Rank & Score\\
        \midrule
        \rowcolor{MyBlue!33.0} \contour{white}{$R$}& \contour{white}{$1$}& \contour{white}{$0.33$}\\
        \rowcolor{MyBlue!33.0} \contour{white}{$P$}& \contour{white}{$1$}& \contour{white}{$0.33$}\\
        \rowcolor{MyBlue!33.0} \contour{white}{$S$}& \contour{white}{$1$}& \contour{white}{$0.33$}\\
        \bottomrule
        \end{tabular}
        \caption{$\alpha$-Rank results.}
        \label{table:alpharank_rock_paper_scissors}
    \end{subtable}
    \caption{Rock-Paper-Scissors game.}
    \label{fig:results_rock_paper_scissors}
\end{figure}

In the Rock-Paper-Scissors game, recall the cyclical nature of the discrete-time Markov chain (shown in \cref{fig:mcc_rock_paper_scissors}) for a fixed value of ranking-intensity parameter, $\alpha$.
We first investigate the impact of the ranking-intensity on overall strategy rankings, by plotting the stationary distribution as a function of $\alpha$ in \cref{fig:pi_vs_alpha_rock_paper_scissors}.
The result is that the population spends $\frac{1}{3}$ of its time playing each strategy regardless of the value of $\alpha$, which is in line with intuition due to the cyclical best-response structure of the game's payoffs.
The Nash equilibrium, for comparison, is also $(\frac{1}{3}, \frac{1}{3}, \frac{1}{3})$.
The $\alpha$-Rank output \cref{table:alpharank_rock_paper_scissors}, which corresponds to a high value of $\alpha$, thus indicates a tied ranking for all three strategies, also in line with intuition.

\subsubsection{Biased Rock-Paper-Scissors}\label{sec:biased_rps}
\begin{figure}[t]
    \centering
    \begin{subfigure}[b]{0.33\textwidth}
        \begin{tabular}{cc|c|c|c|}
            & \multicolumn{1}{c}{} & \multicolumn{3}{c}{Player 2}\\
            & \multicolumn{1}{c}{} & \multicolumn{1}{c}{$R$}  & \multicolumn{1}{c}{$P$} & \multicolumn{1}{c}{$S$} \\\cline{3-5}
            \multirow{3}*{Player 1}  & $R$ & $0$ & $-0.5$ & $1$ \\\cline{3-5}
                                     & $P$ & $0.5$ & $0$ & $-0.1$ \\\cline{3-5}
                                     & $S$ & $-1$ & $0.1$ & $0$ \\\cline{3-5}
        \end{tabular}
        \caption{Payoff matrix.}
        \label{table:biased_rps_payoffs}
    \end{subfigure}\\
    \begin{subfigure}[b]{0.7\textwidth}
        \centering
        \includegraphics[width=1\textwidth]{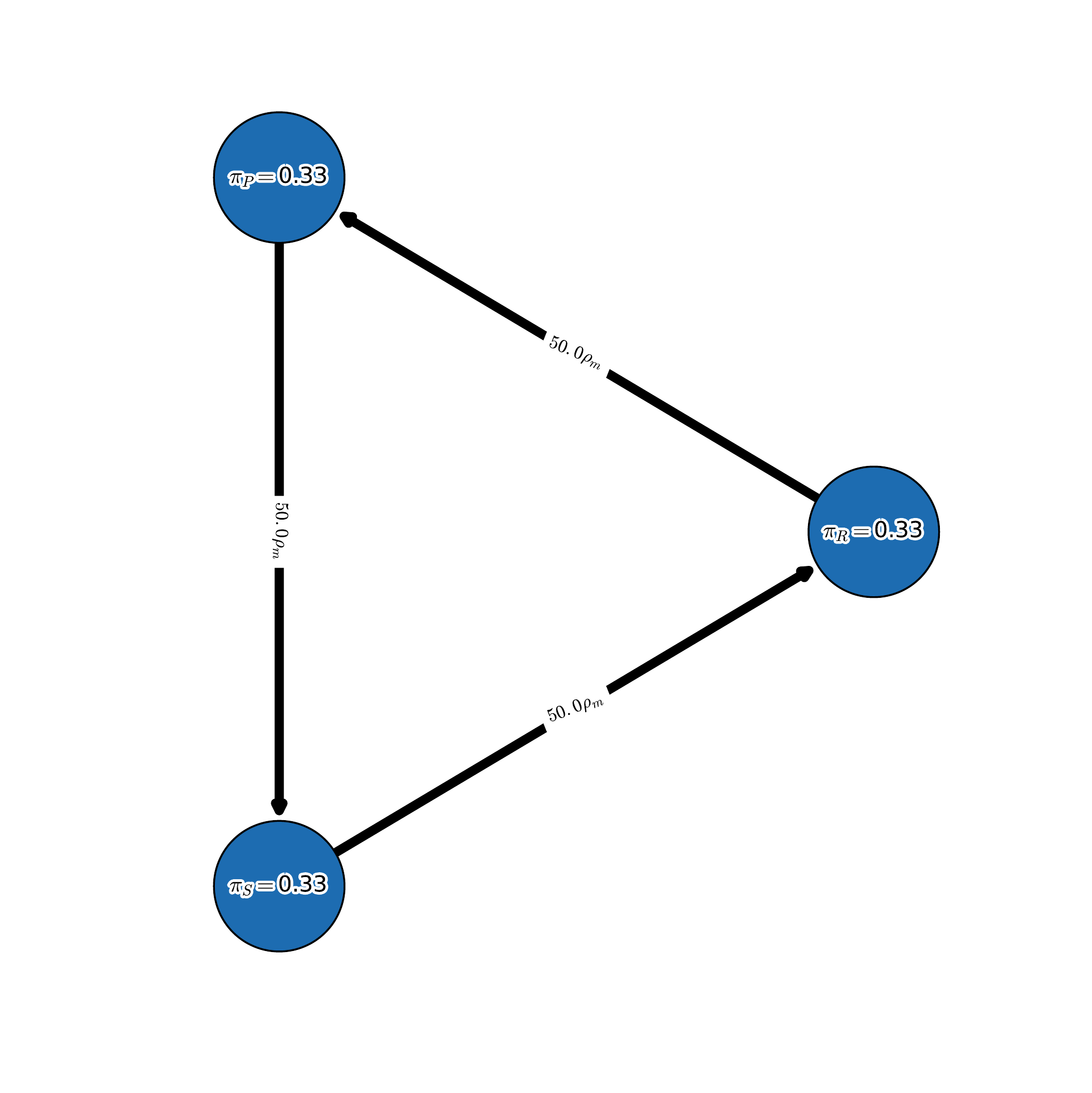}
        \caption{Discrete-time dynamics.}
        \label{fig:mcc_biased_rock_paper_scissors}
     \end{subfigure}\\
    \begin{subfigure}[t]{0.6\textwidth}
        \centering
        \includegraphics[width=0.8\textwidth]{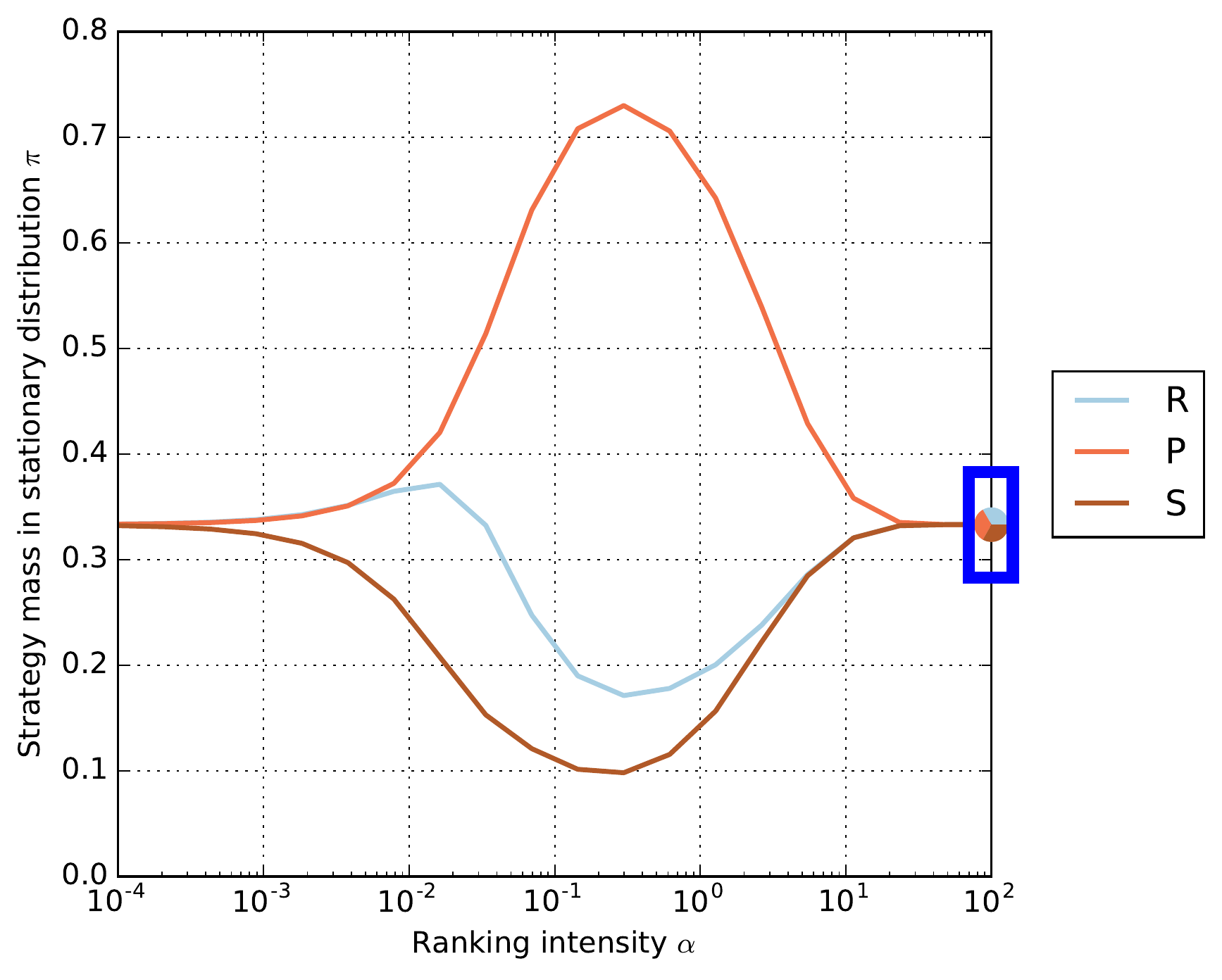}
        \caption{Ranking-intensity sweep.}
        \label{fig:pi_vs_alpha_biased_rock_paper_scissors}
    \end{subfigure}
    \hfill
    \begin{subtable}[b]{0.33\textwidth}
            \centering
            \def\arraystretch{1.2}
            \begin{tabular}{ *{3}{c} }
            \toprule
            Agent & Rank & Score\\
            \midrule
            \rowcolor{MyBlue!33.0} \contour{white}{$R$}& \contour{white}{$1$}& \contour{white}{$0.33$}\\
            \rowcolor{MyBlue!33.0} \contour{white}{$P$}& \contour{white}{$1$}& \contour{white}{$0.33$}\\
            \rowcolor{MyBlue!33.0} \contour{white}{$S$}& \contour{white}{$1$}& \contour{white}{$0.33$}\\
            \bottomrule
            \end{tabular}
        \caption{$\alpha$-Rank results.}
        \label{table:alpharank_biased_rock_paper_scissors}
    \end{subtable}
    \caption{Biased Rock-Paper-Scissors game.}
\end{figure}

Consider now the game of Rock-Paper-Scissors, but with biased payoffs (shown in \cref{table:biased_rps_payoffs}).
The introduction of the bias moves the Nash from the center of the simplex towards one of the corners, specifically $(\frac{1}{16},\frac{5}{8},\frac{5}{16})$ in this case.
It is worthwhile to investigate the corresponding variation of the stationary distribution masses as a function of the ranking-intensity $\alpha$ (\cref{fig:pi_vs_alpha_biased_rock_paper_scissors}) in this case.
As evident from the fixation probabilities \cref{eq:multipop_fixation_prob} of the generalized discrete-time model, very small values of $\alpha$ cause the raw values of payoff to have a very low impact on the dynamics captured by discrete-time Markov chain; in this case, any mutant strategy has the same probability of taking over the population, regardless of the current strategy played by the population.
This corresponds well to \cref{fig:pi_vs_alpha_biased_rock_paper_scissors}, where small $\alpha$ values yield stationary distributions close to $\pi = (\frac{1}{3},\frac{1}{3},\frac{1}{3})$.

As $\alpha$ increases, payoff values play a correspondingly more critical role in dictating the long-term population state; in \cref{fig:pi_vs_alpha_biased_rock_paper_scissors}, the population tends to play Paper most often within this intermediate range of $\alpha$.
Most interesting to us, however, is the case where $\alpha$ increases to the point that our discrete-time model bears a close correspondence to the MCC solution concept (per \cref{thm:inf_pop_alpha}).
In this limit of large $\alpha$, the striking outcome is that the stationary distribution once again converges to $(\frac{1}{3},\frac{1}{3},\frac{1}{3})$.
Thus, $\alpha$-Rank yields the high-level conclusion that in the long term, a monomorphic population playing any of the 3 given strategies can be completely and repeatedly displaced by a rare mutant, and as such assigns the same ranking to all strategies (\cref{table:alpharank_biased_rock_paper_scissors}).
This simple example illustrates perhaps the most important trait of the MCC solution concept and resulting $\alpha$-Rank methodology: they capture the fundamental dynamical structure of games and long-term intransitivities that exist therein, with the rankings produced corresponding to the dynamical \emph{strategy space consumption} or basins of attraction of strategies.

\subsubsection{Battle of the Sexes}\label{results:bots}

\begin{figure}[t]
    \centering
    \begin{subfigure}[t]{1\textwidth}
        \centering
        \includegraphics[width=0.7\textwidth]{figs/mcc_battle_of_the_sexes.pdf}
        \caption{Discrete-time dynamics (see (\subref{table:alpharank_battle_of_the_sexes}) for node-wise scores corresponding to stationary distribution masses).}
        \label{fig:mcc_battle_of_the_sexes}
    \end{subfigure}\\
    \begin{subfigure}[t]{0.6\textwidth}
        \centering
        \includegraphics[width=0.8\textwidth]{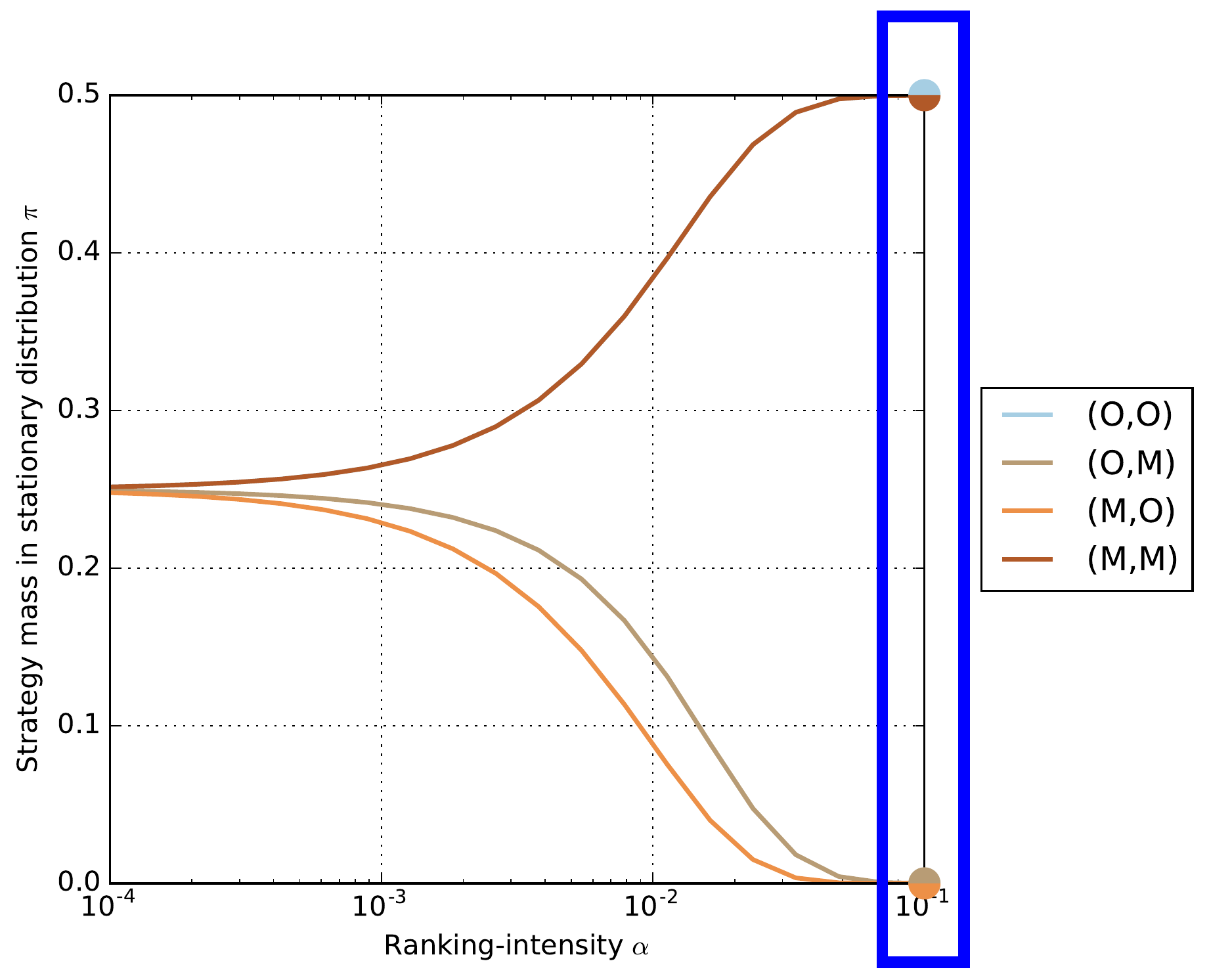}
        \caption{Ranking-intensity sweep.}
        \label{fig:pi_vs_alpha_battle_of_the_sexes}
    \end{subfigure}
    \hfill
    \begin{subtable}[b]{0.33\textwidth}
            \centering
            \def\arraystretch{1.2}
            \begin{tabular}{ *{3}{c} }
            \toprule
            Agent & Rank & Score\\
            \midrule
            \rowcolor{MyBlue!50.0} \contour{white}{$(O,O)$}& \contour{white}{$1$}& \contour{white}{$0.5$}\\
            \rowcolor{MyBlue!50.0} \contour{white}{$(M,M)$}& \contour{white}{$1$}& \contour{white}{$0.5$}\\
            \rowcolor{MyBlue!0.0} \contour{white}{$(O,M)$}& \contour{white}{$2$}& \contour{white}{$0.0$}\\
            \rowcolor{MyBlue!0.0} \contour{white}{$(M,O)$}& \contour{white}{$2$}& \contour{white}{$0.0$}\\
            \bottomrule
            \end{tabular}
        \caption{$\alpha$-Rank results.}
        \label{table:alpharank_battle_of_the_sexes}
    \end{subtable}
    \caption{Battle of the Sexes.}
    \label{fig:battle_of_the_sexes}
\end{figure}

We consider next an example of $\alpha$-Rank applied to an asymmetric game -- the Battle of the Sexes.
\Cref{fig:pi_vs_alpha_battle_of_the_sexes} plots the stationary distribution against ranking-intensity $\alpha$, where we again observe a uniform stationary distribution corresponding to very low values of $\alpha$.
As $\alpha$ increases, we observe the emergence of two sink chain components corresponding to strategy profiles $(O,O)$ and $(M,M)$, which thus attain the top $\alpha$-Rank scores in \cref{table:alpharank_battle_of_the_sexes}.
Note the distinct convergence behaviors of strategy profiles $(O,M)$ and $(M,O)$ in \cref{fig:pi_vs_alpha_battle_of_the_sexes}, where the stationary distribution mass on the $(M,O)$ converges to $0$ faster than that of $(O,M)$ for an increasing value of $\alpha$.
This is directly due to the structure of the underlying payoffs and the resulting differences in fixation probabilities.
Namely, starting from profile $(M,O)$, if either player deviates, that player increases their local payoff from $0$ to $3$. 
Likewise, if either player deviates starting from profile $(O,M)$, that player's payoff increases from $0$ to $2$.
Correspondingly, the fixation probabilities out of $(M,O)$ are higher than those out of $(O,M)$ (\cref{fig:mcc_battle_of_the_sexes}), and thus the stationary distribution mass on $(M,O)$  converges to $0$ faster than that of $(O,M)$ as $\alpha$ increases.
We note that these low-$\alpha$ behaviors, while interesting, have no impact on the final rankings computed in the limit of large $\alpha$ (\cref{table:alpharank_battle_of_the_sexes}).
We refer the interested reader to \cite{veller2017red} for a detailed analysis of the non-coordination components of the stationary distribution in mutualistic interactions, such as the Battle of the Sexes.

We conclude this discussion by noting that despite the asymmetric nature of the payoffs in this example, the computational techniques used by $\alpha$-Rank to conduct the evaluation are essentially identical to the simpler (symmetric) Rock-Paper-Scissors game.
This key advantage is especially evident in contrast to recent evaluation approaches that involve decomposition of a asymmetric game into multiple counterpart symmetric games, which must then be concurrently analyzed \cite{TuylsSym}.

\subsection{Theoretical Properties of $\alpha$-Rank}
This section presents key theoretical findings related to the structure of the underlying discrete-time model used in $\alpha$-Rank, and computational complexity of the ranking analysis. 
Proofs are presented in the Supplementary Material.

\begin{restatable}[Structure of $C$]{property}{MemoryBoundC}\label{thm:memory_bound_c}
 Given strategy profile $s_i$ corresponding to row $i$ of $C$, the number of valid profiles it can transition to is $1+\sum_{k}(|S^k|-1)$ (i.e., either $s_i$ self-transitions, or one of the populations $k$ switches to a different monomorphic strategy).
    The sparsity of $C$ is then,
    \begin{align}
        1 - \frac{|S|(1+\sum_{k}(|S^k|-1))}{|S|^2}. 
    \end{align}
\end{restatable}
Therefore, for games involving many players and strategies, transition matrix $C$ is large (in the sense that there exist $|S|$ states), but extremely sparse (in the sense that there exist only $1+\sum_k (|S^k|-1)$ outgoing edges from each state).
For example, in a $6$-wise interaction game where agents in each population have a choice over $4$ strategies, $C$ is 99.53\% sparse.

\begin{restatable}[Computational complexity of solving for $\pi$]{property}{StationaryComplexity}
\label{thm:stationary_complexity}
    The sparse structure of the Markov transition matrix $C$ (as identified in \cref{thm:memory_bound_c}) can be exploited to solve for the stationary distribution $\pi$ efficiently; specifically, computing the stationary distribution can be formulated as an eigenvalue problem, which can be computed in cubic-time in the number of total pure strategy profiles.
\end{restatable}
The $\alpha$-Rank method is, therefore, tractable, in the sense that it runs in polynomial time with respect to the total number of pure strategies.
This yields a major computational advantage,  in stark contrast to conducting rankings by solving for Nash (which is PPAD-complete for general-sum games \cite{Daskalakis06thecomplexity}, which our meta-games may be).

\subsection{Experimental Validation}
\begin{table}
    \centering
    \begin{tabular}{ccccc}
        \toprule
        Domain & Results & Symmetric? & \# of Populations & \# of Strategies \\
        \midrule
        Rock-Paper-Scissors & \cref{sec:results_rps} & \cmark & $1$ & $[3]$\\
        Biased Rock-Paper-Scissors & \cref{sec:biased_rps} & \cmark & $1$ & $[3]$\\
        Battle of the Sexes & \cref{results:bots} & \xmark & $2$ & $[2,2]$\\
        AlphaGo & \cref{sec:results_alphago} & \cmark & $1$ & $[7]$\\
        AlphaZero Chess & \cref{sec:results_alphazero} & \cmark & $1$ & $[56]$\\
        MuJoCo Soccer & \cref{sec:results_mujoco_soccer} & \cmark & $1$ & $[10]$\\
        Kuhn Poker & \cref{sec:results_kuhn_poker} & \xmark & $3$ & [4,4,4]\\
                   & \cref{sec:results_kuhn_poker} & \xmark & $4$ & [4,4,4,4]\\
        Leduc Poker & \cref{sec:results_psro} & \xmark & $2$ & $[3,3]$\\
        \bottomrule
    \end{tabular}
    \caption{Overview of multi-agent domains evaluated in this paper. These domains are extensive across multiple axes of complexity, and include symmetric and asymmetric games with different numbers of populations and ranges of strategies.}
    \label{table:results_summary}
\end{table}

In this section we provide a series of experimental illustrations of $\alpha$-Rank in a varied set of domains, including AlphaGo, AlphaZero Chess, MuJoCo Soccer, and both Kuhn and Leduc Poker.
As evident in \cref{table:results_summary}, the analysis conducted is extensive across multiple axes of complexity, as the domains considered include symmetric and asymmetric games with different numbers of populations and ranges of strategies.

\subsubsection{AlphaGo}\label{sec:results_alphago}

\begin{figure}[t]
    \centering
    \begin{subfigure}[t]{0.7\textwidth}
        \centering
        \includegraphics[width=1.\textwidth]{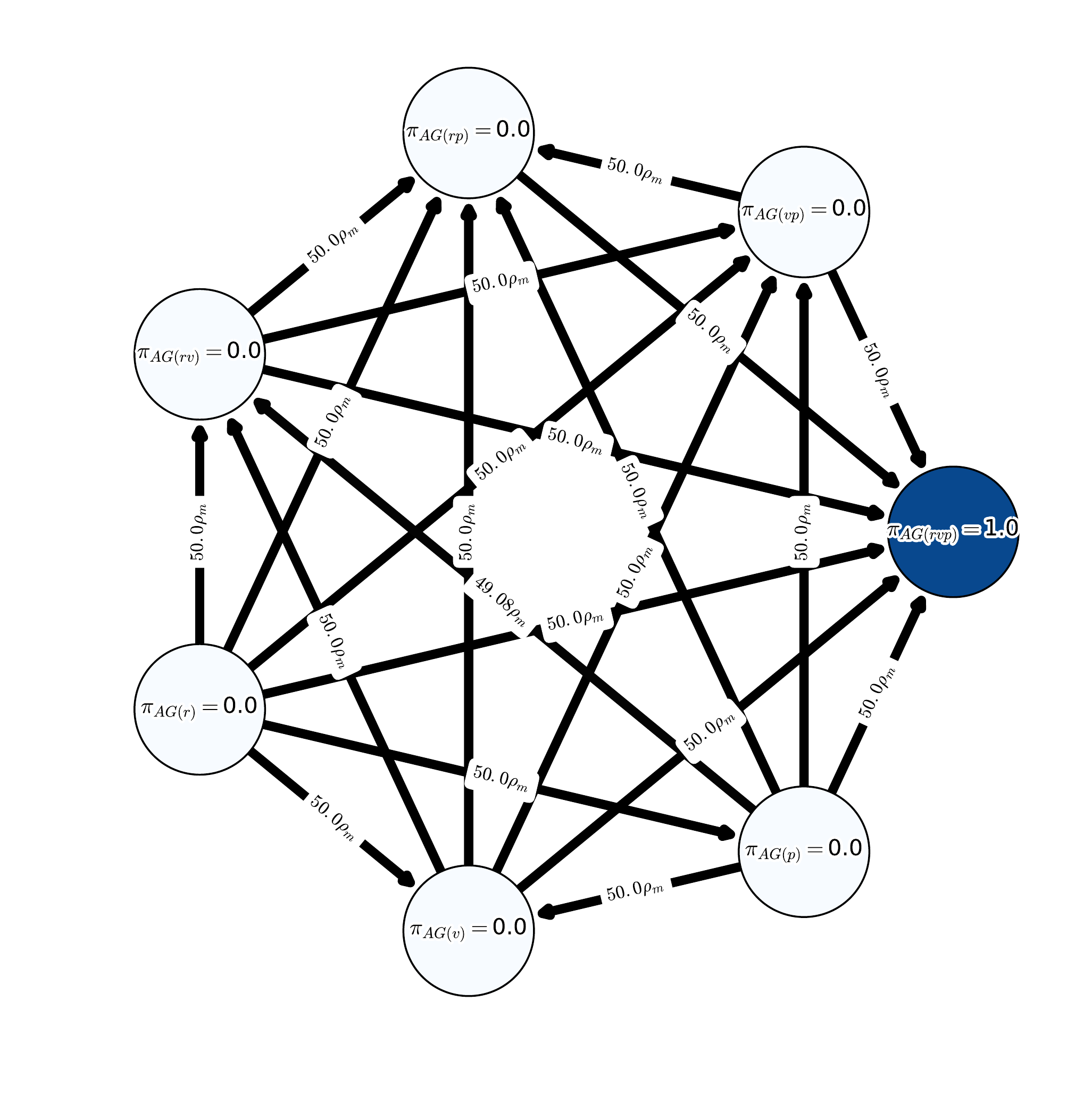}
        \caption{Discrete-time dynamics.}
        \label{fig:mcc_alphago}
    \end{subfigure}\\
    \begin{subfigure}[t]{0.6\textwidth}
        \centering
        \includegraphics[width=0.8\textwidth]{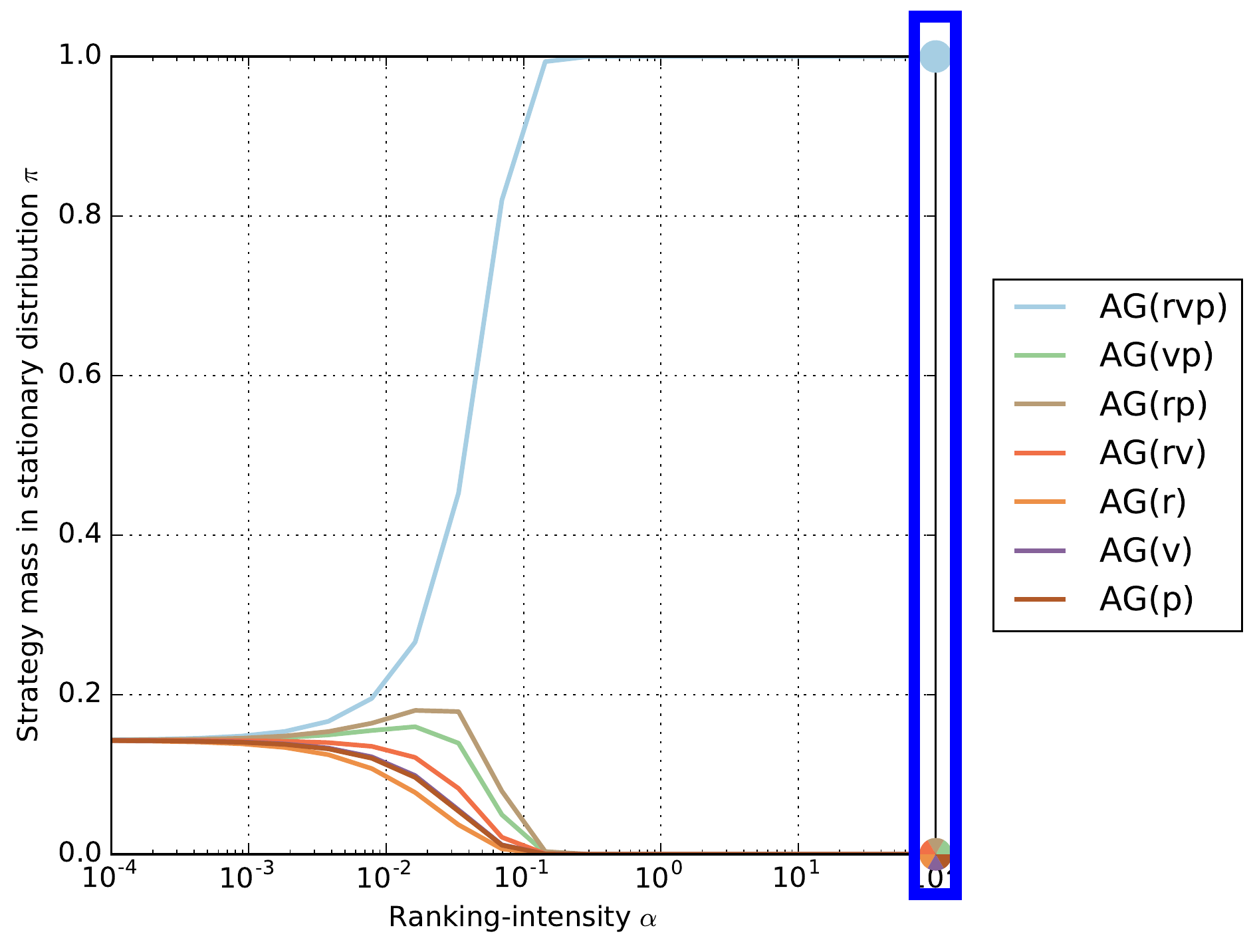}
        \caption{Ranking-intensity sweep.}
        \label{fig:pi_vs_alpha_alphago}
    \end{subfigure}
    \hfill
    \begin{subtable}[b]{0.33\textwidth}
            \centering
            \def\arraystretch{1.2}
            \begin{tabular}{ *{3}{c} }
            \toprule
            Agent & Rank & Score\\
            \midrule
            \rowcolor{MyBlue!100.0} \contour{white}{$AG(rvp)$}& \contour{white}{$1$}& \contour{white}{$1.0$}\\
            \rowcolor{MyBlue!0.0} \contour{white}{$AG(vp)$}& \contour{white}{$2$}& \contour{white}{$0.0$}\\
            \rowcolor{MyBlue!0.0} \contour{white}{$AG(rp)$}& \contour{white}{$2$}& \contour{white}{$0.0$}\\
            \rowcolor{MyBlue!0.0} \contour{white}{$AG(rv)$}& \contour{white}{$2$}& \contour{white}{$0.0$}\\
            \rowcolor{MyBlue!0.0} \contour{white}{$AG(r)$}& \contour{white}{$2$}& \contour{white}{$0.0$}\\
            \rowcolor{MyBlue!0.0} \contour{white}{$AG(v)$}& \contour{white}{$2$}& \contour{white}{$0.0$}\\
            \rowcolor{MyBlue!0.0} \contour{white}{$AG(p)$}& \contour{white}{$2$}& \contour{white}{$0.0$}\\
            \bottomrule
            \end{tabular}
        \caption{$\alpha$-Rank results.}
        \label{table:alpharank_alphago}
    \end{subtable}
    \caption{AlphaGo (Nature dataset).}
    \label{fig:results_alphago}
\end{figure}

In this example we conduct an evolutionary ranking of AlphaGo agents based on the data reported in \cite{DSilverHMGSDSAPL16}.
The meta-game considered here corresponds to a 2-player symmetric NFG with 7 AlphaGo agents: $AG(r)$, $AG(p)$, $AG(v)$, $AG(rv)$, $AG(rp)$, $AG(vp)$, and $AG(rvp)$, where $r$, $v$, and $p$ respectively denote the combination of \emph{rollouts}, \emph{value networks}, and/or \emph{policy networks} used by each variant.
The corresponding payoffs are the win rates for each pair of agent match-ups, as reported in Table 9 of \cite{DSilverHMGSDSAPL16}.

In \cref{table:alpharank_alphago} we summarize the rankings of these agents using the $\alpha$-Rank method.
$\alpha$-Rank is quite conclusive in the sense that the top agent, $AG(rvp)$, attains all of the stationary distribution mass, dominating all other agents.
Further insights into the pairwise agent interactions are revealed by visualizing the underlying Markov chain, shown in \cref{fig:mcc_alphago}.
Here the population flows (corresponding to the graph edges) indicate which agents are more evolutionarily viable than others.
For example, the edge indicating flow from $AG(r)$ to $AG(rv)$ indicates that the latter agent is stronger in the short-term of evolutionary interactions.
Moreover, the stationary distribution (corresponding to high $\alpha$ values in \cref{fig:pi_vs_alpha_alphago}) reveals that all agents but $AG(rvp)$ are transient in terms of the long-term dynamics, as a monomorphic population starting from any other agent node eventually reaches $AG(rvp)$.
In this sense, node $AG(rvp)$ constitutes an evolutionary stable strategy.
We also see in \cref{fig:mcc_alphago} that no cyclic behaviors occur in these interactions.
Finally, we remark that the recent work of \cite{Tuyls18} also conducted a meta-game analysis on these particular AlphaGo agents and drew similar conclusions to ours.
The key limitation of their approach is that it can only directly analyze interactions between triplets of agents, as they rely on visualization of the continuous-time evolutionary dynamics on a 2-simplex.
Thus, to draw conclusive results regarding the interactions of the full set of agents, they must concurrently conduct visual analysis of all possible 2-simplices (35 total in this case).
This highlights a key benefit of $\alpha$-Rank as it can  succinctly summarize agent evaluations with minimal intermediate human-in-the-loop analysis.

\subsubsection{AlphaZero}\label{sec:results_alphazero}

\begin{figure}[t]
    \centering
    \begin{subfigure}[t]{0.7\textwidth}
        \centering
        \includegraphics[width=1.\textwidth]{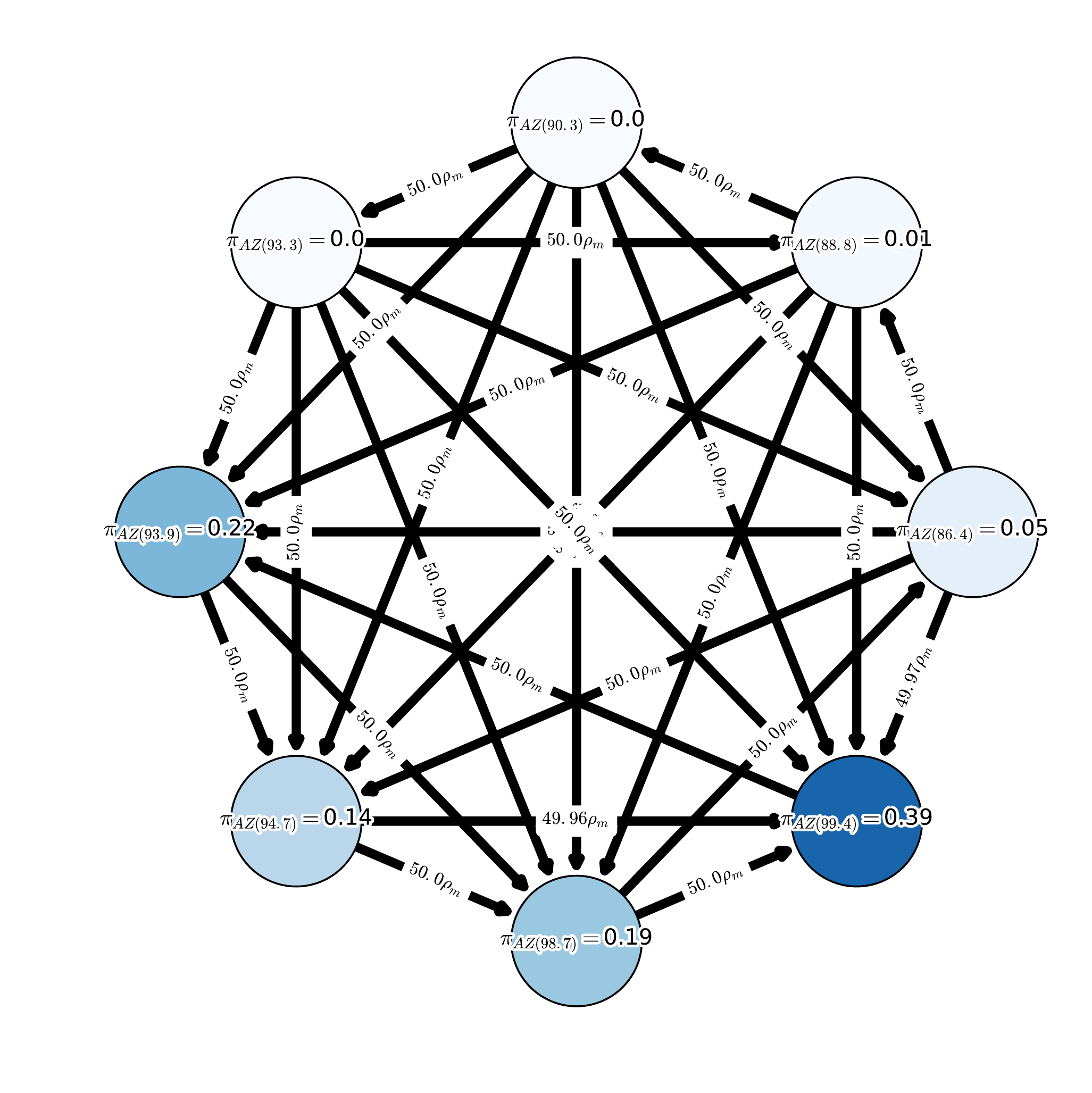}
        \caption{Discrete-time dynamics.}
        \label{fig:mcc_chess}
    \end{subfigure}\\
    \begin{subfigure}[t]{0.6\textwidth}
        \centering
        \includegraphics[width=0.8\textwidth]{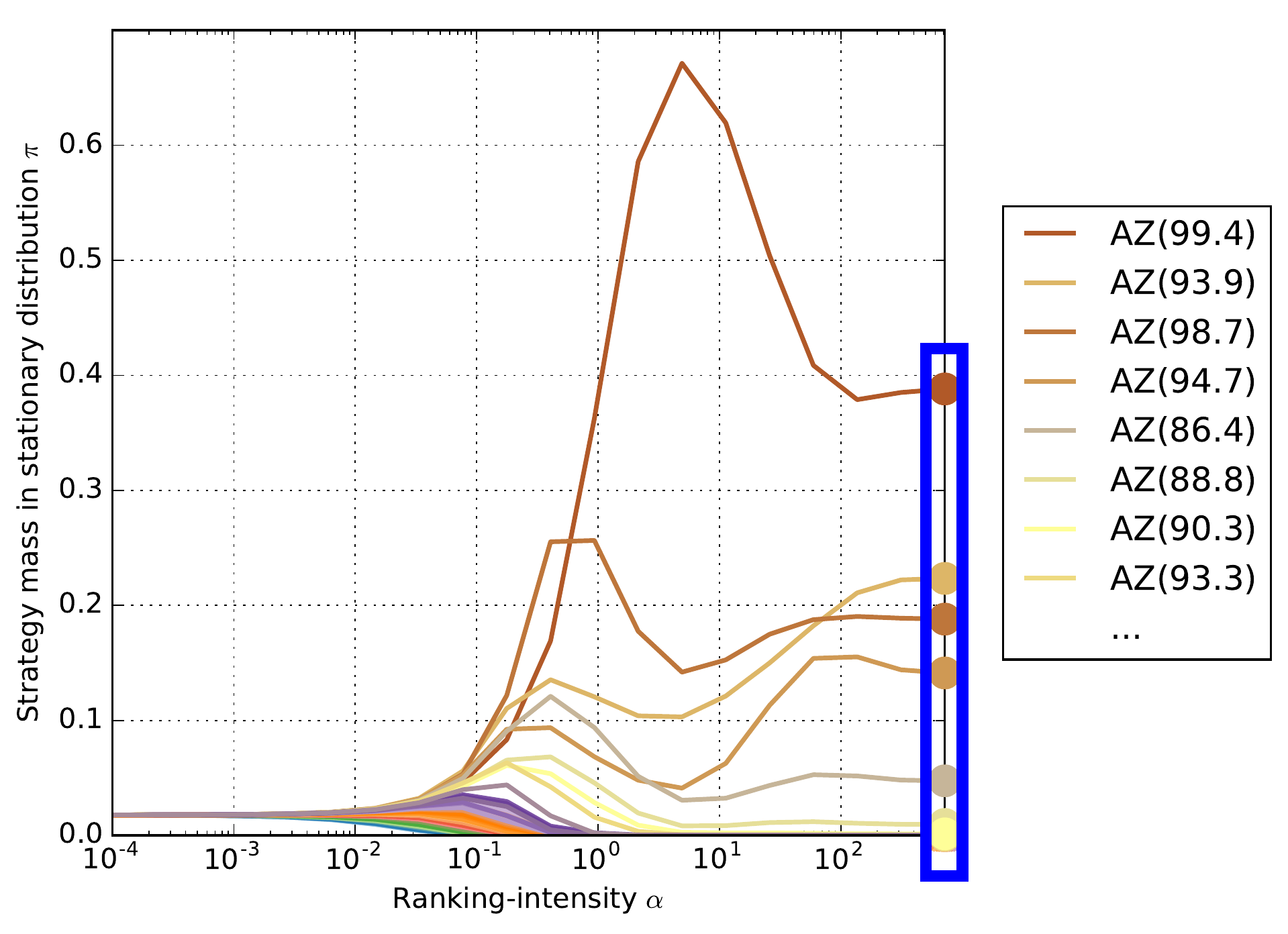}
        \caption{Ranking-intensity sweep.}
        \label{fig:pi_vs_alpha_chess}
    \end{subfigure}
    \hfill
    \begin{subtable}[b]{0.33\textwidth}
            \centering
            \def\arraystretch{1.2}
            \begin{tabular}{*{3}{c}}
            \toprule
            Agent & Rank & Score\\
            \midrule
            \rowcolor{MyBlue!39.0} \contour{white}{$AZ(99.4)$}& \contour{white}{$1$}& \contour{white}{$0.39$}\\
            \rowcolor{MyBlue!22.0} \contour{white}{$AZ(93.9)$}& \contour{white}{$2$}& \contour{white}{$0.22$}\\
            \rowcolor{MyBlue!19.0} \contour{white}{$AZ(98.7)$}& \contour{white}{$3$}& \contour{white}{$0.19$}\\
            \rowcolor{MyBlue!14.0} \contour{white}{$AZ(94.7)$}& \contour{white}{$4$}& \contour{white}{$0.14$}\\
            \rowcolor{MyBlue!5.0} \contour{white}{$AZ(86.4)$}& \contour{white}{$5$}& \contour{white}{$0.05$}\\
            \rowcolor{MyBlue!1.0} \contour{white}{$AZ(88.8)$}& \contour{white}{$6$}& \contour{white}{$0.01$}\\
            \rowcolor{MyBlue!0.0} \contour{white}{$AZ(90.3)$}& \contour{white}{$7$}& \contour{white}{$0.0$}\\
            \rowcolor{MyBlue!0.0} \contour{white}{$AZ(93.3)$}& \contour{white}{$8$}& \contour{white}{$0.0$}\\
            $\cdots$ & $\cdots$ & $\cdots$\\
            \bottomrule
            \end{tabular}
        \caption{$\alpha$-Rank results.}
        \label{table:alpharank_chess}
    \end{subtable}
    \caption{AlphaZero dataset.}
    \label{fig:results_chess}
\end{figure}

\begin{figure}[t]
    \centering
    \begin{subfigure}[t]{0.45\textwidth}
        \centering
        \includegraphics[width=1.\textwidth]{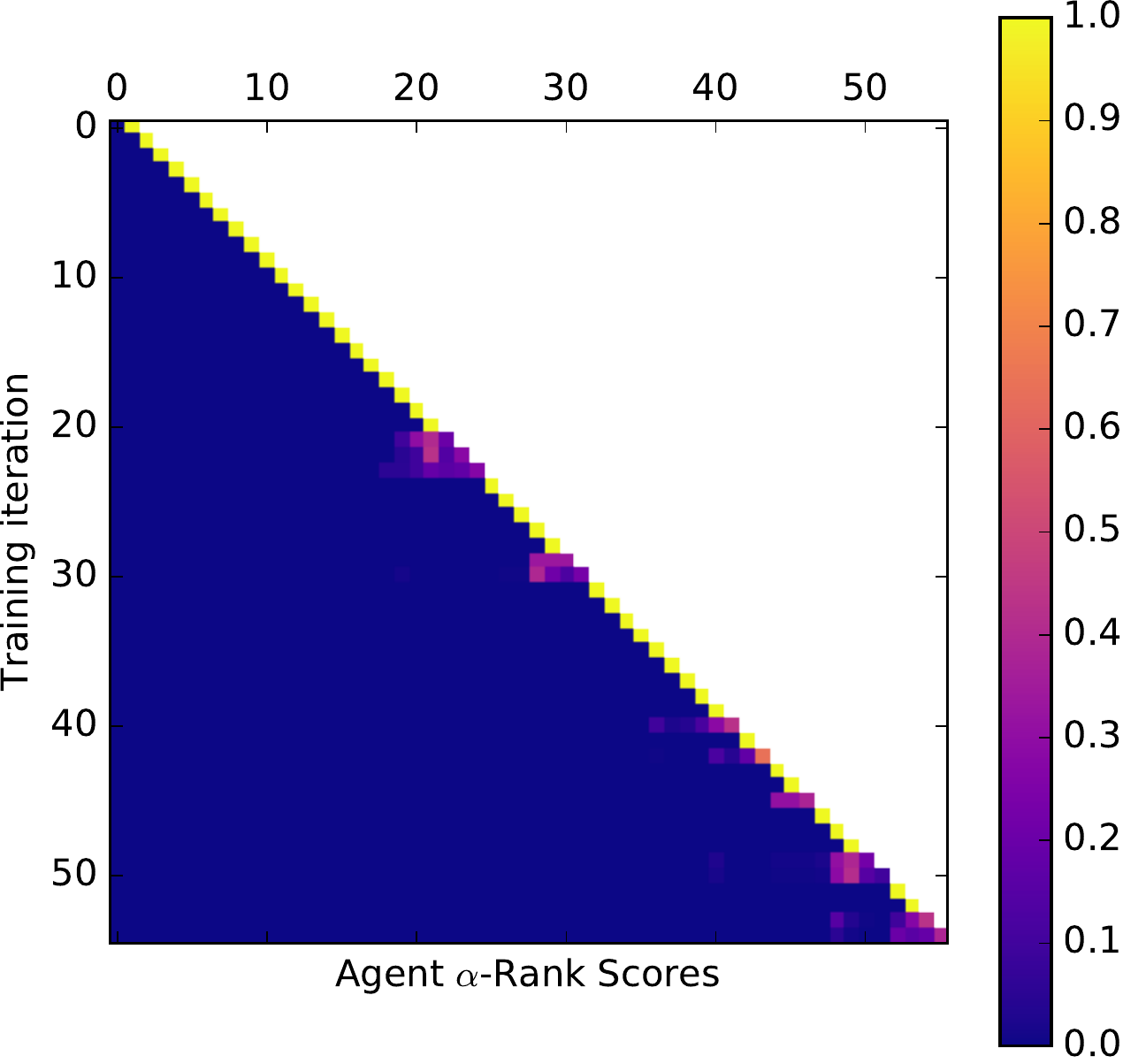}
        \caption{$\alpha$-Score vs. Training Time.}
        \label{fig:incremental_alpha_score_chess}
    \end{subfigure}
    \hfill
    \begin{subfigure}[t]{0.45\textwidth}
        \centering
        \includegraphics[width=1\textwidth]{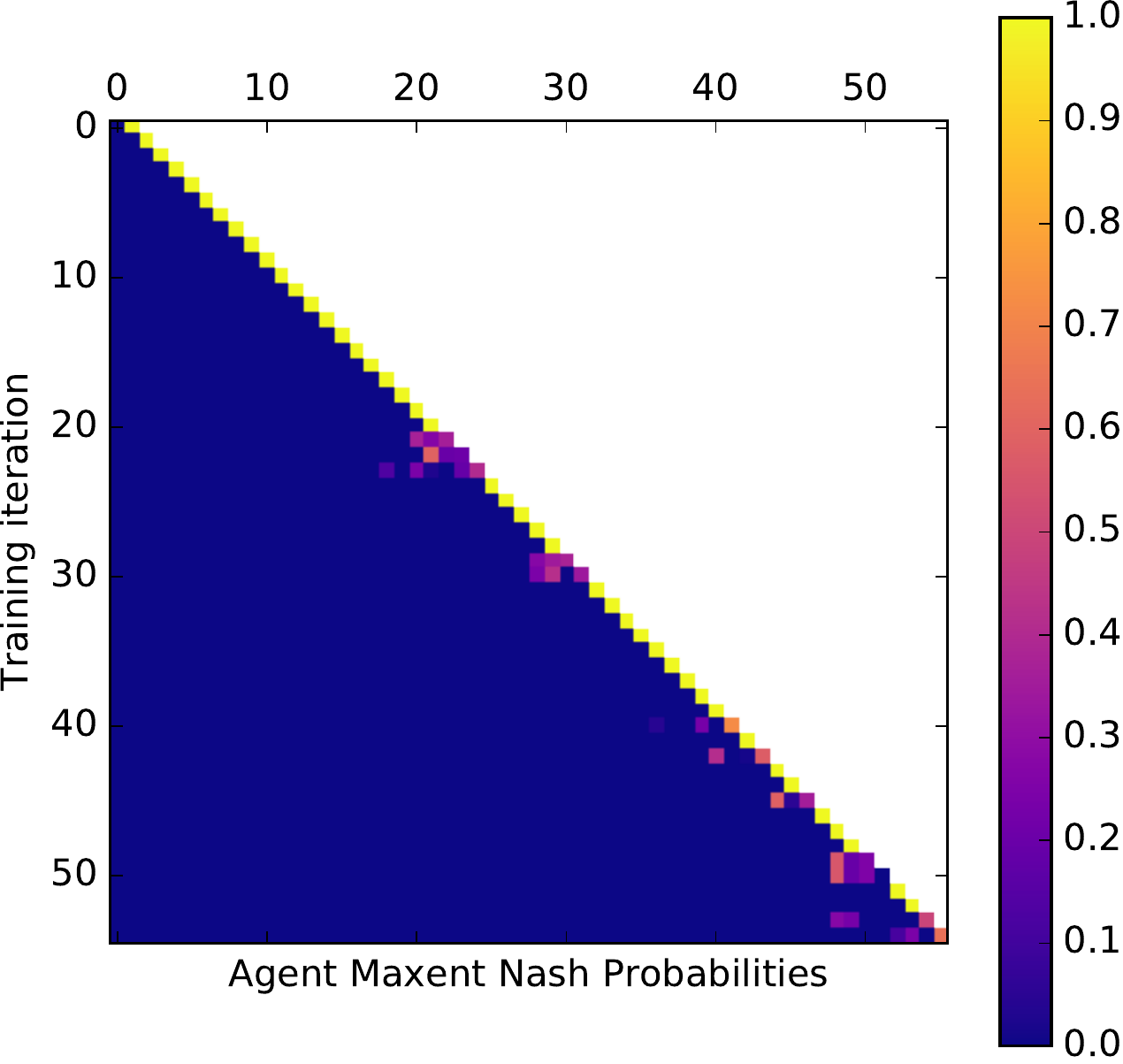}
        \caption{Maximum Entropy Nash vs. Training Time.}
        \label{fig:incremental_maxent_nash_chess}
    \end{subfigure} 
    \\
    \begin{subfigure}[t]{0.5\textwidth}
        \centering
        \includegraphics[width=1\textwidth]{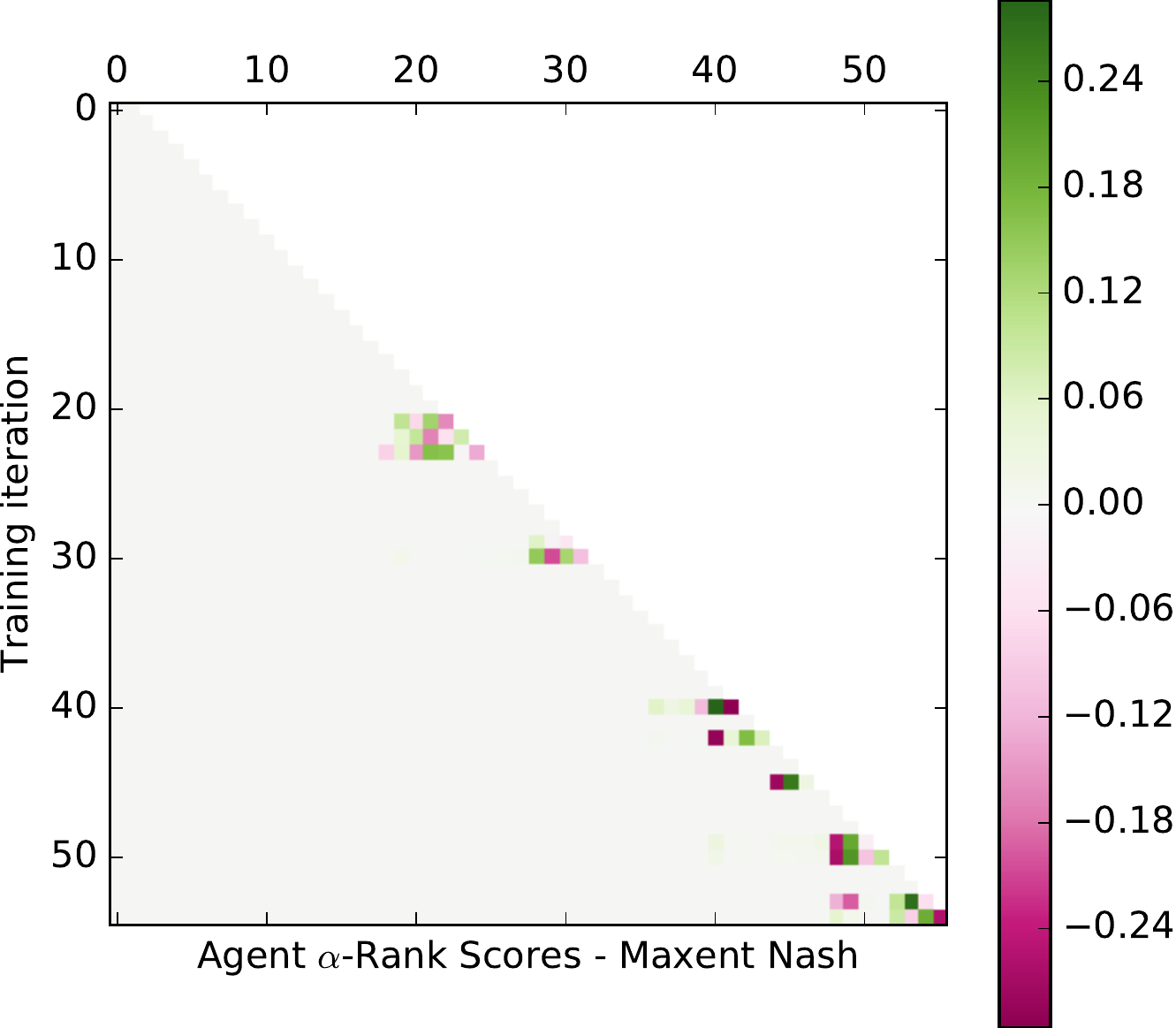}
        \caption{$\alpha$-Score - Maximum Entropy Nash difference.}
        \label{fig:incremental_alpha_score_vs_nash_chess}
    \end{subfigure}
    \caption{AlphaZero (chess) agent evaluations throughout training.}
    \label{fig:incremental_comparisons_chess}
\end{figure}

AlphaZero is a generalized algorithm that has been demonstrated to master the games of Go, Chess, and Shogi without reliance on human data \cite{silver2018general}.
Here we demonstrate the applicability of the $\alpha$-Rank evaluation method to large-scale domains by considering the interactions of a large number of AlphaZero agents playing the game of chess.
In AlphaZero, training commences by randomly initializing the parameters of a neural network used to play the game in conjunction with a general-purpose tree search algorithm.
To synthesize the corresponding meta-game, we take a `snapshot' of the network at various stages of training, each of which becomes an agent in our meta-game.
For example, agent $AZ(27.5)$ corresponds to a snapshot taken at approximately 27.5\% of the total number of training iterations, while $AZ(98.7)$ corresponds to one taken approximately at the conclusion of training.
We take 56 of these snapshots in total.
The meta-game considered here is then a symmetric $2$-player NFG involving 56 agents, with payoffs again corresponding to the win-rates of every pair of agent match-ups.
We note that there exist 27720 total simplex 2-faces in this dataset, substantially larger than those investigated in \cite{Tuyls18}, which quantifiably justifies the computational feasibility of our evaluation scheme.

We first analyze the evolutionary strengths of agents over a sweep of ranking-intensity $\alpha$ (\cref{fig:pi_vs_alpha_chess}).
While the overall rankings are quite invariant to the value of $\alpha$, we note again that a large value of $\alpha$ dictates the final $\alpha$-Rank evaluations attained in \cref{table:alpharank_chess}.
To gain further insight into the inter-agent interactions, we consider the corresponding discrete-time evolutionary dynamics shown in \cref{fig:mcc_chess}.
Note that these interactions are evaluated using the entire 56-agent dataset, though visualized only for the top-ranked agents for readability.
The majority of top-ranked agents indeed correspond to snapshots taken near the end of AlphaZero training (i.e., the strongest agents in terms of training time).
Specifically, $AZ(99.4)$, which is the final snapshot in our dataset and thus the most-trained agent, attains the top rank with a score of 0.39, in contrast to the second-ranked $AZ(93.9)$ agent's score of 0.22.
This analysis does reveal some interesting outcomes, however: agent $AZ(86.4)$ is not only ranked 5-th overall, but also higher than several agents with longer training time, including $AZ(88.8)$, $AZ(90.3)$, and $AZ(93.3)$.

We also investigate here the relationship between the $\alpha$-Rank scores and Nash equilibria.
A key point to recall is the equilibrium selection problem associated with Nash, as multiple equilibria can exist even in the case of two-player zero-sum meta-games.
In the case of zero-sum meta-games, Balduzzi \emph{et al.} show that there exists a unique \emph{maximum entropy} (maxent) Nash equilibrium \cite{Balduzzi18}, which constitutes a natural choice that we also use in the below comparisons.
For general games, unfortunately, this selection issue persists for Nash, whereas it does not for $\alpha$-Rank due to the uniqueness of the associated ranking (see \cref{property:unique_pi}).

We compare the $\alpha$-Rank scores and maxent Nash by plotting each throughout AlphaZero training in \cref{fig:incremental_alpha_score_chess} and \cref{fig:incremental_maxent_nash_chess}, respectively;
we also plot their difference in \cref{fig:incremental_alpha_score_vs_nash_chess}.
At a given training iteration, the corresponding horizontal slice in each plot visualizes the associated evaluation metric (i.e., $\alpha$-Rank, maxent Nash, or difference of the two) computed for all agent snapshots up to that iteration. 
We first note that both evaluation methods reach a consensus that the strengths of AlphaZero agents generally increase with training, in the sense that only the latest agent snapshots (i.e., the ones closest  to the diagonal) appear in the support of both $\alpha$-Rank scores and Nash.
An interesting observation is that less-trained agents sometimes reappear in the support of the distributions as training progresses; 
this behavior may even occur multiple times for a particular agent.

We consider also the quantitative similarity of $\alpha$-Rank and Nash in this domain.
\Cref{fig:incremental_alpha_score_vs_nash_chess} illustrates that differences do exist in the sense that certain agents are ranked higher via one method compared to the other.
More fundamentally, however, we note a relationship exists between $\alpha$-Rank and Nash in the sense that they share a common rooting in the concept of best-response:
by definition, each player's strategy in a Nash equilibrium is a best response to the other players' strategies;
in addition, $\alpha$-Rank corresponds to the MCC solution concept, which itself is derived from the sink strongly-connected components of the game's response graph. 
Despite the similarities, $\alpha$-Rank is a more refined solution concept than Nash in the sense that it is both rooted in dynamical systems and a best-response approach, which not only yields rankings, but also the associated dynamics graph (\cref{fig:mcc_chess}) that gives insights into the long-term evolutionary strengths of agents.
Beyond this, the critical advantage of $\alpha$-Rank is its tractability for general-sum games (per \cref{thm:stationary_complexity}), as well as lack of underlying equilibrium selection issues; in combination, these features yield a powerful empirical methodology with little room for user confusion or interpretability issues.
This analysis reveals fundamental insights not only in terms of the benefits of using $\alpha$-Rank to evaluate agents in a particular domain, but also an avenue of future work in terms of embedding the evaluation methodology into the training pipeline of agents involved in large and general games.

\subsubsection{MuJoCo Soccer}\label{sec:results_mujoco_soccer}
\begin{figure}[t]
    \centering
    \begin{subfigure}[t]{0.7\textwidth}
        \centering
        \includegraphics[width=1.\textwidth]{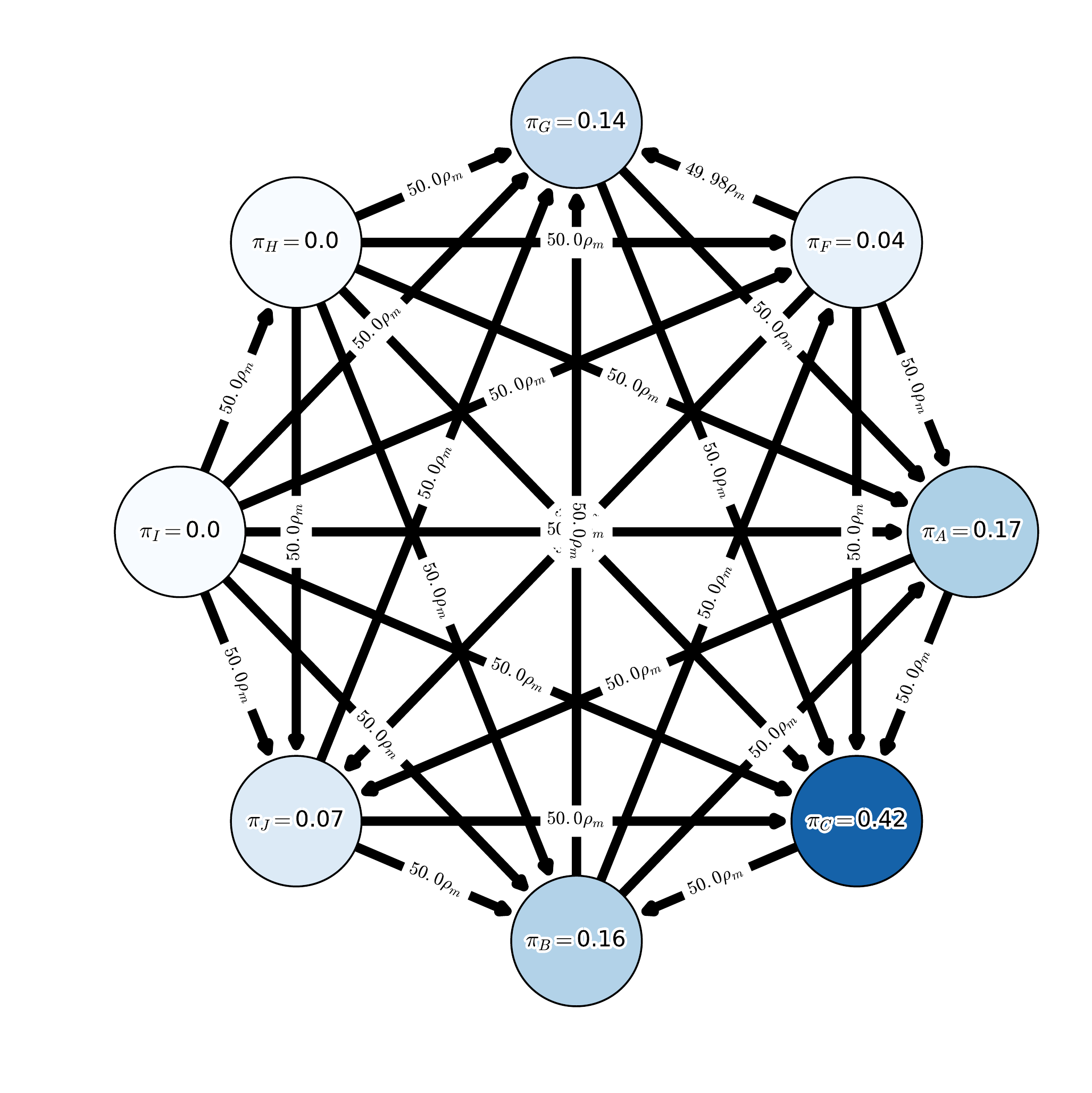}
        \caption{Discrete-time dynamics.}
        \label{fig:mcc_soccer}
    \end{subfigure}\\
    \begin{subfigure}[t]{0.6\textwidth}
        \centering
        \includegraphics[width=0.8\textwidth]{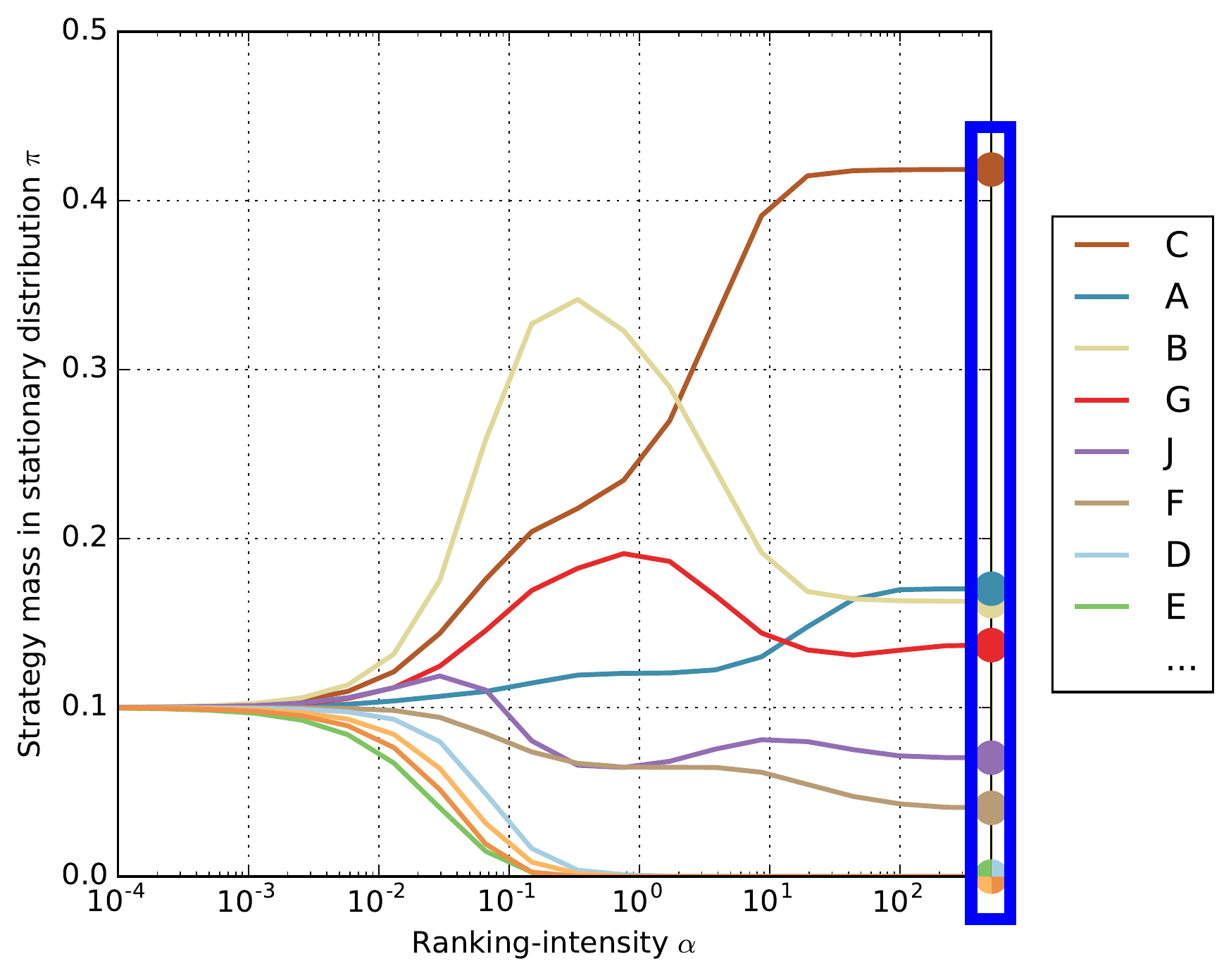}
        \caption{Ranking-intensity sweep.}
        \label{fig:pi_vs_alpha_soccer}
    \end{subfigure}
    \hfill
    \begin{subtable}[b]{0.33\textwidth}
            \centering
            \def\arraystretch{1.2}
            \begin{tabular}{ *{3}{c} }
            \toprule
            Agent & Rank & Score\\
            \midrule
            \rowcolor{MyBlue!42.0} \contour{white}{$C$}& \contour{white}{$1$}& \contour{white}{$0.42$}\\
            \rowcolor{MyBlue!17.0} \contour{white}{$A$}& \contour{white}{$2$}& \contour{white}{$0.17$}\\
            \rowcolor{MyBlue!16.0} \contour{white}{$B$}& \contour{white}{$3$}& \contour{white}{$0.16$}\\
            \rowcolor{MyBlue!14.0} \contour{white}{$G$}& \contour{white}{$4$}& \contour{white}{$0.14$}\\
            \rowcolor{MyBlue!7.0} \contour{white}{$J$}& \contour{white}{$5$}& \contour{white}{$0.07$}\\
            \rowcolor{MyBlue!4.0} \contour{white}{$F$}& \contour{white}{$6$}& \contour{white}{$0.04$}\\
            \rowcolor{MyBlue!0.0} \contour{white}{$D$}& \contour{white}{$7$}& \contour{white}{$0.0$}\\
            \rowcolor{MyBlue!0.0} \contour{white}{$E$}& \contour{white}{$7$}& \contour{white}{$0.0$}\\
            $\cdots$ & $\cdots$ & $\cdots$\\
            \bottomrule
            \end{tabular}
        \caption{$\alpha$-Rank results.}
        \label{table:alpharank_soccer}
    \end{subtable}
    \caption{MuJoCo soccer dataset.}
    \label{fig:results_soccer}
\end{figure}

\begin{figure}[t]
    \centering
    \begin{subfigure}[t]{0.5\textwidth}
        \centering
        \includegraphics[width=1.\textwidth,trim={4cm 3cm 3cm 3cm},clip]{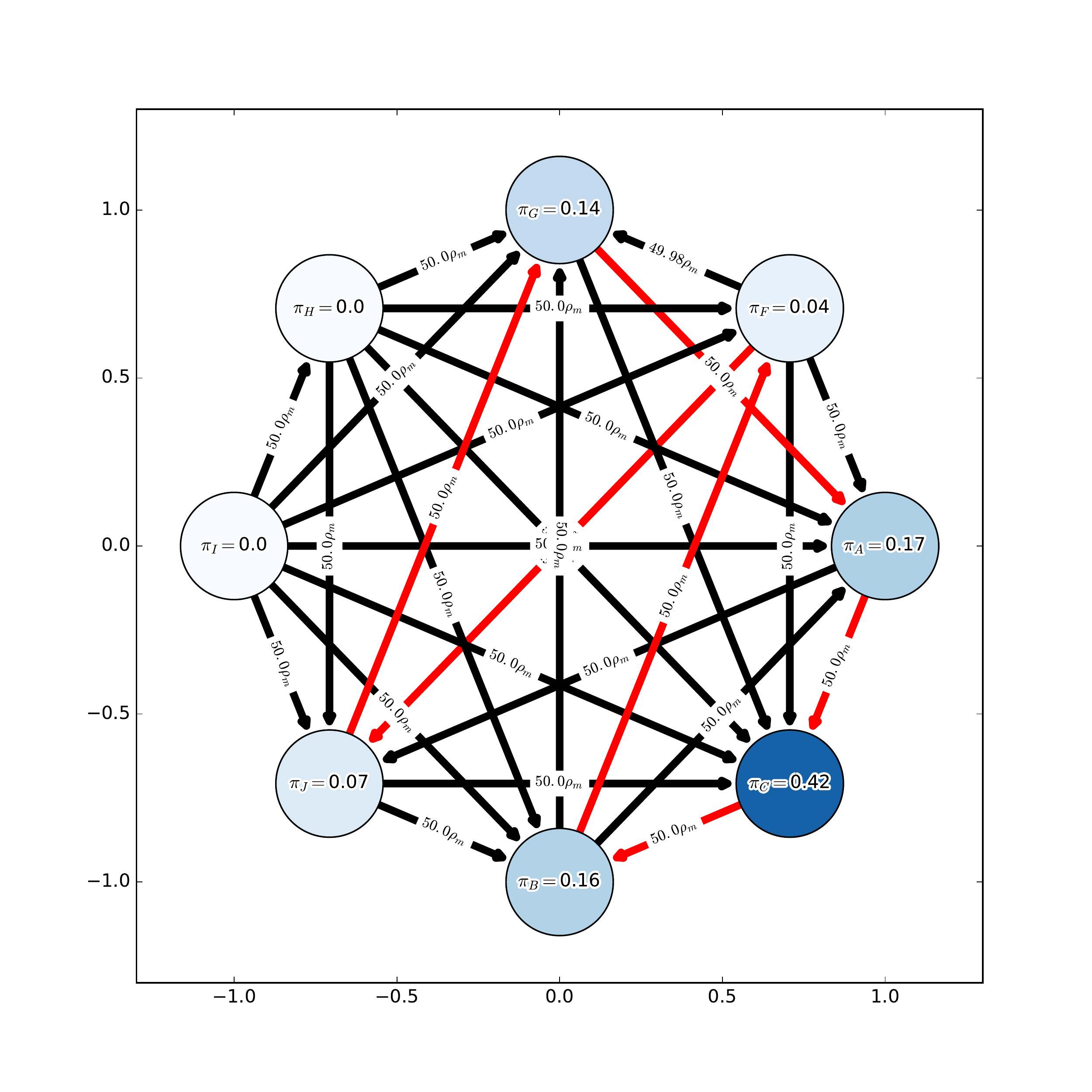}
        \caption{}
        \label{fig:mcc_cycle1}
    \end{subfigure}%
    \begin{subfigure}[t]{0.5\textwidth}
        \centering
        \includegraphics[width=1.\textwidth,trim={4cm 3cm 3cm 3cm},clip]{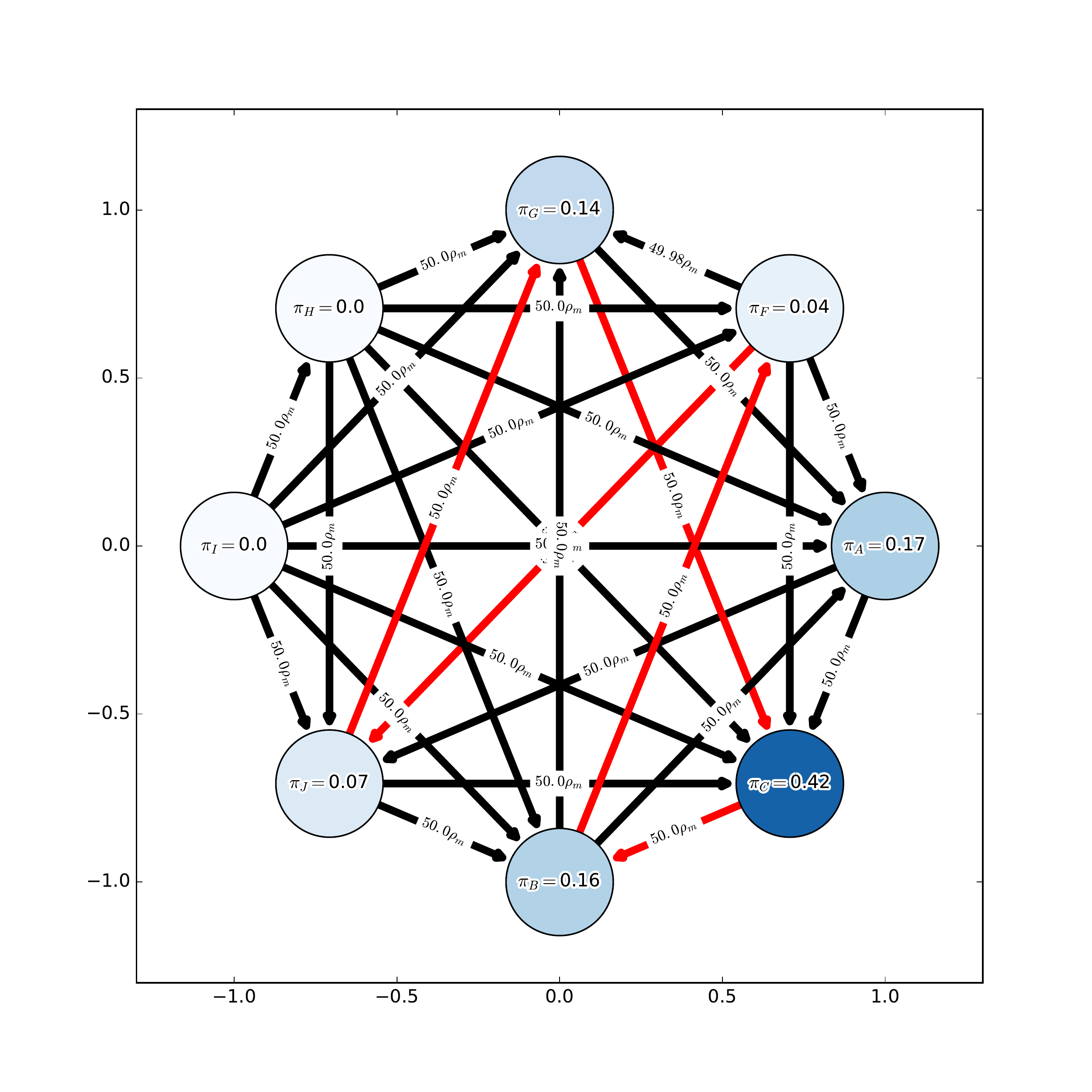}
        \caption{}
        \label{fig:mcc_cycle2}
    \end{subfigure}\\
    \begin{subfigure}[t]{0.5\textwidth}
        \centering
        \includegraphics[width=1.\textwidth,trim={4cm 3cm 3cm 3cm},clip]{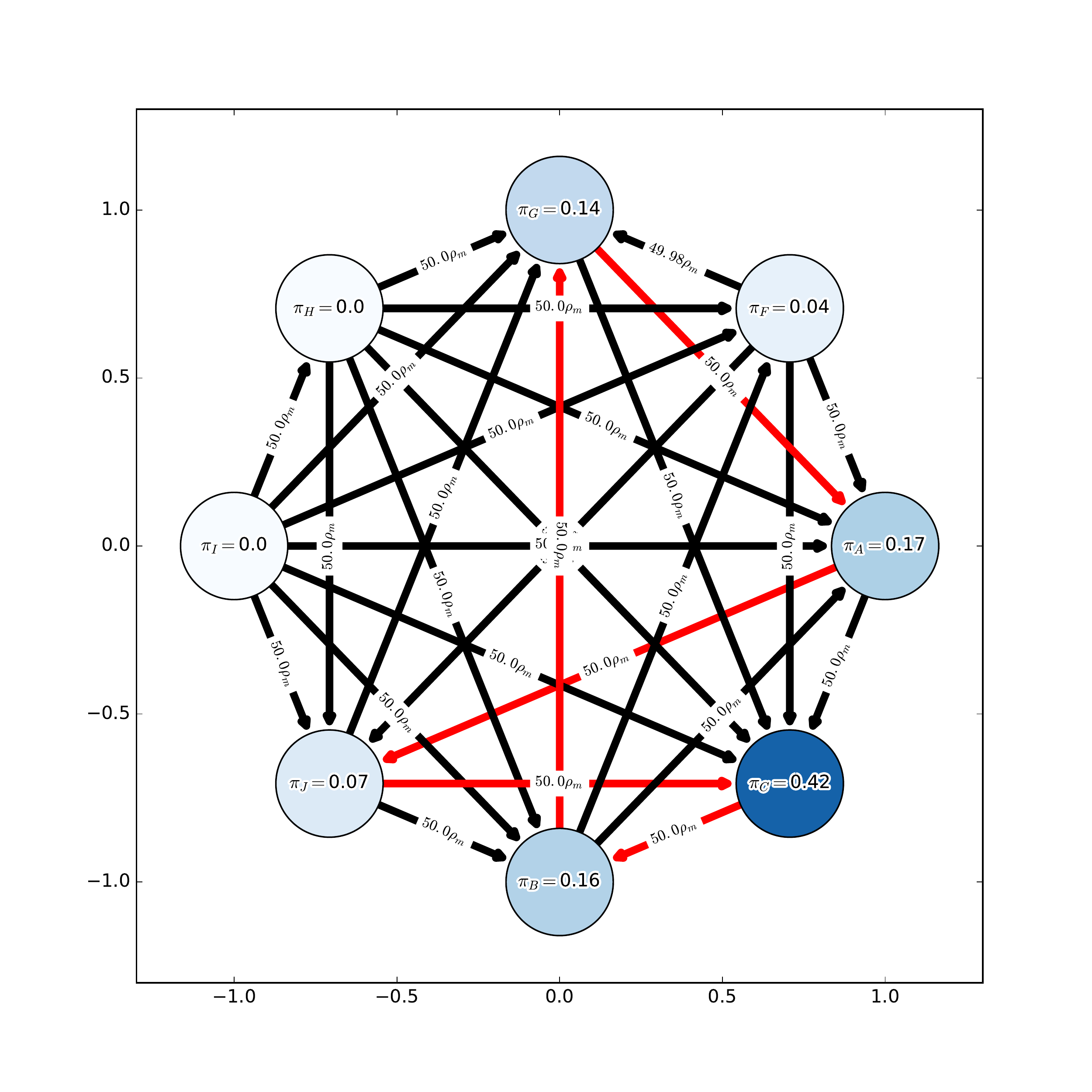}
        \caption{}
        \label{fig:mcc_cycle3}
    \end{subfigure}%
        \begin{subfigure}[t]{0.5\textwidth}
        \centering
        \includegraphics[width=1.\textwidth,trim={4cm 3cm 3cm 3cm},clip]{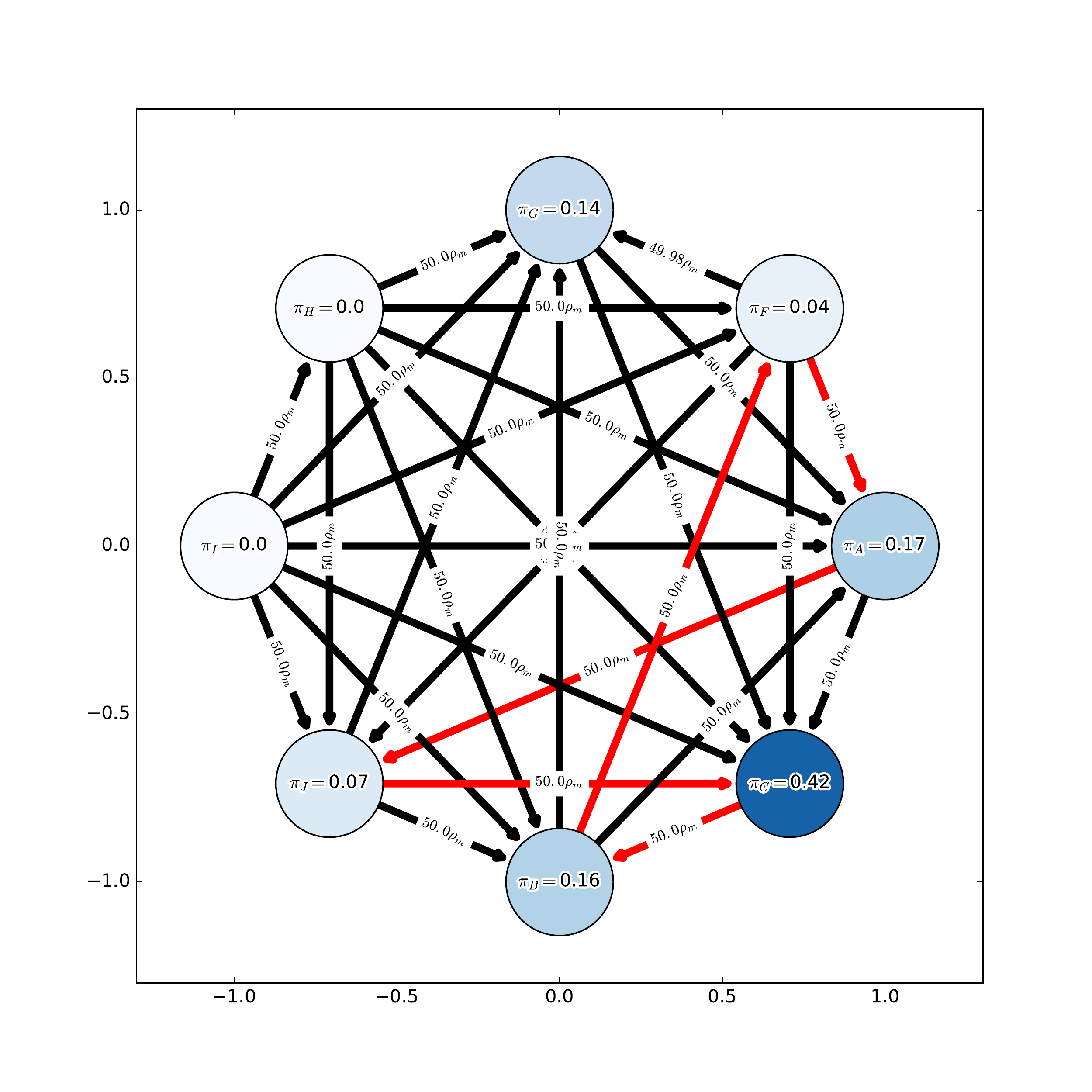}
        \caption{}
        \label{fig:mcc_cycle4}
    \end{subfigure}%
    \caption{Example cycles in the MuJoCo soccer domain.}
    \label{fig:mcc_soccer_cycles}
\end{figure}

We consider here a dataset consisting of complex agent interactions in the continuous-action domain of MuJoCo soccer \cite{liu2018emergent}.
Specifically, this domain involves a multi-agent soccer physics-simulator environment with teams of 2 vs. 2 agents in the MuJoCo physics engine \cite{todorov:12}.
Each agent, specifically, uses a distinct variation of algorithmic and/or policy parameterization component (see \cite{liu2018emergent} for agent specifications).
The underlying meta-game is a $2$-player NFG consisting of 10 agents, with payoffs corresponding to Figure 2 of \cite{liu2018emergent}.

We consider again the variation of the stationary distribution as a function of ranking-intensity $\alpha$ (\cref{fig:pi_vs_alpha_soccer}).
Under the large $\alpha$ limit, only 6 agent survive, with the remaining 4 agents considered transient in the long-term.
Moreoever, the top 3 $\alpha$-Ranked agents ($C$, $A$, and $B$, as shown in \cref{table:alpharank_soccer}) correspond to the strongest agents highlighted in \cite{liu2018emergent}, though $\alpha$-Rank highlights 3 additional agents ($G$, $J$, and $F$) that are not in the top-rank set outlined in their work.
An additional key benefit of our approach is that it can immediately highlight the presence of intransitive behaviors (cycles) in general games.
Worthy of remark in this dataset is the presence of a large number of cycles, several of which are identified in \cref{fig:mcc_soccer_cycles}.
Not only can we identify these cycles visually, these intransitive behaviors are automatically taken into account in our rankings due to the fundamental role that recurrence plays in our underlying solution concept.
This is in contrast to the Elo rating (which is incapable of dealing with intransitivities), the replicator dynamics (which are limited in terms of visualizing such intransitive behaviors for large games), and Nash (which is a static solution concept that does not capture dynamic behavior).

\subsubsection{Kuhn Poker}\label{sec:results_kuhn_poker}
\begin{figure}[t]
    \centering
    \begin{subfigure}[t]{0.7\textwidth}
        \centering
        \includegraphics[width=1.\textwidth]{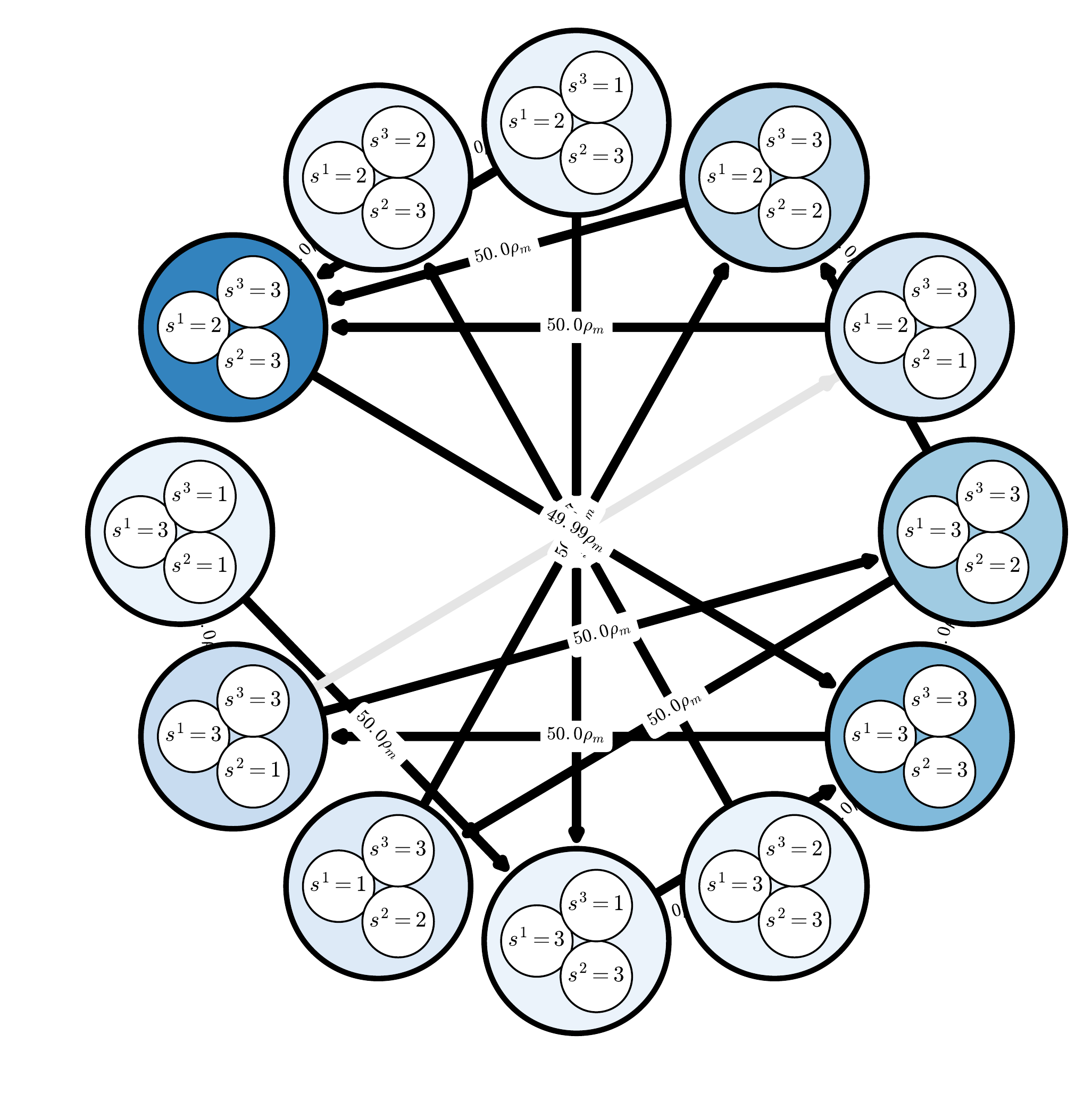}
        \caption{Discrete-time dynamics.}
        \label{fig:mcc_kuhn_poker_3p}
    \end{subfigure}\\
    \begin{subfigure}[t]{0.6\textwidth}
        \centering
        \includegraphics[width=0.8\textwidth]{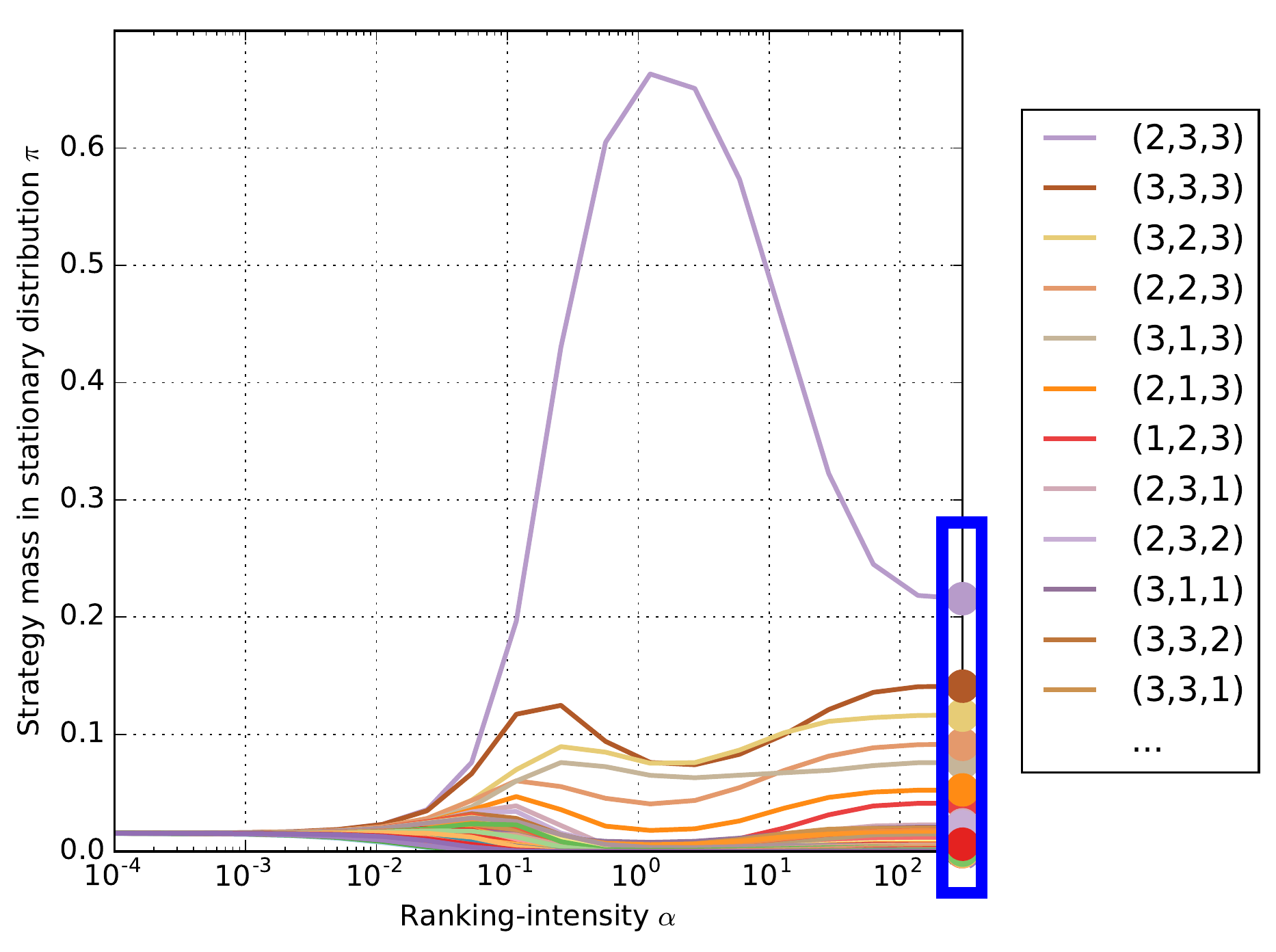}
        \caption{Ranking-intensity sweep.}
        \label{fig:pi_vs_alpha_kuhn_poker_3p}
    \end{subfigure}
    \hfill
    \begin{subtable}[b]{0.33\textwidth}
            \centering
            \def\arraystretch{1.2}
            \begin{tabular}{ *{3}{c} }
            \toprule
            Agent & Rank & Score\\
            \midrule
            \rowcolor{MyBlue!22.0} \contour{white}{$(2,3,3)$}& \contour{white}{$1$}& \contour{white}{$0.22$}\\
            \rowcolor{MyBlue!14.0} \contour{white}{$(3,3,3)$}& \contour{white}{$2$}& \contour{white}{$0.14$}\\
            \rowcolor{MyBlue!12.0} \contour{white}{$(3,2,3)$}& \contour{white}{$3$}& \contour{white}{$0.12$}\\
            \rowcolor{MyBlue!9.0} \contour{white}{$(2,2,3)$}& \contour{white}{$4$}& \contour{white}{$0.09$}\\
            \rowcolor{MyBlue!8.0} \contour{white}{$(3,1,3)$}& \contour{white}{$5$}& \contour{white}{$0.08$}\\
            \rowcolor{MyBlue!5.0} \contour{white}{$(2,1,3)$}& \contour{white}{$6$}& \contour{white}{$0.05$}\\
            \rowcolor{MyBlue!4.0} \contour{white}{$(1,2,3)$}& \contour{white}{$7$}& \contour{white}{$0.04$}\\
            \rowcolor{MyBlue!2.0} \contour{white}{$(2,3,1)$}& \contour{white}{$8$}& \contour{white}{$0.02$}\\
            \rowcolor{MyBlue!2.0} \contour{white}{$(2,3,2)$}& \contour{white}{$9$}& \contour{white}{$0.02$}\\
            \rowcolor{MyBlue!2.0} \contour{white}{$(3,1,1)$}& \contour{white}{$10$}& \contour{white}{$0.02$}\\
            \rowcolor{MyBlue!2.0} \contour{white}{$(3,3,2)$}& \contour{white}{$11$}& \contour{white}{$0.02$}\\
            \rowcolor{MyBlue!2.0} \contour{white}{$(3,3,1)$}& \contour{white}{$12$}& \contour{white}{$0.02$}\\
            $\cdots$ & $\cdots$ & $\cdots$\\
            \bottomrule
            \end{tabular}
        \caption{$\alpha$-Rank results.}
        \label{table:alpharank_kuhn_poker_3p}
    \end{subtable}
    \caption{3-player Kuhn poker (ranking conducted on all 64 pure strategy profiles).}
    \label{fig:results_kuhn_poker_3p}
\end{figure}

\begin{figure}[t]
    \centering
    \begin{subfigure}[t]{0.7\textwidth}
        \centering
        \includegraphics[width=1.\textwidth]{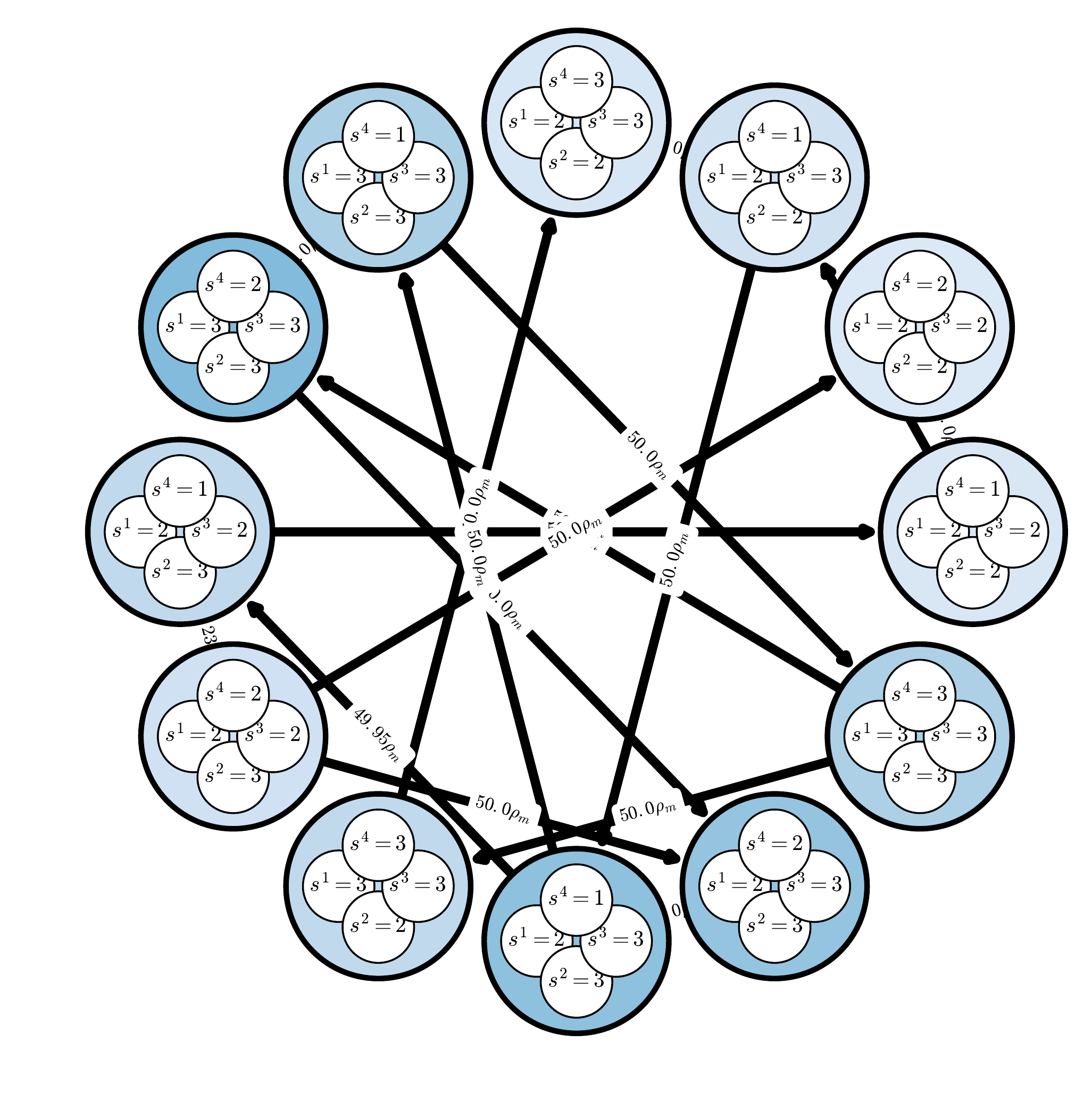}
        \caption{Discrete-time dynamics.}
        \label{fig:mcc_kuhn_poker_4p}
    \end{subfigure}\\
    \begin{subfigure}[t]{0.6\textwidth}
        \centering
        \includegraphics[width=0.8\textwidth]{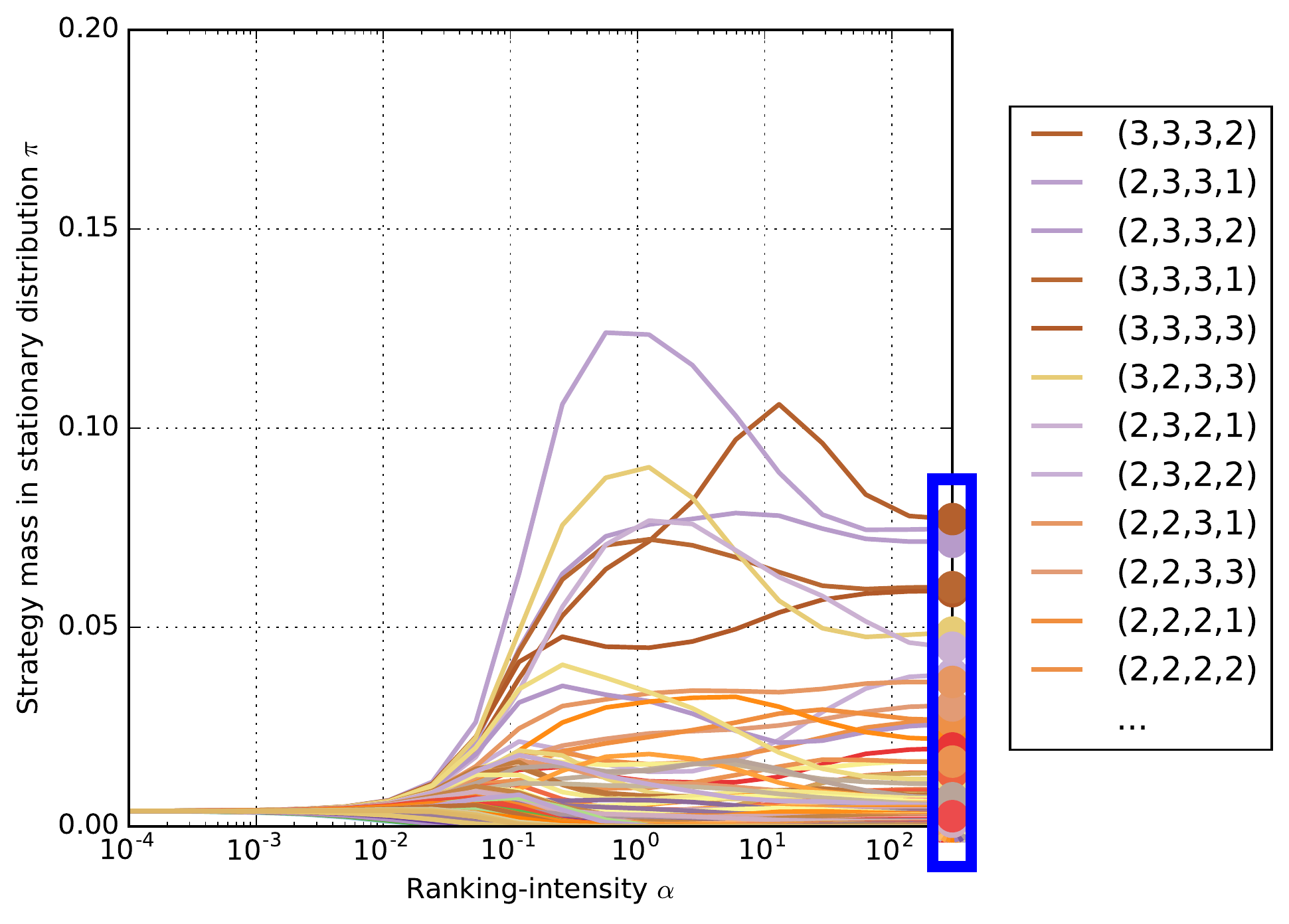}
        \caption{Ranking-intensity sweep.}
        \label{fig:pi_vs_alpha_kuhn_poker_4p}
    \end{subfigure}
    \hfill
    \begin{subtable}[b]{0.33\textwidth}
            \centering
            \def\arraystretch{1.2}
            \begin{tabular}{ *{3}{c} }
            \toprule
            Agent & Rank & Score\\
            \midrule
            \rowcolor{MyBlue!8.0} \contour{white}{$(3,3,3,2)$}& \contour{white}{$1$}& \contour{white}{$0.08$}\\
            \rowcolor{MyBlue!7.0} \contour{white}{$(2,3,3,1)$}& \contour{white}{$2$}& \contour{white}{$0.07$}\\
            \rowcolor{MyBlue!7.0} \contour{white}{$(2,3,3,2)$}& \contour{white}{$3$}& \contour{white}{$0.07$}\\
            \rowcolor{MyBlue!6.0} \contour{white}{$(3,3,3,1)$}& \contour{white}{$4$}& \contour{white}{$0.06$}\\
            \rowcolor{MyBlue!6.0} \contour{white}{$(3,3,3,3)$}& \contour{white}{$5$}& \contour{white}{$0.06$}\\
            \rowcolor{MyBlue!5.0} \contour{white}{$(3,2,3,3)$}& \contour{white}{$6$}& \contour{white}{$0.05$}\\
            \rowcolor{MyBlue!4.0} \contour{white}{$(2,3,2,1)$}& \contour{white}{$7$}& \contour{white}{$0.04$}\\
            \rowcolor{MyBlue!4.0} \contour{white}{$(2,3,2,2)$}& \contour{white}{$8$}& \contour{white}{$0.04$}\\
            \rowcolor{MyBlue!4.0} \contour{white}{$(2,2,3,1)$}& \contour{white}{$9$}& \contour{white}{$0.04$}\\
            \rowcolor{MyBlue!3.0} \contour{white}{$(2,2,3,3)$}& \contour{white}{$10$}& \contour{white}{$0.03$}\\
            \rowcolor{MyBlue!3.0} \contour{white}{$(2,2,2,1)$}& \contour{white}{$11$}& \contour{white}{$0.03$}\\
            \rowcolor{MyBlue!3.0} \contour{white}{$(2,2,2,2)$}& \contour{white}{$12$}& \contour{white}{$0.03$}\\
            $\cdots$ & $\cdots$ & $\cdots$\\
            \bottomrule
            \end{tabular}
        \caption{$\alpha$-Rank results.}
        \label{table:alpharank_kuhn_poker_4p}
    \end{subtable}
    \caption{4-player Kuhn poker (ranking conducted on all 256 pure strategy profiles).}
    \label{fig:results_kuhn_poker_4p}
\end{figure}

We next consider games wherein the inherent complexity is due to the number of players involved.
Specifically, we consider Kuhn poker with $3$ and $4$ players, extending beyond the reach of prior meta-game evaluation approaches that are limited to pairwise asymmetric interactions \cite{Tuyls18}.
Kuhn poker is a small poker game where each player starts with 2 chips, antes 1 chip to play, and receives one card face down from a deck of size $n$ + 1 (one card remains hidden). Players proceed by betting (raise/call) by adding their remaining chip to the pot, or passing (check/fold) until all players are either in (contributed as all other players to the pot) or out (folded, passed after a raise).
The player with the highest-ranked card that has not folded wins the pot.
The two-player game is known to have a continuum of strategies, which could have fairly high support, that depends on a single parameter: the probability that the first player raises with the highest card~\cite{Southey08}.
The three-player game has a significantly more complex landscape~\cite{Szafron13}.
The specific rules used for the three and four player variants can be found in~\cite[Section 4.1]{Lanctot14Further}.

Here, our meta-game dataset consists of several (fixed) rounds of extensive-form fictitious play (specifically, XFP from~\cite{Heinrich15FSP}): 
in round 0, the payoff corresponding to strategy profile $(0,0,0)$ in each meta-game of $3$-player Kuhn corresponds to the estimated payoff of each player using uniform random strategies;
in fictitious play round 1, the payoff entry $(1,1,1)$ corresponds to each player playing an approximate best response to the other players' uniform strategies;
in fictitious play round 2, entry $(2,2,2)$ corresponds to each playing  an approximate best response to the other players' uniform mixtures over their round 0 strategies (uniform random) and round 1 oracle strategy (best response to random); 
and so on. 
Note, especially, that oracles at round 0 are likely to be dominated (as they are uniform random).
In our dataset, we consider two asymmetric meta-games, each involving 3 rounds of fictitious play with 3 and 4 players (\cref{fig:results_kuhn_poker_3p} and \cref{fig:results_kuhn_poker_4p}, respectively). 

Of particular note are the total number of strategy profiles involved in these meta-games, 64 and 256 respectively for the 3 and 4 player games -- the highest considered in any of our datasets.
Conducting the evaluation using the replicator-dynamics based analysis of \cite{Tuyls18} can be quite tedious as all possible 2-face simplices must be considered for \emph{each} player.
Instead, here the $\alpha$-Rankings follow the same methodology used for all other domains, and are summarized succinctly in \cref{table:alpharank_kuhn_poker_3p,table:alpharank_kuhn_poker_4p}.
In both meta-games, the 3-round fictitious play strategies ($(3,3,3)$ and $(3,3,3,3)$, respectively) are ranked amongst the top-5 strategies.

\subsubsection{Leduc Poker}\label{sec:results_psro}
\begin{figure}[t]
    \centering
    \begin{subfigure}[t]{0.7\textwidth}
        \centering
        \includegraphics[width=1.\textwidth]{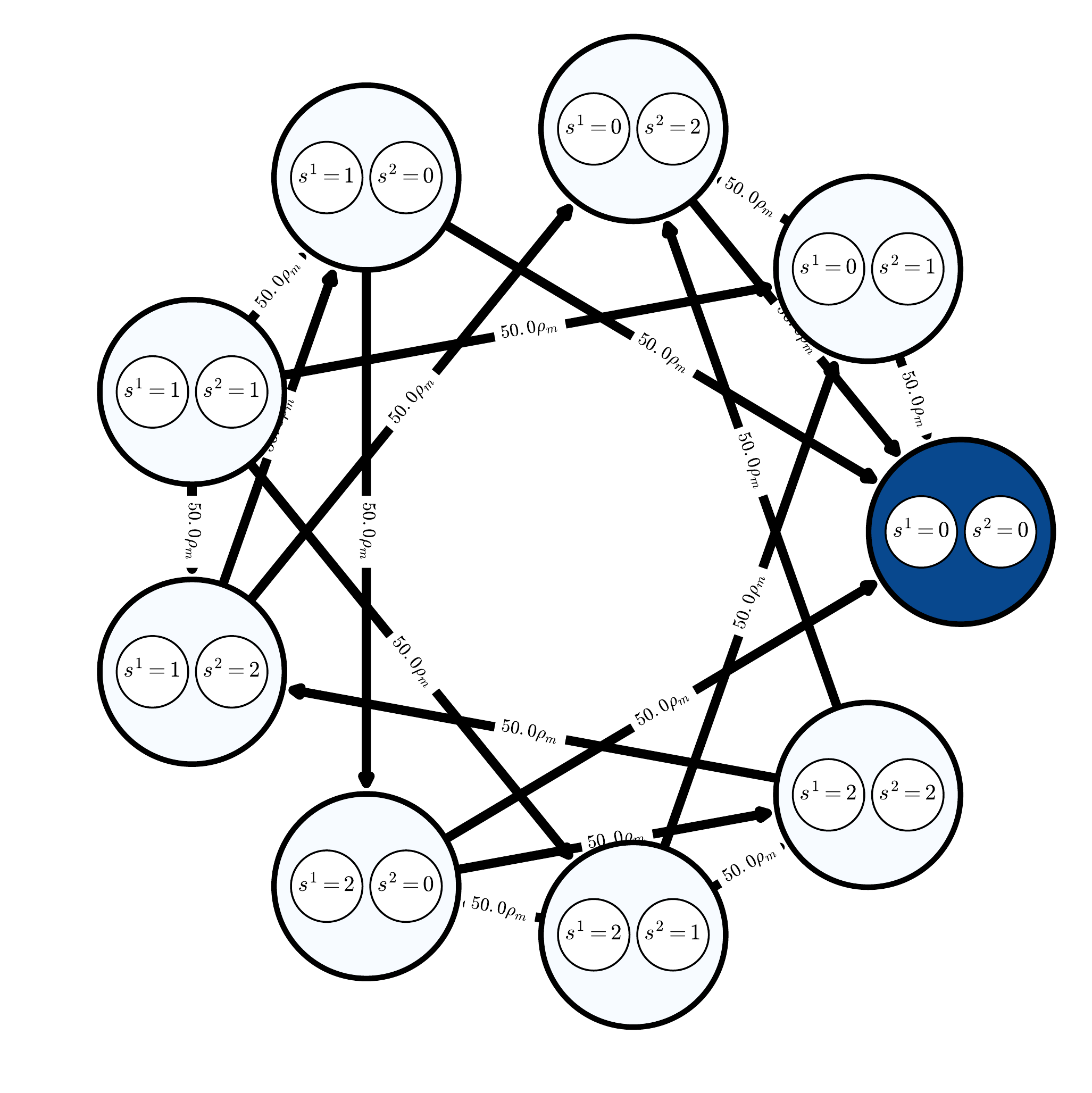}
        \caption{Discrete-time dynamics (top 8 agents shown only).}
        \label{fig:mcc_psro}
    \end{subfigure}\\
    \begin{subfigure}[t]{0.6\textwidth}
        \centering
        \includegraphics[width=0.8\textwidth]{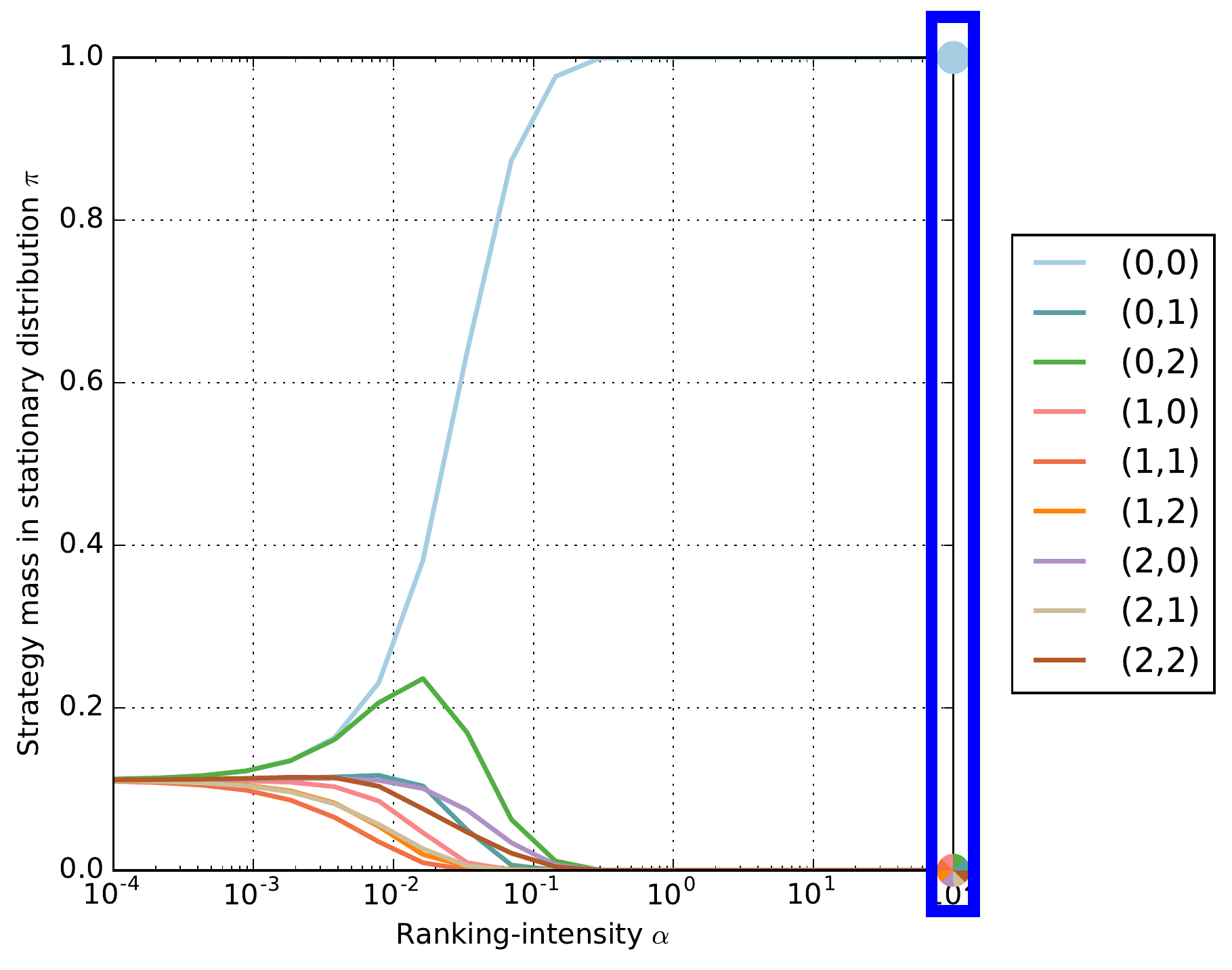}
        \caption{Ranking-intensity sweep.}
        \label{fig:pi_vs_alpha_psro}
    \end{subfigure}
    \hfill
    \begin{subtable}[b]{0.33\textwidth}
            \centering
            \def\arraystretch{1.2}
            \begin{tabular}{ *{3}{c} }
            \toprule
            Agent & Rank & Score\\
            \midrule
            \rowcolor{MyBlue!100.0} \contour{white}{$(0,0)$}& \contour{white}{$1$}& \contour{white}{$1.0$}\\
            \rowcolor{MyBlue!0.0} \contour{white}{$(0,1)$}& \contour{white}{$2$}& \contour{white}{$0.0$}\\
            \rowcolor{MyBlue!0.0} \contour{white}{$(0,2)$}& \contour{white}{$2$}& \contour{white}{$0.0$}\\
            \rowcolor{MyBlue!0.0} \contour{white}{$(1,0)$}& \contour{white}{$2$}& \contour{white}{$0.0$}\\
            \rowcolor{MyBlue!0.0} \contour{white}{$(1,1)$}& \contour{white}{$2$}& \contour{white}{$0.0$}\\
            \rowcolor{MyBlue!0.0} \contour{white}{$(1,2)$}& \contour{white}{$2$}& \contour{white}{$0.0$}\\
            \rowcolor{MyBlue!0.0} \contour{white}{$(2,0)$}& \contour{white}{$2$}& \contour{white}{$0.0$}\\
            \rowcolor{MyBlue!0.0} \contour{white}{$(2,1)$}& \contour{white}{$2$}& \contour{white}{$0.0$}\\
            \rowcolor{MyBlue!0.0} \contour{white}{$(2,2)$}& \contour{white}{$2$}& \contour{white}{$0.0$}\\
            \bottomrule
            \end{tabular}
        \caption{$\alpha$-Rank strategy rankings and scores (top 8 agents shown only).}
        \label{table:alpharank_psro}
    \end{subtable}
    \caption{PSRO poker dataset.}
    \label{fig:results_psro}
\end{figure}

The meta-game we consider next involves agents generated using the Policy Space Response Oracles (PSRO) algorithm \cite{Lanctot17}.
Specifically, PSRO can be viewed as a generalization of fictitious play, which computes approximate responses (``oracles'') using deep reinforcement learning, along with arbitrary meta-strategy solvers; here, PSRO is applied to the game of Leduc poker.
Leduc poker involves a deck of 6 cards (jack, queen, and king in two suits). 
Players have a limitless number of chips.
Each player antes 1 chip to play and receives an initial private card; in the first round players can bet a fixed amount of 2 chips, in the second round can bet 4 chips, with a maximum of two raises in each round.
Before the second round starts, a public card is revealed.
The corresponding meta-game involves 2 players with 3 strategies each, which correspond to the first three epochs of the PSRO algorithm.
Leduc poker is a commonly used benchmark in the computer poker literature~\cite{Southey05Bayes}: our implementation contains 936 information states (approximately 50 times larger then 2-player Kuhn poker), and is non-zero sum due to penalties imposed by selecting of illegal moves, see~\cite[Appendix D.1]{Lanctot17} for details.

We consider in \cref{fig:mcc_psro} the Markov chain corresponding to the PSRO dataset, with the corresponding $\alpha$-Rank yielding profile $(0,0)$ as the top-ranked strategy, which receives 1.0 of the stationary distribution mass and essentially consumes the entire strategy space in the long-term of the evolutionary dynamics. 
This corresponds well to the result of \cite{Tuyls18}, which also concluded that this strategy profile consumes the entire strategy space under the replicator dynamics;
in their approach, however, an equilibrium selection problem had to be dealt with using human-in-the-loop intervention due to the population-wise dynamics decomposition their approach relies on.
Here, we need no such intervention as $\alpha$-Rank directly yields the overall ranking of all strategy profiles.

\ifKeepSCII

\subsubsection{StarCraft II}\label{sec:results_sc2}
\begin{figure}[t]
    \centering
    \begin{subfigure}[t]{0.7\textwidth}
        \centering
        \includegraphics[width=1.\textwidth]{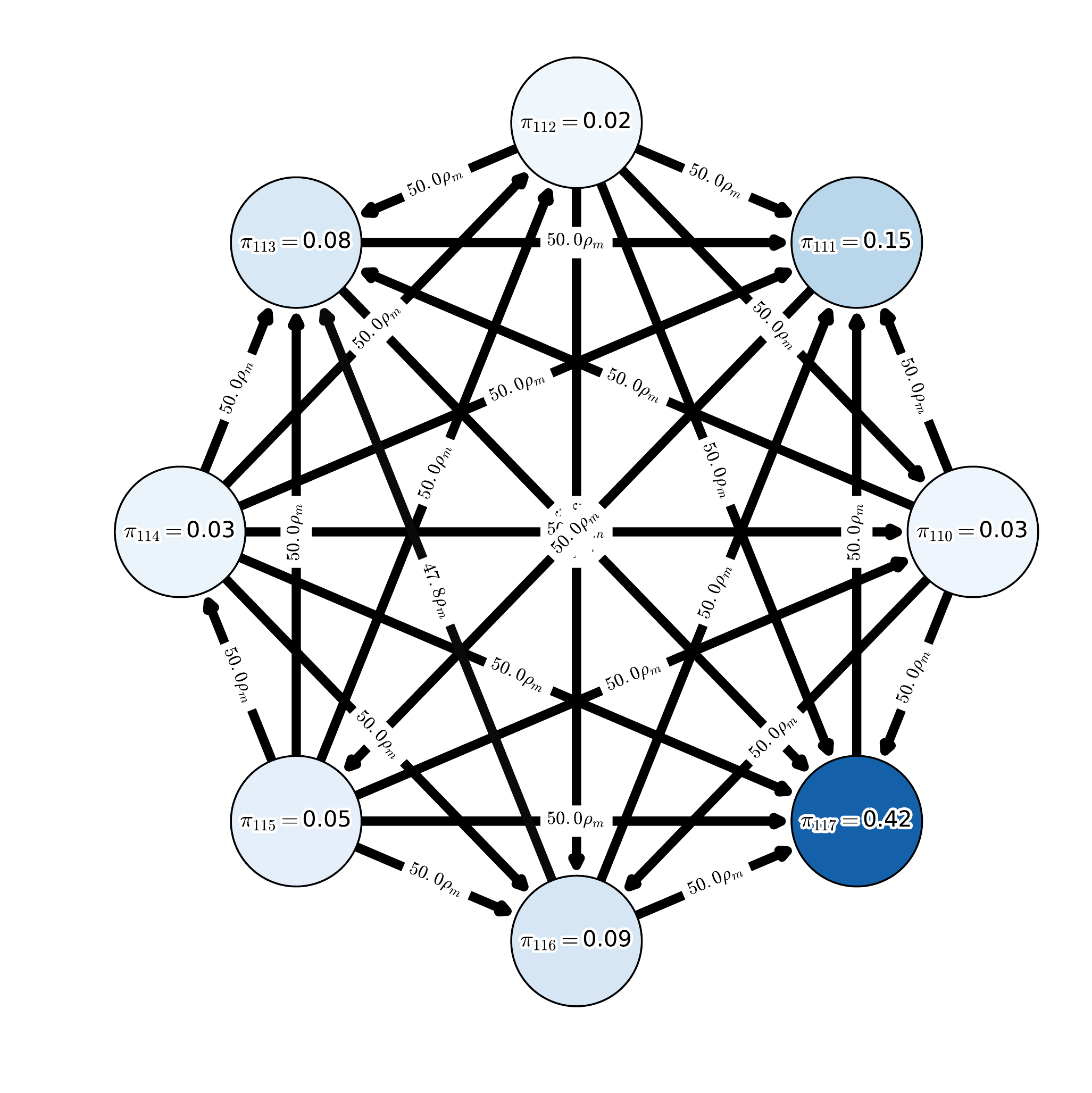}
        \caption{Discrete-time dynamics.}
        \label{fig:mcc_sc2_local_json}
    \end{subfigure}\\
    \begin{subfigure}[t]{0.6\textwidth}
        \centering
        \includegraphics[width=0.8\textwidth]{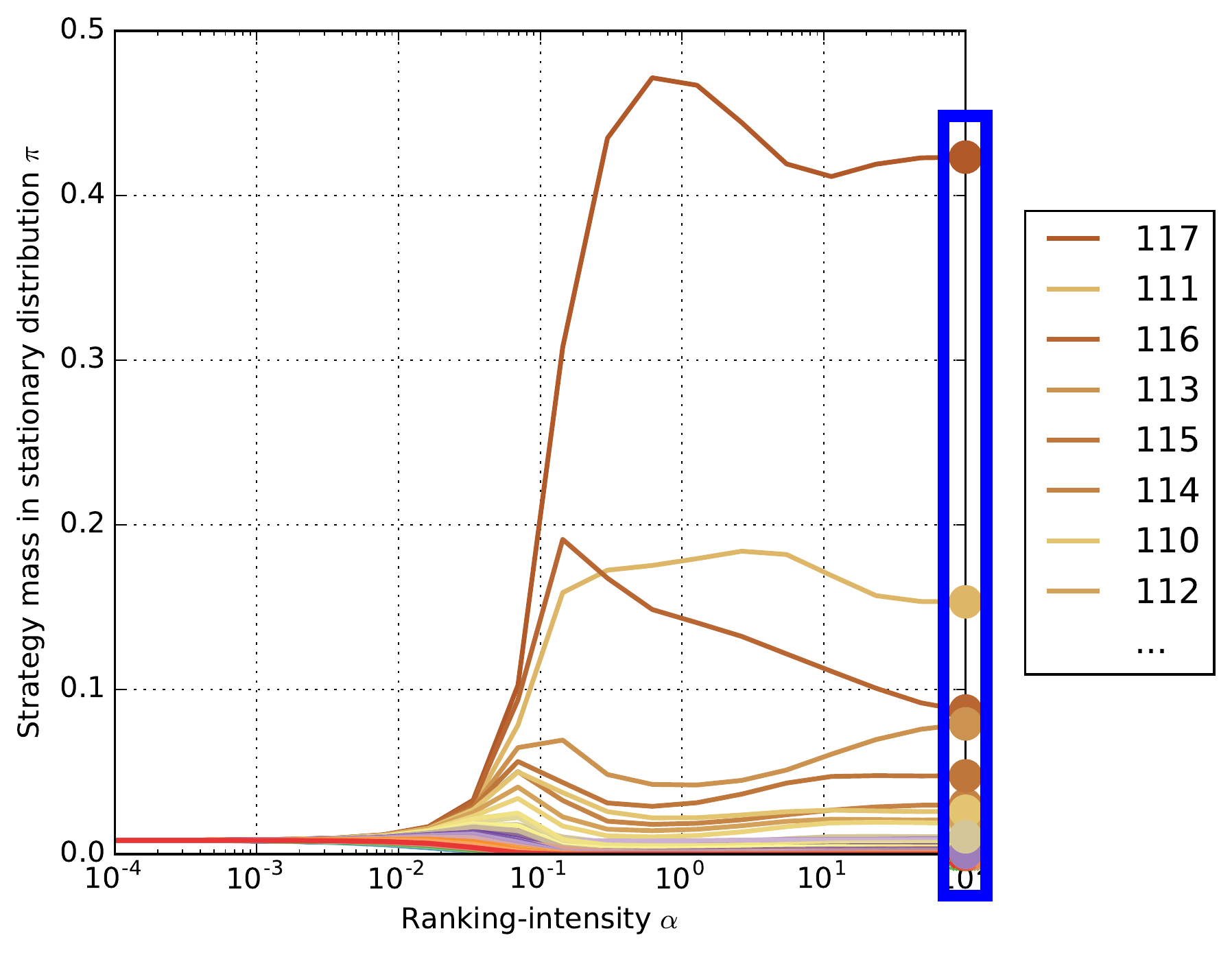}
        \caption{Ranking-intensity sweep.}
        \label{fig:pi_vs_alpha_sc2_local_json}
    \end{subfigure}
    \hfill
    \begin{subtable}[b]{0.33\textwidth}
            \centering
            \def\arraystretch{1.2}
            \begin{tabular}{ *{3}{c} }
            \toprule
            Agent & Rank & Score\\
            \midrule
            \rowcolor{MyBlue!42.0} \contour{white}{$117$}& \contour{white}{$1$}& \contour{white}{$0.42$}\\
            \rowcolor{MyBlue!15.0} \contour{white}{$111$}& \contour{white}{$2$}& \contour{white}{$0.15$}\\
            \rowcolor{MyBlue!9.0} \contour{white}{$116$}& \contour{white}{$3$}& \contour{white}{$0.09$}\\
            \rowcolor{MyBlue!8.0} \contour{white}{$113$}& \contour{white}{$4$}& \contour{white}{$0.08$}\\
            \rowcolor{MyBlue!5.0} \contour{white}{$115$}& \contour{white}{$5$}& \contour{white}{$0.05$}\\
            \rowcolor{MyBlue!3.0} \contour{white}{$114$}& \contour{white}{$6$}& \contour{white}{$0.03$}\\
            \rowcolor{MyBlue!3.0} \contour{white}{$110$}& \contour{white}{$7$}& \contour{white}{$0.03$}\\
            \rowcolor{MyBlue!2.0} \contour{white}{$112$}& \contour{white}{$8$}& \contour{white}{$0.02$}\\
            $\cdots$ & $\cdots$ & $\cdots$\\
            \bottomrule
            \end{tabular}
        \caption{$\alpha$-Rank results.}
        \label{table:alpharank_sc2_local_json}
    \end{subtable}
    \caption{StarCraft II dataset.}
    \label{fig:results_sc2_local_json}
\end{figure}
\textbf{[Disclaimer: Any discussion or analysis of StarCraft II is part of the internal technical report only, and will be removed for submission.]}

We consider also a StarCraft II dataset consisting of 118 agents from our internal leaderboard.
The underlying meta-game is a 2-player symmetric NFG, with payoffs corresponding to win-rates of agents when matched against one another.
Agents are numbered $0$ through $117$ in our dataset, with larger indices corresponding to `newer' agent snapshots on the leaderboard.
Overall $\alpha$-Rankings are summarized in \cref{table:alpharank_sc2_local_json}, with the evolutionary strengths of each agent indicated by its corresponding `score' (third column).

One might ask how this analysis differs from the agent rankings currently conducted on our internal leaderboards that rely on the Nash solution concept and Elo ratings. 
Let us first focus on Nash. 
In our view, any meta-training algorithm used to train a pool of agents itself constitutes a dynamical (evolutionary) system over a population of agents.
We highlight in \cref{sec:incompatibility_nash_dynamical} a fundamental incompatibility of the long-standing Nash solution concept (which is static by definition) with the limiting behaviors of a dynamical systems.
Nash simply cannot capture such long-term irreducible behaviors, and can therefore not be relied on to provide us a comprehensive picture of agent rankings.
Moreover, in contrast to Elo, the dynamical nature of the novel solution concept (Markov-Conley chains) we introduce and use herein enables $\alpha$-Rank to automatically identify cycles and yield an evolutionary ranking that filters out transient agents, but not those involved in limit-cycles, periodic orbits, or fixed points.
Finally, our rankings can be computed extremely efficiently, with evaluations conducted on a larger leaderboard consisting of 800 agents running in a few seconds on a local machine; 
while this may not be hugely consequential in small zero-sum games that can leverage an LP-solver to solve for Nash, it can quickly become an issue for the more complex and general-sum many-player games we may wish to tackle in the future.
\fi

\section{Discussion}
\begin{figure}[t!]
    \centering
    \includegraphics[width=\linewidth,page=4]{figs/OverallPicture.pdf}
    \caption{A retrospective look on the paper contributions. We introduced a general descriptive multi-agent evaluation method, called \emph{$\alpha$-Rank}, which is practical in the sense that it is easily applicable in complex game-theoretic settings, and theoretically-grounded in a solution concept called Markov-Conley chains (MCCs). $\alpha$-Rank has a strong theoretical and specifically evolutionary interpretation; the overarching perspective considers a chain of models of increasing complexity, with a discrete-time macro-dynamics model on one end, continuous-time micro-dynamics on the other end, and MCCs as the link in between. We provided both scalability properties and theoretical guarantees for the overall ranking methodology.}
    \label{fig:summary}
\end{figure}


We introduced a general descriptive multi-agent evaluation method, called \emph{$\alpha$-Rank}, which is practical and general in the sense that it is easily applicable in complex game-theoretic settings, including $K$-player asymmetric games that existing meta-game evaluation methods such as \cite{TuylsSym,Tuyls18} cannot feasibly be applied to.
The techniques underlying $\alpha$-Rank were motivated due to the fundamental incompatibility identified between the dynamical processes typically used to model interactions of agents in meta-games, and the Nash solution concept typically used to draw conclusions about these interactions.
Using the Nash equilibrium as a solution concept for meta-game evaluation in these dynamical models is in many ways problematic: computing a Nash equilibrium is not only computationally difficult \cite{VonStengel20021723,Daskalakis06thecomplexity}, and there are also intractable equilibrium selection issues even if Nash equilibria can be computed \cite{Harsan88,Avis10,goldberg2013complexity}.
$\alpha$-Rank, instead, is theoretically-grounded in a novel solution concept called Markov-Conley chains (MCCs), which are inherently dynamical in nature.
A key feature of $\alpha$-Rank is that it relies on only a single hyperparameter, its ranking-intensity value $\alpha$, with sufficiently high values of $\alpha$ (as determined via a parameter sweep) yielding closest correspondence to MCCs.

The combination of MCCs and $\alpha$-Rank yields a principled methodology with a strong evolutionary interpretation of agent rankings, as outlined in \cref{fig:summary}; 
this overarching perspective considers a spectrum of evolutionary models of increasing complexity.
On one end of the spectrum, the continuous-time dynamics micro-model provides detailed insights into the simplex, illustrating flows, attractors, and equilibria of agent interactions.
On the other end, the discrete-time dynamics macro-model provides high-level insights of the time limit behavior of the system as modeled by a Markov chain over interacting agents.
The unifying link between these models is the MCC solution concept, which builds on the dynamical theory foundations of Conley \cite{conley1978isolated} and the topological concept of chain components. 
We provided both scalability properties and theoretical guarantees for our ranking method.
Finally, we evaluated the approach on an extensive range of meta-game domains including AlphaGo \cite{DSilverHMGSDSAPL16}, AlphaZero \cite{silver2018general}, MuJoCo Soccer \cite{liu2018emergent}, 
\ifKeepSCII
Poker \cite{Lanctot17}, and StarCraft II \cite{Vinyals17}, 
\else
and Poker \cite{Lanctot17},
\fi
which exhibit a range of complexities in terms of payoff asymmetries, number of players, and number of agents involved.
We strongly believe that the generality of $\alpha$-Rank will enable it to play an important role in evaluation of AI agents, e.g., on leaderboards. 
More critically, we believe that the computational feasibility of the approach, even when many agents are involved (e.g., AlphaZero), makes its integration into the agent \emph{training} pipeline a natural avenue for future work.

\bibliographystyle{plainnat}


\bibliography{all_bibs}

\section*{Acknowledgements}
We are very grateful to G.~Ostrovski, T.~Graepel, E.~Hughes, Y.~Bachrach, K.~Kavukcuoglu, D.~Silver, T.~Hubert, J.~Schrittwieser, S.~Liu, G.~Lever, and D.~Bloembergen, for helpful comments, discussions, and for making available datasets used in this document.

Christos Papadimitriou acknowledges NSF grant 1408635 “Algorithmic Explorations of Networks, Markets, Evolution, and the Brain”, and NSF grant 1763970 to Columbia University.
Georgios Piliouras acknowledges SUTD grant SRG ESD 2015 097, MOE AcRF Tier 2 Grant 2016-T2-1-170,  grant PIE-SGP-AI-2018-01 and NRF 2018 Fellowship NRF-NRFF2018-07.









\clearpage
\section{Supplementary Material}\label{sec:appendix}


\subsection{Most Closely Related Work}\label{sec:related_work}

We describe related work revolving around Empirical Game Theory analysis (EGTA), discrete-time dynamics models and multi-agent interactions in evolution of cooperation research, and precursors to our new solution concept of \mcc.

The purpose of the first applications of EGTA was to reduce the complexity of large economic problems in electronic commerce, such as continuous double auctions, supply chain management, market games, and automated trading~\cite{Walsh02,Walsh03,Vorobeychik07,Wellman06}. 
While these complex economic problems continue to be a primary application area of these methods~\cite{Wah15,Brinkman16,Wah17,Wang18}, 
the general techniques have been applied in many different settings. These include analysis of interactions among heuristic meta-strategies in poker~\cite{PonsenTKR09}, network protocol compliance~\cite{Wellman13}, collision avoidance in robotics \cite{Hennes13}, and security games~\cite{Prakash15,Wright16,Nguyen17}. 

Evolutionary dynamics have often been presented as a practical tool for analyzing interactions among meta-strategies found in EGTA~\cite{Walsh02,Hennes13,BloembergenTHK15}, and for studying the change in policies of multiple learning agents~\cite{BloembergenTHK15}, as the EGTA approach is largely based on the same assumptions as evolutionary game-theory, viz. repeated interactions among sub-groups sampled independently at random from an arbitrarily-large population of agents. 





From the theoretical biology perspective, researchers have additionally deployed discrete-time evolutionary dynamics models \cite{Nowakbook06}. These models typically provide insights in the macro-dynamics of the overall behavior of agents in strategy space, corresponding to flow rates at the edges of a manifold \cite{Nowak793,Traulsen06a,Traulsen06b,Santos11,Segbroek12}.
These studies usually focus on biological games, the evolution of cooperation and fairness in social dilemma's like the iterated prisoner's dilemma or signalling games, deploying, amongst others, imitation dynamics with low mutation rates \cite{fudenberg2006imitation,veller2016finite}. Similar efforts investigating evolutionary dynamics inspired by statistical physics models have been taken as well \cite{Perc18a,Perc18b}.

In the framework of \emph{the evolution of conventions} \citep{PY1993}, a repeated game is played by one-time players who learn from past plays and are subject to noise and mistakes.
Essentially algorithmic, and in the same line of thought as our formalism, it solves the equilibrium selection problem of \emph{weakly acyclic} games (in our terminology explained in \cref{subsec:MCC}: games whose sink strongly connected components happen to be singletons), and in this special case it aligns very well with our proposed solution concept.
Another equilibrium selection concept related to \mcc\ is the concept of {\em closed under rational behavior (CURB)} set of strategies \citep{Basu}.  
The notion of a {\em sink equilibrium}, defined by \citep{Goemans} for the purpose of exploring new variants of the price of anarchy, is also similar to our \mcc\ --- despite differences in mathematical detail, style, and use.
A method for decomposing games was introduced in \citep{candogan2011flows}, based on properties of game dynamics on the graph of pure strategy profiles by exploiting conceptually similarities to the structure of continuous vector fields.
Researchers have also carried out studies of evaluation metrics in the fields of computer science, machine learning, and artificial intelligence \cite{japkowicz2011evaluating,hernandez2017evaluation,hernandez2017measure}. 
PageRank \cite{page1999pagerank}, an algorithm used for ranking webpages, uses a Markov chain where states are webpages and transitions capture links between these pages; though the Markov chain foundations are related to those used in our work, they are not rooted in an evolutionary dynamical system nor in a game-theoretic solution concept, and as such are quite different to the method presented here which also generalizes across several dimensions.
The Elo rating system has ubiquitously been used for ranking and predicting outcomes in board games \cite{Elo78}, sports \cite{Hvattum2010460,Cattelan13,aldous2017,Sullivan16,Wunderlich18}, and artificial intelligence \cite{mnih:15,silver2018general}. 
This rating system, however, comes with two key limitations \cite{Balduzzi18}: first, it has no predictive power in games with intransitive (cyclic) relations in the set of evaluated agents (e.g., in Rock-Paper-Scissors); second, the rating of a given agent can be artificially inflated by including duplicate copies of weaker agents in the set.


\subsection{Background in Dynamical Systems}
\label{sec:background_dynamics}

\begin{definition}[Flow]
    A flow on a topological space $X$ is a continuous mapping $\phi:\real\times X \rightarrow X$ such that
    \begin{description}
    \item{(i)} $\phi$($t, \cdot$)$: X \rightarrow X$ is a homeomorphism for each $t \in \real$.
    \item{(ii)} $\phi(0, x)=x$ for all $x \in X$.
    \item{(iii)} $\phi(s+t, x)= \phi(s,\phi(t,x))$ for all $s,t \in \real$ and all $x \in X$.
    \end{description}
\end{definition}
The second property is known as the group property of the flows. The topological space $X$ is
 called the phase (or state) space of the flow.  

\begin{definition}
Let $X$ be a set. A map (or discrete dynamical system) is a function $f:X\rightarrow X$.
\end{definition}

Typically, we write $\phi^t({x})$  for $\phi(t,{x})$ and denote a flow $\phi:\real\times X\rightarrow X$  by $\phi^t:X\rightarrow X$, where  the group property appears as $\phi^{t+s}({x})=\phi^s(\phi^t({x}))$ for all ${x} \in X$ and $s,t \in \real$. 
Sometimes, depending on context, we use the notation 
 $\phi^t$ to also signify the map $\phi(t,\cdot)$ for a fixed real number $t$. 
 The map $\phi^1$ is useful to relate the behavior of a flow to the behavior of a map.

\begin{definition}
If $\phi(t,\cdot)$ is a flow on a topological space $X$, then the function $\phi^1$ defines the time-one map of $\phi$.
\end{definition}

Since our state space is compact and the replicator vector field is Lipschitz-continuous, 
we can present
the unique solution of  our ordinary differential equation 
by a flow $\phi:\real\times\mathcal{S}\rightarrow \mathcal{S}$. Fixing starting point $x \in \mathcal{S}$ defines a function of time which captures the trajectory (orbit, solution path) of the system with the given starting point. 
This corresponds to the graph of $\phi(\cdot,x):\real\rightarrow \mathcal{S}$, \textit{i.e.}, the set $\{(t,y): y=\phi(t,x) \text{ for some } t \in \real\}$.

If the starting point $x$ does not correspond to an equilibrium then we wish to capture the asymptotic behavior of the system (informally the limit of $\phi(t,x)$ when $t$ goes to infinity). Typically, however, such functions  do not exhibit a unique limit point so instead we study the set of limits of all
possible convergent subsequences. Formally, given a dynamical system $(\real, \mathcal{S}, \phi)$ with flow $\phi: \mathcal{S} \times \real \rightarrow\mathcal{S}$ and a starting point $x\in \mathcal{S}$, we call point $y\in \mathcal{S}$ an $\omega$-limit point of the orbit through $x$ if there exists a sequence $(t_n)_{n \in \mathbb{N}} \in \real$ such that
$\lim_{n \to \infty} t_n = \infty,~ \lim_{n \to \infty} \phi(t_n, x) = y.$
%
%
%
%
%
Alternatively the $\omega$-limit set can be defined as:
$\omega_{\Phi}(x)=\cap_t \overline{\cup_{\tau\geq t} \phi(\tau, x)}.$
%
%


We denote the boundary of a set $S$ as $\text{\text{bd}(S)}$ and the interior of $S$ as $\text{int}(S)$.
In the case of the replicator dynamics where the state space $S$ corresponds to a product of agent (mixed) strategies we will denote by $\phi_i(x,t)$ the projection of the state on the simplex of mixed strategies of agent $i$.
In our replicator system we embed our state space with the standard topology and the Euclidean distance metric.
%
%
Since our state space 
is compact, 
we can present
the solution of  our system 
as a map $\Phi:\mathcal{S}\times\real\rightarrow \mathcal{S}$ called flow of the system. Fixing starting point $x \in \mathcal{S}$ defines function of time which captures the solution path (orbit, trajectory) of the system with the given starting point. On the other hand, by fixing time $t$, we obtain a smooth map of the state space to itself $\Phi^t:\mathcal{S}\rightarrow\mathcal{S}$. The resulting family of mappings exhibits the standard group properties such as identity $(\Phi^0)$ and existence of inverse $(\Phi^{-t})$, and closure under composition $\Phi^{t_1} \circ \Phi^{t_2} = \Phi^{t_1+t_2}$.

Finally, the boundary of a subset $S$ is the set of points in the closure of $S$, not belonging to the interior of $S$. An element of the boundary of $S$ is called a boundary point of $S$. We denote the boundary of a set $S$ as $\text{\text{bd}(S)}$ and the interior of $S$ as $\text{int}(S)$.

\subsection*{Liouville's Formula}

Liouville's formula can be applied to any system of autonomous differential equations with a continuously differentiable
vector field $\xi$ on an open domain of $\mathcal{S} \subset \real^k$.  The divergence of $\xi$ at $x \in \mathcal{S}$ is defined
as the trace of the corresponding Jacobian at $x$, \textit{i.e.},  $\text{div}[\xi(x)]=\sum_{i=1}^k \frac{\partial \xi_i}{\partial x_i}(x)$. 
Since divergence is a continuous function we can compute its integral over measurable sets $A\subset \mathcal{S}$. Given 
any such set $A$, let $A(t)= \{\Phi(x_0,t): x_0 \in A\}$ be the image of $A$ under map $\Phi$ at time $t$. $A(t)$ is measurable
and is volume is $\text{vol}[A(t)]= \int_{A(t)}dx$. Liouville's formula states that the time derivative of the volume $A(t)$ exists and
is equal to the integral of the divergence over $A(t)$: $\frac{d}{dt} [A(t)] = \int_{A(t)} \text{div} [\xi(x)]dx.$

A vector field is called divergence free if its divergence is zero everywhere. Liouville's formula trivially implies that volume is preserved
in such flows.

Volume preservation is a useful property that allows us to argue about recurrent (i.e., cycle-like) behavior of the dynamics.

\subsection*{Poincar\'{e}'s recurrence theorem}
Poincar\'{e} \citep{Poincare1890}  proved that in certain systems 
almost all trajectories return arbitrarily close to their initial position infinitely often.


\begin{theorem}
\citep{Poincare1890,barreira}
%
If a flow preserves volume and has only bounded orbits then for each open
set there exist orbits that intersect the set infinitely often.
\end{theorem}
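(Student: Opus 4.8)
The plan is to reduce the continuous-time statement to the classical discrete Poincar\'e recurrence lemma by passing to the time-one map $f = \phi^1$, which is a homeomorphism preserving volume: since $\mu(\phi^t(A)) = \mu(A)$ for every $t \in \real$, in particular $\mu(f^{-1}(E)) = \mu(\phi^{-1}(E)) = \mu(E)$, so the preimage form of measure-invariance that the lemma needs holds. First I would use the boundedness of orbits to confine the dynamics to a set of finite volume: every orbit lives in a bounded region of $\real^k$, which has finite volume, so the recurrence argument can be run inside a finite-measure invariant region. I will also use that the volume measure assigns positive mass to every nonempty open set, so a nonempty open set $O$ satisfies $\mu(O) > 0$.

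The heart of the argument is the discrete recurrence lemma: if $f$ preserves a finite measure $\mu$ and $\mu(A) > 0$, then almost every point of $A$ returns to $A$ under iteration of $f$, in fact infinitely often. To prove it I would set $N = \{x \in A : f^n(x) \notin A \text{ for all } n \geq 1\}$, the set of points of $A$ that never return, and show that the preimages $\{f^{-n}(N)\}_{n \geq 0}$ are pairwise disjoint: if $f^m(z)$ and $f^n(z)$ both lay in $N \subseteq A$ with $m < n$, then applying $f^{n-m}$ to the point $f^m(z) \in N$ would land back in $A$, contradicting the definition of $N$. Since $f$ is measure-preserving, each preimage has measure $\mu(N)$, and disjointness inside a space of finite total measure forces $\sum_n \mu(f^{-n}(N)) < \infty$, hence $\mu(N) = 0$.

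To upgrade ``returns at least once'' to ``returns infinitely often,'' I would observe that a point of $A$ with only finitely many returns has a last return, so some forward iterate of it lies in $N$; therefore the set of such points is contained in $\bigcup_{m \geq 0} f^{-m}(N)$, a countable union of null sets, and is itself null. Applying this with $A = O$ yields a full-measure subset of $O$ whose points $x$ satisfy $\phi^n(x) = f^n(x) \in O$ for infinitely many positive integers $n$; each such $x$ generates a flow orbit that meets $O$ at the times $t = n$, hence infinitely often, and since $\mu(O) > 0$ such $x$ exist (in abundance).

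The main obstacle I anticipate is not the combinatorial disjointness step, which is routine, but justifying the finite-measure framework from the stated hypotheses: the theorem only assumes bounded orbits, so I must argue that boundedness confines the relevant trajectories to a region of finite volume and that the volume measure has full support, so that open sets acquire positive mass. In the paper's setting this reduction is automatic, since the replicator state space is the compact product of simplices, which has finite volume and on which every nonempty relatively open set carries positive measure; I would flag this reduction explicitly rather than leave it implicit, as it is what licenses the passage from ``positive-measure set'' in the lemma to ``open set'' in the theorem.
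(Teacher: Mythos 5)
The paper does not actually prove this statement: it is quoted verbatim as a classical background result in the Supplementary Material, with citations to Poincar\'e and to Barreira's treatment, so there is no in-paper argument to compare against. Your proof is the standard textbook argument (time-one map, disjointness of the preimages of the never-returning set, finiteness of the total measure, and the countable-union upgrade to infinitely many returns), and it is correct in substance; in particular the disjointness step and the ``last return'' argument are exactly right, and you correctly observe that in the paper's application the state space is a compact product of simplices, so the finite-measure framework is immediate.

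The one step worth tightening is the reduction from ``all orbits are bounded'' to ``the dynamics live in an invariant set of finite volume.'' Boundedness of each individual orbit does not bound the union of all orbits, so you cannot simply say the dynamics are confined to one bounded region. The clean fix is a countable exhaustion: let $X_n$ be the set of points whose entire orbit stays in the closed ball of radius $n$; each $X_n$ is closed (hence measurable), invariant, and of finite volume, and $O=\bigcup_n (O\cap X_n)$, so some $O\cap X_n$ has positive measure and your recurrence lemma applies to $f$ restricted to $X_n$ with $A=O\cap X_n$. You flag this issue as the main obstacle, which is the right instinct, but the exhaustion argument is what actually closes it in the generality of the stated hypotheses rather than only in the compact replicator setting.
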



\subsubsection{ Poincar\'{e}-Bendixson theorem}

A periodic orbit is called a limit cycle if it is the  $\omega$-limit set of some point not on the periodic orbit.
The Poincar\'{e}-Bendixson theorem allows us to prove the existence
of limit cycles in two dimensional systems.
The main idea is to find  a trapping region, \textit{i.e.},  a region from which trajectories cannot escape. 
If a trajectory enters and does not leave such a closed and bounded region of the state space that contains no equilibria then this trajectory must approach a periodic orbit as time goes to infinity.
Formally, we have:

\begin{theorem}
\citep{bendixson1901courbes,teschl2012ordinary}
Given a differentiable real dynamical system defined on an open subset of the plane, then every non-empty compact $\omega$-limit set of an orbit, which contains only finitely many fixed points, is either
a fixed point, a periodic orbit, or a connected set composed of a finite number of fixed points together with homoclinic and heteroclinic orbits connecting these.
\end{theorem}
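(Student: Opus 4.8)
The plan is to follow the classical transversal-based argument, specialized to the plane, where the Jordan curve theorem supplies the rigidity that makes the theorem true. First I would record the standard structural facts about $\omega$-limit sets: for an orbit whose forward trajectory lies in a compact set, $\omega_\phi(x)=\bigcap_t \overline{\bigcup_{\tau\geq t}\phi(\tau,x)}$ is non-empty, closed, invariant under the flow, and connected. These properties are dimension-independent and follow directly from the definition together with compactness, so they require no planar input.

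Next I would introduce the planar-specific tool of a \emph{local transversal}: a small arc $\Sigma$ through a non-equilibrium point along which the vector field is nowhere tangent, so that nearby trajectories all cross $\Sigma$ in one consistent direction. The key lemma, and the technical heart of the whole theorem, is the \textbf{monotonicity of crossings}: if a trajectory meets $\Sigma$ at successive times in points $p_1,p_2,p_3,\dots$, then these points are monotonically ordered along $\Sigma$. I would establish this via the Jordan curve theorem: the orbit segment from $p_i$ to $p_{i+1}$ together with the sub-arc of $\Sigma$ joining them bounds a closed region that the flow can thereafter only enter or only leave, trapping all future crossings on one side. This is precisely the step where two-dimensionality is indispensable and where the main difficulty lies.

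From the monotonicity lemma I would derive the crucial corollary that an $\omega$-limit set meets any transversal in at most one point, since otherwise the monotone crossing sequence would be forced to accumulate at two distinct points of $\Sigma$. With this corollary the case analysis proceeds. Suppose $\omega_\phi(x)$ contains no fixed point; choose $y\in\omega_\phi(x)$, so that by invariance and closedness both the orbit of $y$ and its own limit set $\omega_\phi(y)$ lie inside $\omega_\phi(x)$, and $\omega_\phi(y)$ contains a non-equilibrium $z$. Taking a transversal $\Sigma$ at $z$, the orbit of $y$ crosses $\Sigma$ infinitely often and arbitrarily near $z$; but $\omega_\phi(x)$ meets $\Sigma$ only once, forcing all those crossings to coincide and hence the orbit of $y$ to be periodic. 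A further transversal-plus-connectedness argument then shows $\omega_\phi(x)$ equals exactly this single periodic orbit.

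Finally, to obtain the full statement, I would treat the case where $\omega_\phi(x)$ contains finitely many fixed points yet is neither a single equilibrium nor a periodic orbit. For any non-equilibrium $y\in\omega_\phi(x)$, the at-most-one-crossing corollary forces both the forward limit $\omega_\phi(y)$ and the backward limit $\alpha_\phi(y)$ to be single fixed points; thus every such orbit is a homoclinic or heteroclinic connection, and $\omega_\phi(x)$ is precisely a connected union of the finitely many fixed points together with these connecting orbits. Throughout, the sole genuine obstacle is the monotonicity lemma; once the Jordan-curve rigidity of planar flows is in place, the remaining deductions are essentially topological bookkeeping.
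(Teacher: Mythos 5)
The paper does not prove this statement at all: it is quoted verbatim as classical background (the generalized Poincar\'e--Bendixson theorem), with the proof deferred to the cited sources \citep{bendixson1901courbes,teschl2012ordinary}. So there is no in-paper argument to compare against; I can only assess your sketch on its own terms, and it is the standard, correct route --- essentially the proof in Teschl's book. You correctly isolate the monotonicity of successive transversal crossings (via the Jordan curve theorem) as the genuinely two-dimensional step, derive the ``at most one intersection with a transversal'' corollary, and run the usual case analysis.

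One step is compressed more than it should be. In the final case you assert that the one-crossing corollary directly forces $\omega_\phi(y)$ and $\alpha_\phi(y)$ to be single fixed points for a regular point $y\in\omega_\phi(x)$. The actual chain is: first show $\omega_\phi(y)$ contains no regular point --- otherwise the transversal argument makes the orbit of $y$ periodic, and a separate connectedness argument then forces $\omega_\phi(x)$ to coincide with that periodic orbit, contradicting the standing assumption that $\omega_\phi(x)$ is not a periodic orbit; only then do you conclude $\omega_\phi(y)$ consists solely of fixed points, and it is the \emph{finiteness} of the fixed-point set together with connectedness of limit sets that collapses it to a singleton. This is where the hypothesis ``only finitely many fixed points'' enters, and your sketch never says so explicitly; without it the conclusion fails (limit sets can be connected continua of equilibria). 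With that step spelled out, the outline is complete.
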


\subsection*{Homeomorphisms and conjugacy of flows}


A function $f$ between two topological spaces is called a \textit{homeomorphism} if it has the following properties:
$f$ is a bijection, $f$ is continuous, and $f$ has a continuous inverse. 
 A function $f$ between two topological spaces is called a \textit{diffeomorphism} if it has the following properties:
$f$ is a bijection, $f$ is continuously  differentiable, and $f$ has a continuously differentiable inverse. 
 %
  Two flows $\Phi^t : A \rightarrow A$ and $\Psi^t : B \rightarrow B$ are conjugate if there exists a homeomorphism $g : A \rightarrow B$ such that for each $x \in A$ and $t \in \real$:
$g(\Phi^t (x)) = \Psi^t (g(x)).$
%
%
Furthermore, two flows $\Phi^t:A\rightarrow A$ and $\Psi^t :B\rightarrow B$ are \textit{diffeomorhpic} if there exists a diffeomorphism $g : A \rightarrow B$ such that for each $x \in A$ and $t \in \real$
$g(\Phi^t (x)) = \Psi^t (g(x))$. 
 If two flows are diffeomorphic then their vector fields are related by the derivative of the conjugacy. 
 That is, we get precisely the  same result that we would have obtained if we simply transformed the coordinates in their differential equations~\citep{Meiss2007}.

\subsection*{Stability of sets of states}

Let $A\subset X$ be a closed set. We define a set $O\subset X$ a neighborhood of $A$ if it is open relative to $X$ and contains $A$. We say that $A$ is (Lyapunov) stable if for every neighborhood $O$ of $A$ there exists a neighborhood $O'$ of $A$ such that every trajectory that starts in $O'$ is contained in $O$, i.e., of $x(0)\in O'$ then $x(t)\in O$ for all $t\geq0$.  Set $A$ is attracting if there exists a neighborhood $S$ of $A$ such that every trajectory starting in $S$ converges to $A$. A set is called asymptotically stable if it is both Lyapunov stable and attracting. 

\begin{definition}[Sink chain recurrent points]
    Chain recurrent points that belong to a sink chain component are called sink chain recurrent points.
\end{definition}

\begin{definition}[Lyapunov stable set]
    Let $\phi$ be a flow on a metric space $(X,d)$. A set $A \subset X$ is Lyapunov stable if for every neighborhood $O$ of $A$ there exists a neighborhood $O'$ of $A$ such that every trajectory that starts in $O'$ is contained in $O$; i.e., if $x\in O'$ then $\phi(t,x)\in O$ for all $t\geq0$. 
\end{definition}

\begin{definition}[Attracting set]
    Set $A$ is attracting if there exists a neighborhood $S$ of $A$ such that every trajectory starting in $S$ converges to $A$.
\end{definition}

\begin{definition}[Asymptotically stable set]
    A set is called asymptotically stable if it is both Lyapunov stable and attracting. 
\end{definition}

\subsubsection{Multi-population Replicator Dynamics}\label{sec:rd_multipop}
For a $K$-player NFG, one may use a set of $K$ populations with $S^k$ denoting the finite set of pure strategies available to each agent in population $k \in \{1, \ldots, K\}$.
The mass of players in a given population $k$ that use strategy $i \in S^k$ is denoted $x^k_i$, where $\sum_{i \in S^k} x^k_i = 1$. 
Let $S$ denote the set of all populations' pure strategy profiles, and $x$ represent the joint population state.
Let the payoff matrix for a given population $k$ be denoted $M^k: S \rightarrow \mathbb{R}$.
The fitness of an agent in population $k$ playing pure strategy $i$ given state $x$ is then,
\begin{align}
    f^k_i(x) = \sum_{s^{-k} \in S^{-k}}M^{k}(i,s^{-k})\prod_{c \neq k}x^c_{s^{c}}.
\end{align}
Namely, the fitness is the expected payoff the agent playing strategy $i$ receives given every competitor population's state $x^c$.
The $k$-th population's average fitness given state $x$ is then,
\begin{align}
    \bar{f}^k(x) = \sum_i f^k_i(x) x^k_i,
\end{align}
with the corresponding $K$-population replicator dynamics,
\begin{align}
    \dot{x}^k_i = x^k_i(f^k_i(x) - \bar{f}^k(x)) \qquad \forall k \in \{1,\ldots,K\} \quad \forall i \in s^k.
\end{align}

\subsection{Single-population discrete-time model}\label{sec:supsingle}

We have a set of strategies (or agents under evaluation) $S=\{s_1,..,s_n\}$, with $|S|=n$, which we would like to evaluate for their evolutionary strength. We also have a population of individuals $A=\{a_1,..,a_m\}$, with $|A|=m$, that are programmed to play a strategy from  the set $S$. Individuals interact pairwise through empirical games.




We start from a finite well-mixed population of $m$ individuals, in which $p$ individuals are playing $\focalAgent$. 
At each timestep $t$ we randomly choose two individuals $\focalAgent$ and $\residentAgent$, with respective strategies $s_{\focalAgent}$ and $s_{\residentAgent}$.
The strategy of individual $\focalAgent$ is then updated by either probabilistically copying the strategy $s_{\residentAgent}$ of individual $\residentAgent$ it is interacting with, mutating with a very small probability into another strategy, or sticking with its own strategy $s_{\focalAgent}$.
The idea is that strong individuals will replicate and spread throughout the population.
The probability with which individual $\focalAgent$ (playing $s_{\focalAgent}$) will copy strategy $s_{\residentAgent}$ from individual $\residentAgent$ can be described by a \emph{selection} function $\mathbb{P}(\focalAgent \rightarrow \residentAgent)$, which governs the dynamics of the finite-population model.

Individual $\focalAgent$ will thus copy the behavior of individual $\residentAgent$ with probability $p_{\focalAgent \rightarrow \residentAgent}$ and stick to its own strategy with probability $1-\mathbb{P}(\focalAgent \rightarrow \residentAgent)$.
We denote the probability for a strategy to mutate randomly into another strategy $s \in S$ by $\mu$ and we will assume it to be infinitesimally small, (i.e., we consider a small-mutation limit $\mu \rightarrow 0$). 
If we neglected mutations, the end state of this evolutionary process would be monomorphic. 
If we introduce a very small mutation rate this means that either the mutant fixates and takes over the current population, or the current population is capable of wiping out the mutant strategy \cite{fudenberg2006imitation}.
Therefore, given a small mutation rate, the mutant either fixates or disappears before a new mutant appears. 
This means that the population will never contain more than two strategies at any point in time.

We now proceed as follows. At any moment in time when two strategies ($s_{\focalAgent}$ and $s_s$) are present in the population, we can calculate the fitness of an individual $\focalAgent$ playing strategy $s_{\focalAgent}$ in a population of $p$ individuals playing $s_{\focalAgent}$ and $m-p$ individuals playing $s_s$. 
Fitnesses may be calculated using either knowledge of the global population state (i.e., where every individual is aware of the number of other individuals playing each strategy, which may be a strong assumption) or local knowledge (i.e., only the current opponent's strategy) \cite{traulsen2005coevolutionary}. 
The corresponding fitness for the local  case, which we focus on here, is $f(\focalAgent,\residentAgent) = M_{\focalAgent,\residentAgent}$,
where $M_{\focalAgent,\residentAgent}$ is obtained from the meta-game payoff matrix. 
Analogously, the simultaneous payoff of an individual $\residentAgent$ playing $s_{\residentAgent}$ against $s_{\focalAgent}$ is $f(\residentAgent,\focalAgent) = M_{\residentAgent,\focalAgent}$,
For the remainder of the paper, we focus on the logistic selection function (aka Fermi distribution),
\begin{equation}\label{eq:fermi_distr}
    \mathbb{P}(\focalAgent \rightarrow \residentAgent) = \frac{e^{\alpha f(\residentAgent,\focalAgent)}}{e^{\alpha f(\focalAgent,\residentAgent)}+e^{\alpha f(\residentAgent,\focalAgent)}}=(1 + e^{\alpha(f(\focalAgent,\residentAgent)-f(\residentAgent,\focalAgent))})^{-1},
\end{equation}
with $\alpha$ determining the selection intensity. 
While the subsequent empirical methodology extends to general selection functions, the choice of Fermi selection function enables closed-form characterization of certain properties of the discrete-time model.

Based on this setup, we define a Markov chain over the set of strategies $S$ with $n$ states. Each state represents a monomorphic population end-state, corresponding to one of the strategies $s_{\focalAgent}$ with $\focalAgent \in \{1,..,n\}$. 
The transitions between these states are defined by the corresponding fixation probabilities when a mutant strategy is introduced in a monomorphic population. 
The stationary distribution over this Markov chain will tell us how much time on average the dynamics will spend in each of the monomorphic states.


Considering our set $S$ of $n$ strategies, we define the Markov chain with $n^2$ transition probabilities over the monomorphic states.
Let $\eta=\frac{1}{n-1}$ and denote by $\rho_{\residentAgent,\focalAgent}$ the probability of mutant strategy $s_{\focalAgent}$ fixating (taking over) in a resident population of individuals playing $s_{\residentAgent}$. 
So $\eta\rho_{\residentAgent,\focalAgent}$ is the probability that a population which finds itself in state $s_{\residentAgent}$ will end up in state $s_{\focalAgent}$ after the occurrence of a single mutation. 
This yields the following Markov transition matrix,
\begin{equation}
    C = \left(\begin{matrix}
        1-\eta(\rho_{1,2}+\rho_{1,3}+...+\rho_{1,n}) & \eta\rho_{1,2} & ... & \eta\rho_{1,n}\\
        \eta\rho_{2,1} & 1-\eta(\rho_{2,1}+\rho_{2,3}+...+\rho_{2,n}) & ... & \eta\rho_{2,n}\\
        ... & ... & ... & ...\\
        \eta\rho_{n,1} & ... & ... & 1-\eta(\rho_{n,1}+\rho_{n,2}+...+\rho_{n,n-1})
        \end{matrix}\right)
\end{equation}

The fixation probabilities $\rho_{\residentAgent,\focalAgent}$ can be calculated as follows. 
Assume we have a population of $p$ individuals playing $s_{\focalAgent}$ and $m-p$ individuals playing $s_{\residentAgent}$. The probability that the number of type $s_{\focalAgent}$ individuals decreases/increases by one is given by,
\begin{equation}
    T^{(\mp 1)}(p,\focalAgent,\residentAgent)= \frac{p(m-p)}{m(m-1)}\left(1+e^{\pm \alpha(f(\focalAgent,\residentAgent)- f(\residentAgent,\focalAgent))}\right)^{-1}.
\end{equation}
Now we can compute the fixation probability $\rho_{\residentAgent,\focalAgent}$ of a mutant with strategy $s_{\focalAgent}$ in a population of $m-1$ individuals programmed to playing $s_{\residentAgent}$ as follows, 
\begin{align}
\rho_{\residentAgent,\focalAgent} &=  \left(1+\sum_{l=1}^{m-1}\prod_{p=1}^l \frac{T^{(-1)}(p,\focalAgent,\residentAgent)}{T^{(+1)}(p,\focalAgent,\residentAgent)}\right)^{-1}\\   
&=\left(1+\sum_{l=1}^{m-1}\prod_{p=1}^l e^{-\alpha(f(\focalAgent,\residentAgent)- f(\residentAgent,\focalAgent))} \right)^{-1}\label{eq:fixation_prob}
\end{align}
This corresponds to the computation of an $m$-step transition in the Markov chain \cite{TaylKarl98}.
The quotient $\frac{T^{(-1)}(p,\focalAgent,\residentAgent)}{T^{(+1)}(p,\focalAgent,\residentAgent)}$ expresses the likelihood (odds) that the mutation process continues in either direction: 
if it is close to zero then it is very likely that the number of mutants $s_{\focalAgent}$ increases;
if it is very large it is very likely that the number of mutants will decrease;
and if it close to one then the probabilities of increase and decrease of the number of mutants are equally likely.

\begin{property}\label{property:unique_pi_singlepop}
    Given finite payoffs, fixation probabilities $\rho_{\residentAgent,\focalAgent}$ under the Fermi imitative protocol \eqref{eq:fermi_distr} are positive for all $\residentAgent$ and $\focalAgent$; 
    i.e., any single mutation can cause a transition from any state to another. 
    Markov chain $C$ is, therefore, irreducible, and a unique stationary distribution $\pi$ (where $\pi^TC= \pi^T$ and $\sum_i \pi_i = 1$) exists.
\end{property}
This unique $\pi$ provides the evolutionary ranking, or strength of each strategy in the set $S$, expressed as the time the population spends in each state in distribution $\pi$.
This single population model has been widely studied (see, e.g., \cite{Nowak793,Traulsen06a,Traulsen06b,Segbroek12,Santos11}), both theoretically and empirically, but is limited to both pairwise interactions and symmetric games.


\subsection{Proofs}\label{sec:proofs}

\subsubsection{Proof of \cref{property:unique_pi}}\label{sec:proof_unique_pi}
\UniquePi*
\begin{proof}
    Consider any two states (i.e., strategy profiles) $s_i$ and $s_j$ of the $K$-population Markov chain with transition matrix \cref{eqn:transition_matrix_C_multipop}. 
    Under finite payoffs $f^k(\focalAgent,p)$ and $f^k(\residentAgent,p)$, fixation probabilities $\rho^k_{\residentAgent,\focalAgent}$ under the Fermi imitative protocol \eqref{eq:fermi_distr} are positive.
    Let $R(s,k,s_x)$ denote the operation of replacing the $k$-th strategy in a strategy profile $s$ with a different pure strategy $s_x$.
    Thus, state $s_j$ is accessible from any state $s_i$ (namely, consider the chain $\{s_0=s_i,s_1=R(s_0,1,s_j^1),s_2=R(s_1,2,s_j^2)\ldots,s_{K-1}=R(s_{K-2},K-1,s_j^{K-1}),s_{K}=R(s_{K-1},K,s_j^{K})=s_j\}$ connecting strategies $s_i$ to $s_j$ with non-zero probability).
    The Markov chain is, therefore, irreducible and a unique stationary distribution exists.
\end{proof}

\subsubsection{Proof for \cref{thm:discrete_cont_edge_dyn}}\label{sec:proof_discrete_cont_edge_dyn}
\DiscreteContEdgeDyn*
\begin{proof}
    To simplify notation, we prove the theorem for the single-population case without loss of generality.
    Let $x_{i}(t)$ represent the fraction of individuals in the population that are playing strategy $s_i$ at timestep $t$.
    Rather than consider the underlying stochastic evolutionary equations directly, we consider the \emph{mean dynamics}.
    {An alternative proof path for the single population case is presented in \cite{traulsen2005coevolutionary} and may be applied here as well.}
    The mean dynamics constitute a deterministic process governing the expected evolution of state $x_{i}(t)$, and provide a close approximation of the underlying system over finite time spans under a large-population limit \cite[Chapters 4 and 10]{Sandholm10}.
    For a general finite population game, the mean dynamics correspond to the difference of the expected influx and outflux of individuals playing a strategy $i$ against individuals playing any strategy $j \in S$ given the underlying selection function $\mathbb{P}(i \rightarrow j)(x)$,
    \begin{align}
        \dot{x}_{i}(t) &= \sum_{j \in S}x_j x_i \mathbb{P}(j \rightarrow i)(x) - x_i\sum_{j \in S}x_j \mathbb{P}(i \rightarrow j)(x).
    \end{align}
    Under the low-mutation rate assumption, the finite population model considers only the transitions between pairs of monomorphic states $s_{\focalAgent}$ and $s_\residentAgent$, where $x_{\focalAgent} + x_{\residentAgent} = 1$. This yields simplified mean dynamics, 
    \begin{align}
        \dot{x}_{\focalAgent} &= x_{\residentAgent}x_{\focalAgent}\mathbb{P}(\residentAgent \rightarrow \focalAgent) - x_{\focalAgent}x_{\residentAgent}\mathbb{P}(\focalAgent \rightarrow \residentAgent)(x_\focalAgent)\\
        &= (1-x_{\focalAgent})x_{\focalAgent} \left[\mathbb{P}(\residentAgent \rightarrow \focalAgent) - \mathbb{P}(\focalAgent \rightarrow \residentAgent) \right]\label{eq:discrete_infpop_dyn_branch1}\\
        &= x_{\focalAgent}\left[\mathbb{P}(\residentAgent \rightarrow \focalAgent) - \left(x_{\focalAgent}\mathbb{P}(\residentAgent \rightarrow \focalAgent) + (1-x_{\focalAgent})\mathbb{P}(\focalAgent \rightarrow \residentAgent)\right) \right]\\
        &= x_{\focalAgent}\left[\mathbb{P}(\residentAgent \rightarrow \focalAgent) - \bar{p} \right]\label{eq:discrete_infpop_dyn}
    \end{align}
    where, 
    \begin{align}
        \mathbb{P}(\focalAgent \rightarrow \residentAgent) &= (1+e^{\alpha(f(\focalAgent,\residentAgent)-f(\residentAgent,\focalAgent))})^{-1}\\
        \mathbb{P}(\residentAgent \rightarrow \focalAgent) &= (1+e^{-\alpha(f(\focalAgent,\residentAgent)-f(\residentAgent,\focalAgent))})^{-1}\\
        \bar{p} &= x_{\focalAgent}\mathbb{P}(\residentAgent \rightarrow \focalAgent) + (1-x_{\focalAgent})\mathbb{P}(\focalAgent \rightarrow \residentAgent)
    \end{align}
    We, therefore, observe that the discrete large-population mean dynamics \eqref{eq:discrete_infpop_dyn} correspond to the replicator equations (with the caveat that Fermi revision protocol takes the place of the usual fitness terms).
    
    Moreover, one can branch off after \cref{eq:discrete_infpop_dyn_branch1} to yield,
    \begin{align}    
        \dot{x}_{\focalAgent} &= (1-x_{\focalAgent})x_{\focalAgent} \left[\mathbb{P}(\residentAgent \rightarrow \focalAgent) - \mathbb{P}(\focalAgent \rightarrow \residentAgent) \right]\\
        &= x_{\focalAgent}(1-x_{\focalAgent}) \left[\left(1+e^{- \alpha(f(\focalAgent,\residentAgent)- f(\residentAgent,\focalAgent))}\right)^{-1} - \left(1+e^{\alpha(f(\focalAgent,\residentAgent)- f(\residentAgent,\focalAgent))}\right)^{-1} \right]\\
        &= x_{\focalAgent}(1-x_{\focalAgent})\tanh{\frac{\alpha(f(\focalAgent,\residentAgent)- f(\residentAgent,\focalAgent))}{2}}
    \end{align}
    which matches the It\^{o} calculus based derivation of \cite{Traulsen06a} under the large-population limit.
\end{proof}

\subsubsection{Proof for \cref{thm:partial_order_chain_components}}\label{sec:proof_partial_order_chain_components}
We start by introducing the notion of chain transitivity, which will be useful in the proof of the theorem.
\begin{definition}[Chain transitive]
    Let $\phi$ be a flow on a metric space $(X,d)$. A set $A \subset X$ is
    chain transitive with respect to $\phi$ if for any $x,y \in A$ and any $\epsilon>0$ and $T>0$ there exists an $(\epsilon,T)$-chain from $x$ to $y$.
\end{definition}

Next we state the following properties of chain components,
\begin{property}[\cite{alongi2007recurrence}]\label{thm:CCs}
    Every chain component of a flow on a compact metric space is closed, connected, and invariant with respect of the flow. Moreover, 
    \begin{itemize}
        \item Every chain component of a flow on a metric space is chain transitive with respect to the flow.
        \item Every chain transitive set with respect to a flow on a metric space is a subset of a unique chain component of the flow.
        \item If $A$ and $B$ are chain transitive with respect to a flow on a metric space, $A \subset B$ and $C$ is the unique chain component containing $A$, then $B \subset C$.
    \end{itemize}
\end{property}

\PartialOrderChainComponents*

\begin{proof}
We will show that the binary relation $\leq_C$ is reflective, antisymmetric and transitive.
\begin{itemize}
\item $A_1 \leq_C A_1$. Since any chain component is chain transitive then we have that for any $x,y \in A_1$: $x\sim y$.
\item If $A_1 \leq_C A_2$ and $A_2 \leq_C A_1$ then $A_1=A_2$. By chain transitivity of $A_1$, $A_2$ we have that for any $x,x' \in A_1$, $x\sim x'$ and for any $y,y' \in A_2$, $y\sim y'$. Hence if $x\sim y$ then $x'\sim y'$ for any $x'\in A_1$ and any $y' \in A_2$. Hence, $A_1 \cup A_2$ is a chain transitive set and thus by Theorem \ref{thm:CCs} must be a subset of a unique chain component of the flow $C$ such that $A_1 \cup A_2\subset C$. However, we assumed that $A_1, A_2$ are chain components themselves. Thus, $A_1=A_1 \cup A_2=A_2$.
\item  If $A_1 \leq_C A_2$ and $A_2 \leq_C A_3$ then $A_1 \leq_C A_3$. If there exist $x \in A_1$ and $y \in A_2$ such that $x \in \Omega^{+}(\phi,y)$, as well as $y' \in A_2$ and $z \in A_3$ such that $y' \in \Omega^{+}(\phi,z)$ then by chain transitivity of $A_2$ we have that $y \in \Omega^{+}(\phi,y')$ and thus $x \in \Omega^{+}(\phi,z)$, implying $A_1 \leq_C A_3$.
\end{itemize}
\end{proof}

\subsubsection{Proof for \cref{thm:MCC}}\label{sec:proof_MCC}

We first present several results necessary for the proof.
\begin{lemma}
\label{lem:CR1}
A chain recurrent (CR) point $x$ is a sink CR point if and only if for any CR point $y$ if $y\in \Omega^+(\phi,x)$ then $x\in \Omega^+(\phi,y)$, i.e., the two points are chain equivalent.
\end{lemma}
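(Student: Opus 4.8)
The plan is to prove both implications directly, leaning on two facts: that every chain recurrent point lies in a unique chain component (so that the partial order $\leq_C$ of \cref{thm:partial_order_chain_components} is available), and that the relation ``$y \in \Omega^{+}(\phi,x)$'' is transitive. The latter is the key technical ingredient and follows from the observation that concatenating an $(\epsilon,T)$-chain from $x$ to $z$ with one from $z$ to $w$ yields an $(\epsilon,T)$-chain from $x$ to $w$; hence if $z \in \Omega^{+}(\phi,x)$ and $w \in \Omega^{+}(\phi,z)$ then $w \in \Omega^{+}(\phi,x)$. I would record this transitivity as a preliminary remark, since it is used in both directions.

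For the forward implication, suppose $x$ is a sink CR point, lying in the sink chain component $A$, and let $y$ be any CR point with $y \in \Omega^{+}(\phi,x)$. Since $y$ is chain recurrent it belongs to some chain component $B$. Unpacking the definition of $\leq_C$ with the witnesses $x \in A$ and $y \in B$ (together with $y \in \Omega^{+}(\phi,x)$) gives $B \leq_C A$. Because $A$ is a sink, i.e.\ minimal in $\leq_C$, no distinct component lies below it, forcing $B = A$. Then $x$ and $y$ lie in the same chain component and are therefore chain equivalent, so in particular $x \in \Omega^{+}(\phi,y)$, as required.

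For the converse, I would argue by contradiction. Suppose the stated property holds for $x$ but the chain component $A$ containing $x$ is not a sink; then there is a chain component $B \neq A$ with $B \leq_C A$, witnessed by points $x' \in A$ and $y' \in B$ with $y' \in \Omega^{+}(\phi,x')$. Since $x, x' \in A$ are chain equivalent we have $x' \in \Omega^{+}(\phi,x)$, and combining this with $y' \in \Omega^{+}(\phi,x')$ via the transitivity remark gives $y' \in \Omega^{+}(\phi,x)$. Now $y'$ is a CR point lying in $\Omega^{+}(\phi,x)$, so the hypothesis yields $x \in \Omega^{+}(\phi,y')$; together with $y' \in \Omega^{+}(\phi,x)$ this makes $x$ and $y'$ chain equivalent, placing $y'$ in $A$. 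This contradicts $y' \in B$ with $B \neq A$ (chain components being disjoint), so $A$ must be a sink and $x$ is a sink CR point.

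The main obstacle is really just the bookkeeping around the definition of $\leq_C$ --- making sure the witnesses are assigned to the correct component and that the direction of the inequality matches \cref{thm:partial_order_chain_components} --- together with cleanly justifying the transitivity of $\Omega^{+}$ from chain concatenation. No delicate topology (compactness, the Lyapunov function, etc.) is needed here; the argument is purely a consequence of the chain-equivalence structure and the minimality defining a sink.
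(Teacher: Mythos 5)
Your proof is correct and follows essentially the same route as the paper's: both directions reduce to the interplay between the definition of $\leq_C$ and the minimality defining a sink chain component. If anything, your version is slightly more careful than the paper's, since in the converse direction you explicitly bridge the gap between the witness $x'\in A$ supplied by the definition of $B\leq_C A$ and the specific point $x$, via chain transitivity of the component together with the concatenation/transitivity property of $\Omega^{+}$ --- a step the paper's proof leaves implicit.
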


\begin{proof}
We will argue the forward direction by contradiction.
Suppose not. That is, suppose that $x$ is a sink CR point and there exists a CR point $y$ such that $y\in \Omega^+(\phi,x)$ and $x\notin \Omega^+(\phi,y)$, then if $C_x, C_y$ are the equivalence classes/chain components of $x, y$ respectively we have that 
$C_y \leq_C C_x$ and $C_y, C_x$ are clearly distinct chain components since $x\notin \Omega^+(\phi,y)$. Thus, $C_x$ is not a sink chain component and $x$ is not a sink chain recurrent point, contradiction.

For the reverse direction,  once again by contradiction we have that for any CR point $y$ with $y\in \Omega^+(\phi,x)$, $x\in \Omega^+(\phi,y)$ and $x$ is a non-sink CR point. Then there exists another chain component $A$ with $A\leq_C C_x$ where
$C_x$ is the equivalence class/chain component of $x$. Hence, there exists $y \in A$ such that $y\in \Omega^+(\phi,x)$. Since $y$  is a  CR point which does not belong to $C_x$, we have $x\notin \Omega^+(\phi,y)$, contradiction.
\end{proof}

\begin{lemma}
\label{lem:reachable}
If a sink chain component contains a single vertex $s_i$ then it contains any vertex $s_j$ which is reachable from $s_i$ via (weakly)-better response moves. Specifically, it contains an MCC. 
\end{lemma}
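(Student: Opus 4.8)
The plan is to exploit the fact that pure strategy profiles are rest points of the replicator flow, so that each weakly-better-response edge of the response graph induces a chain connection between the two vertices it joins, and then to invoke \cref{lem:CR1} to pull every reachable vertex into the same sink chain component.

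First I would record that every pure strategy profile $s \in \prod_k S^k$ is a rest point of the replicator dynamics (each coordinate of the velocity field vanishes at a vertex of the simplex), so $\phi^t(s)=s$ for all $t$ and in particular each vertex is a chain recurrent point. Next, given a weakly-better-response edge from $s_i$ to $s_j$ — where $s_i$ and $s_j$ differ only in the strategy of a single player $k$ — I would restrict $\phi$ to the one-dimensional invariant edge $e$ of the simplex joining the two vertices (all players but $k$ frozen at their common pure strategy, player $k$ mixing between the two relevant pure strategies). On $e$ the replicator equation reduces to a two-strategy dynamic governed entirely by the sign of the payoff difference $M^k(s_j)-M^k(s_i)$.

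The key step is to establish $s_j \in \Omega^+(\phi,s_i)$ in both admissible cases. If $s_j$ is a \emph{strictly} better response, the payoff difference is positive and the flow on $e$ carries every interior point to $s_j$; an $(\epsilon,T)$-chain is obtained by perturbing $s_i$ slightly into the interior of $e$ (an admissible $\epsilon$-jump, since $s_i$ is fixed) and then following the flow arbitrarily close to $s_j$, which works for every $\epsilon,T>0$. If $s_j$ is a \emph{tie} (weakly but not strictly better), the payoff difference is zero, so by the myopic behavior of the dynamics on two equal-payoff options (the property invoked after \cref{thm:MCC}) the whole edge $e$ is a line of fixed points; an $(\epsilon,T)$-chain is then built by a finite sequence of $\epsilon$-jumps marching along $e$ from $s_i$ toward $s_j$, each jump admissible because the flow leaves points of $e$ fixed and the edge has finite length. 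Either way $s_j \in \Omega^+(\phi,s_i)$. Since $s_i$ lies in a sink chain component it is a sink chain recurrent point, and $s_j$ is itself chain recurrent, so \cref{lem:CR1} yields that $s_i$ and $s_j$ are chain equivalent; hence $s_j$ belongs to the same sink chain component as $s_i$.

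Finally I would iterate this along paths in the response graph: by induction on path length, every vertex reachable from $s_i$ through weakly-better-response moves lies in the sink chain component. Because the response graph is finite, following improving edges out of $s_i$ must eventually enter a sink strongly connected component $W$ (one with no outgoing edges); since $W$ is strongly connected and reachable from $s_i$, all of its vertices are reachable from $s_i$ and therefore contained in the sink chain component. By \cref{def:mcc} the irreducible Markov chain on $W$ is precisely an MCC, so the sink chain component contains an MCC. I expect the main obstacle to be the tie case: one must justify carefully that equal-payoff edges are genuine lines of rest points for the dynamics under consideration, and that an $(\epsilon,T)$-chain may traverse such a continuum of fixed points for \emph{arbitrarily} small $\epsilon$ and \emph{arbitrarily} large $T$ — the step-size bound $T$ being harmless here precisely because every point of $e$ is stationary.
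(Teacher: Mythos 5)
Your proof is correct and follows essentially the same route as the paper's: vertices are rest points (hence chain recurrent), strict edges are traversed by an $\epsilon$-perturbation followed by the flow, tie edges are traversed by finitely many $\epsilon$-hops along a line of fixed points, and \cref{lem:CR1} then pulls each reachable vertex into the sink chain component, which must therefore contain a sink strongly connected component of the response graph, i.e.\ an MCC. Your write-up is simply a more detailed and careful version of the paper's argument, particularly in the treatment of the equal-payoff edges.
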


\begin{proof}
Any state/vertex $s_j$ is a chain recurrent (CR) point because it is a fixed point of the replicator dynamics. If $s_j$ is reachable by $s_i$ via a weakly-better response path, then $s_j \in \Omega^+(\phi,s_i)$ for the replicator flow. In the case of edges that are strictly improving it suffices to use the $\epsilon$ correction to introduce the improving strategy and replicator will converge to the better outcome. In the case of edges between outcomes of equal payoff all convex combinations of these strategies are fixed points for the replicator and we can traverse this edge with $\lceil1/\epsilon\rceil$ hops of size $\epsilon$.

 But if $s_j \in \Omega^+(\phi,s_i)$ and $s_i$  is a sink CR point (since it belongs to a CR component) then by Lemma \ref{lem:CR1}   $s_i \in \Omega^+(\phi,s_j)$. Therefore, state/vertex $s_j$ also belongs to the same sink chain component. The set of reachable vertices includes a strongly connected component with no outgoing edges and thus a \mcc.

\end{proof}

\TheoremMCC*
\begin{proof}
    Since solutions in the neighborhood of an asymptotically stable set, all approach the set, volume is contracted in this neighborhood, however, replicator dynamics is volume preserving in the interior of the state space \citep{Sandholm10,Soda14,PiliourasAAMAS2014}; the formal argument works by transforming the system induces by the replicator dynamics over the interior of the state space into a conjugate dynamical system that is divergence-free. 
    Any asymptotically stable set cannot lie in the interior of the simplex, i.e., it cannot consist only of fully mixed strategies. Hence, there must exist some product of subsimplices with a non-empty intersection with this set. The intersection of the original asymptotically stable set with this subspace is still asymptotically stable for this invariant subspace and thus we can continue the argument inductively. 
    {The intersection of the attracting neighborhood with this subspace is an attracting neighborhood for the dynamics on this invariant subspace.}
    We deduce that any asymptotically stable chain component must contain at least one vertex of the simplex (pure strategy profile). Let $s_i$ be this vertex. By Lemma \ref{lem:reachable}, this sink chain component must also include all other vertices reachable from $s_i$ via weakly-better replies. Specifically, it must include at least one \mcc. 
    Finally, a \mcc\ is a chain transitive set for the replicator flow via the same argument of the $\epsilon$ hops as in Lemma \ref{lem:reachable}. By Theorem \ref{thm:CCs}, it is a subset of a unique chain component of the flow.
\end{proof}

\subsubsection{Proof for \cref{thm:inf_pop_alpha}}\label{sec:proof_inf_pop_alpha}
\InfinitePopAlpha*
\begin{proof}
    Recall from the MCC definition that the probability of strictly improving responses for all players are set equal to each other, and transitions between strategies of equal payoff happen with a smaller probability also equal to each other for all players.
    Let the ratio of the two probabilities be denoted $\epsilon$ for all players.
    The transition probabilities of the Markov chain of the macro-model when taking taking the limit of $\alpha\rightarrow \infty$ are equal to the transitions probabilities of the Markov chain of the Markov-Conley chains when setting $\epsilon$ equal to $\frac{1}{m}$, where $m$ is the size of the population in the macro-model. Let $A_{s_i}(k)$ be the number of strictly improving moves for player $k$ in state/vertex $s_i$. Similarly, let $B_{s_i}(k)$ be the number of deviating moves for player $k$ in state/vertex $s_i$ that do not affect her payoff.
    It suffices to set the probability of a node $s_i$ self-transitioning equal to $1-\frac{\sum_k A_{s_i}(k) + \epsilon \sum_k B_{s_i}(k) }{\sum_k (|S^k|-1)}$.
\end{proof}


\end{document}